\newcommand{\C}{\mathbb{C}}
\newcommand{\N}{\mathbb{N}}
\newcommand{\R}{\mathbb{R}}
  \DeclareMathOperator{\aut}{Aut}
  \newcommand{\st}{:\,} 
  \newcommand{\eps}{\epsilon}
  \DeclareMathOperator{\Tr}{\mathsf{Tr}}
  \newcommand{\id}{\mathrm{Id}}
  \newcommand{\beq}{\begin{equation}}
  \newcommand{\eeq}{\end{equation}}
  \newcommand{\beqn}{\begin{equation*}}
  \newcommand{\eeqn}{\end{equation*}}
  \newcommand{\beqr}{\begin{eqnarray}}
  \newcommand{\eeqr}{\end{eqnarray}}
  \newcommand{\beqrn}{\begin{eqnarray*}}
  \newcommand{\eeqrn}{\end{eqnarray*}}
  \newcommand{\bmline}{\begin{multline}}
  \newcommand{\emline}{\end{multline}}
  \newcommand{\bmlinen}{\begin{multline*}}
  \newcommand{\emlinen}{\end{multline*}}
\newtheorem{theorem}{Theorem}[section]
\newtheorem{lemma}[theorem]{Lemma}
\newtheorem{proposition}[theorem]{Proposition}
\newtheorem{definition}[theorem]{Definition}
\DeclareMathOperator*{\E}{\mathbb{E}}
\newcommand{\norm}[1]{\Vert #1 \Vert}
\renewcommand{\ell}{L}
\title{Quasirandom quantum channels}
\author{Tom Bannink}
\affiliation{CWI, QuSoft, Science Park 123, 1098 XG Amsterdam, Netherlands. Supported by the Gravitation-grant NETWORKS-024.002.003 from the Dutch Research Council~(NWO).}
\email{tombannink@gmail.com}
\author{Jop Bri\"et}
\affiliation{CWI, QuSoft, Science Park 123, 1098 XG Amsterdam, Netherlands. Supported by the Gravitation-grant NETWORKS-024.002.003 from the Dutch Research Council~(NWO). Additionally supported by an NWO VENI grant}
\email{j.briet@cwi.nl}
\author{Farrokh Labib}
\affiliation{CWI, QuSoft, Science Park 123, 1098 XG Amsterdam, Netherlands. Supported by the Gravitation-grant NETWORKS-024.002.003 from the Dutch Research Council~(NWO).}
\email{labib@cwi.nl}
\author{Hans Maassen}
\affiliation{QuSoft, Korteweg-de Vries Institute for Mathematics, Radboud University.}
\email{H.Maassen@math.ru.nl}
\begin{document}
\maketitle

\begin{abstract}
    Mixing (or quasirandom) properties of the natural transition matrix associated to a graph can be quantified by its distance to the complete graph. 
Different mixing properties correspond to different norms to measure this distance.
For dense graphs, two such properties known as spectral expansion and uniformity were shown to be equivalent in seminal 1989 work of Chung, Graham and Wilson.
Recently, Conlon and Zhao extended this equivalence to the case of sparse vertex transitive graphs using the famous Grothendieck inequality.

Here we generalize these results to the non-commutative, or `quantum', case, where a transition matrix becomes a quantum channel. 
In particular, we show that for irreducibly covariant quantum channels, expansion is equivalent to a natural analog of uniformity for graphs, generalizing the result of Conlon and Zhao.
Moreover, we show that in these results, the non-commutative and commutative (resp.) Grothendieck inequalities yield the best-possible constants.
%
\end{abstract}

\section{Introduction} \label{sec:intro}

In a seminal work~\cite{Chung1989}, Chung, Graham and Wilson --- building on work of Thomason~\cite{Thomason:1987, Thomason:1987b} --- proved that several seemingly distinct notions of quasirandomness for graphs are equivalent.
In particular, they identified seven properties found in random graphs with high probability, that always coexist simultaneously in any large dense graph. 
Two of these properties are \emph{spectral expansion} and \emph{uniformity} (defined below). 
A question of Chung and Graham~\cite{Chung:2002} on the equivalence of these two properties in \emph{sparse} graphs resulted in a line of research culminating in recent work of Conlon and Zhao~\cite{Conlon2017}, which introduced a surprising new item to the armory of combinatorics: the famous Grothendieck inequality~\cite{Grothendieck53}.
In this paper, we draw a parallel line in the context of quantum information theory, where quantum channels
take the place of graphs. 
In addition, we give a streamlined proof of the main result of~\cite{Conlon2017} and show that the use of Grothendieck's inequality yields an optimal constant.
Similarly, we show that the non-commutative Grothendieck inequality gives an optimal constant in the quantum setting.

\paragraph{Spectral expansion and uniformity.}
Spectral expansion is a linear-algebraic property given in terms of the transition matrix of a graph.
This transition matrix is the normalized adjacency matrix, which for a $d$-regular graph $G = (V,E)$ is given by $A_{uv} = e(\{u\},\{v\})/d$, where $e(S,T)$ denotes the number of edges connecting subsets $S,T\subseteq V$. 
We say that the graph~$G$ is an $(n,d,\lambda)$ graph if $|V| = n$, it is $d$-regular and all but the largest eigenvalue of~$A$, which is always~1, have modulus at most~$\lambda$.
The smallest value of~$\lambda$ for which this holds is denoted by~$\lambda(G)$.
Spectral expansion then refers to the property that~$\lambda(G)$ is much smaller than~1, in which case~$G$ is referred to as a (spectral) expander.
Expanders have many important applications in mathematics and computer science (we refer to~\cite{Hoory:2006} for an extensive survey).
One such application is in randomized algorithms, 
which can exploit the fact that a random walk on an expander rapidly mixes (i.e.\ quickly converges to its limit distribution) to significantly reduce the amount of randomness needed. 

Uniformity is a combinatorial property of the configuration of the edges.
An $n$-vertex $d$-regular graph~$G = (V,E)$ is $\epsilon$-uniform if for all $S,T\subseteq V$, 
\begin{align} \label{eq:defuniform}
\Big\vert e(S,T) - \frac{d}{n} |S| \, |T| \Big\vert \leq \epsilon d n
\end{align}
and~$\epsilon(G)$ denotes the smallest value of~$\epsilon$ for which this holds.
Uniformity then refers to the property that this parameter is much smaller than~1; trivially any graph is 1-uniform.
Intuitively, this says that for any two vertex subsets, the number of edges between those sets is close to the expected number of edges in a random graph with the same edge density.

A basic result known as the Expander Mixing Lemma~\cite{Hoory:2006} shows that for any regular graph $G$ we have $\epsilon(G) \leq \lambda(G)$,
which is to say that spectral expansion implies uniformity.
A sequence $G_n$ of $d_n$-regular graphs is called \textit{dense} if $d_n \geq \Omega(n)$, and \textit{sparse} if $d_n / n \longrightarrow 0$.
It was shown in~\cite{Chung1989} that in the dense case, a converse to the Expander Mixing Lemma $\epsilon(G_n) \leq o(1) \Rightarrow \lambda(G_n) \leq o(1)$
also holds.
In contrast, Krivelevich and Sudakov~\cite{Krivelevich2006} showed that this is false for sparse graphs, thereby answering the question posed in~\cite{Chung:2002}.
Their counterexample is not regular, however (and a later one from~\cite{Bollobas2004} is not connected).
But in~\cite{Conlon2017} it was shown that even regular sparse graphs (where $d_n \leq o(n)$) can simultaneously satisfy $\epsilon(G_n) \leq o(1)$ and $\lambda(G_n) \geq \Omega(1)$.
Surprisingly, Kohayakawa, R\"{o}dl, and Schacht~\cite{Kohayakawa2016} showed that Cayley graphs over abelian groups, including sparse ones, do again admit such a converse.
Cayley graphs are an important class of regular graphs that include for instance the famous Ramanujan graphs of Margulis~\cite{Margulis:1988} and Lubotzky, Phillips and Sarnak~\cite{Lubotzky:1988}.
Conlon and Zhao~\cite{Conlon2017} generalized this to all Cayley graphs and showed that this implies the same for all vertex-transitive graphs in general, for which they showed that $\lambda(G) \leq 4K_G \epsilon(G)$, where $1.6769\ldots \leq K_G < 1.7822\dots$ is the famous \emph{Grothendieck constant}, whose exact value is currently unknown; the bounds shown here are the best known and were shown by Davie and Reeds (independently) in~\cite{Davie:1984, Reeds:1991} and Braverman et al.\ in~\cite{Braverman:2013}, respectively.
Spectral expansion and uniformity are thus equivalent notions of quasirandomness for dense graphs and vertex-transitive graphs.


\paragraph{Quasirandomness in quantum information theory.}
A transition matrix, such as the normalized adjacency matrix of a graph, maps probability vectors\footnote{We use the convention of writing probability vectors as \emph{column} vectors intead of row vectors.} to probability vectors.
A natural non-commutative generalization of a transition matrix is a \emph{quantum channel}, a completely positive trace preserving linear map~$\Phi : M_n(\C) \to M_n(\C)$; see Section~\ref{sec:prelims} for formal definitions. Quantum channels are the most general operations on quantum systems that are physically realizable.
They encapsulate the ``classical'' transition matrices by restricting them to diagonal matrices whose diagonals form probability vectors; we discuss this in more detail in Section~\ref{sec:cz}.
In quantum information theory, general linear maps from $M_n(\C)$ to itself are referred to as \emph{superoperators}.
Since superoperators are in one-to-one correspondence with bilinear forms on~$M_n(\C)\times M_n(\C)$, they also appear in the context of (generalizations of)  Bell inequalities from physics in the form of quantum XOR games~\cite{Regev15, Cooney:2015}, as well as in combinatorial optimization~\cite{NRV:2013}.
The graph-theoretic concepts mentioned above have natural analogues for superoperators, which we discuss next.

 In independent work, Hastings~\cite{Hastings2007} and Ben-Aroya, Schwartz and Ta-Schma~\cite{Aroya2010} introduced \emph{quantum expanders} as a special class of quantum channels defined analogously to spectral expanders.
 For a superoperator~$\Phi$, the expansion parameter is given by
 \beq\label{def:super_exp}
 \lambda(\Phi) = \|\Phi - \Pi\|_{S_2 \to S_2} = \sup\big\{\|(\Phi - \Pi)(X)\|_{S_2} \st \|X\|_{S_2} \leq 1\big\},
 \eeq
 where $\Pi:X\mapsto \frac{1}{n}\mathrm{Tr}(X)\id$ is the projection onto the identity,
 $\|X\|_{S_2} = \sqrt{\langle X,X\rangle}$ is the Frobenius (or Schatten-2) norm and $\langle X,Y\rangle = \frac{1}{n}\mathrm{Tr}(Y^*X)$ is the normalized trace inner product.
 A quantum channel is an expander if~$\lambda(\Phi)$ is much smaller than~1.
Also quantum expanders found many applications, one of which is again randomness reduction, where randomness takes on the form of random unitary matrices.
Since a $k$-qubit unitary requires $4^k$ real parameters, sampling one from the uniform distribution (Haar probability measure) is very expensive. 
A 1-design is a fixed collection of unitaries $U_1,\dots, U_m$ such that the superoperator $\Phi:X\mapsto \frac{1}{m} \sum_{i=1}^m U_i X U_i^*$ exactly effects the projection~$\Pi$, thus mimicking in a finite way the Haar measure on $U(n)$.
Quantum expanders can be used to construct \textit{approximate} 1-designs, meaning that~$\Phi(X)$ and~$\Pi(X)$ are close in trace distance\footnote{The trace distance is the distance induced by the Schatten-1 norm, defined in \Cref{sec:prelims}.} instead of precisely equal. Another application is in cryptography where Ambainis and Smith~\cite{Ambainis04} used quantum expanders to construct short quantum one-time pads.
It was shown in~\cite{Hastings2007} that truly random quantum channels (given by independent Haar-uniform~$U_i$ as described above) are quantum expanders with high probability, supporting the idea that this is a notion of quasirandomness.

In this work we introduce a natural notion of uniformity for superoperators, informally given by how well they mimic the action of~$\Pi$ on projectors on subspaces, which may be thought of as generalizations of vertex subsets in graphs.
This is similar to Hasting's notion of edge expansion for quantum channels~\cite{Hastings2007}.
In particular, we say that~$\Phi$ is $\eps$-uniform if for any two subspaces~$V,W\subseteq \C^n$ with associated projections~$P_V,P_W$, it holds that 
\beq\label{def:super_uniformity}
|\langle P_V,(\Phi - \Pi)(P_W)\rangle| \leq \eps .
\eeq 
Let $\epsilon(\Phi)$ denote the smallest~$\epsilon$ for which this holds.
As we show in Section~\ref{apx:embedding}, the parameters~$\lambda(\Phi)$ and~$\eps(\Phi)$ reduce to their graphical analogs under a suitable embedding of graphs into quantum channels.

Finally, also symmetry, which in the graph-theoretic context takes the form of vertex transitivity, is an important property of quantum channels.
In particular, \emph{irreducibly covariant} quantum channels, which turn out to generalize vertex-transitive graphs (see Section~\ref{sec:cz}), play an important role in questions about the capacity of quantum channels as noisy transmitters of quantum information~\cite{Holevo2006}.
A now famous result of Hastings~\cite{Hastings2009_super} shows that the minimum output capacity in general does not have the intuitively natural property of being sub-additive under tensor products.
However, it was shown earlier by Holevo~\cite{Holevo:2002}, that the capacity is additive for the subclass of irreducibly covariant quantum channels.

\paragraph{Summary of our results.} 
In this work we make a first step in the study of the equivalence of quasirandom properties for quantum channels, or superoperators in general, and show optimality in the case of vertex-transitive graphs and covariant quantum channels.

\begin{itemize}
    \item (\Cref{sec:ncconverse}) Our main result shows that under irreducible covariance, expansion and uniformity are equivalent for superoperators.
In particular, while a simple analogue of the classical Expander Mixing Lemma implies that $\epsilon(\Phi) \leq \lambda(\Phi)$ in general, we show using a non-commutative version of Grothendieck's inequality due to Haagerup~\cite{Haagerup1985}, that for this class of superoperators, also $\lambda(\Phi) \leq 2\pi^2\epsilon(\Phi)$ always holds. This implies the same result for vertex-transitive graphs with $\C$-weighted edges, essentially proved in~\cite{Conlon2017} with the factor~2 replaced by the \emph{complex} Grothendieck constant $1.3380\ldots \leq K_G^\C \leq 1.4049\dots$.

\item (\Cref{apx:embedding}) We show that a construction of sparse regular graphs from~\cite{Conlon2017} can be embedded to give a sequence of quantum channels~$\Phi_n$ that are not irreducibly covariant and for which it holds that $\epsilon(\Phi_n) \leq o(1)$ and $\lambda(\Phi_n) \geq \Omega(1)$.

\item (\Cref{sec:dense}) We show that for \emph{randomizing} channels, a notion introduced in~\cite{Aubrun2009}, the two notions of quasirandomness are also equivalent.
This can be interpreted as a generalization of the same statement for dense graphs proved in~\cite{Chung1989}.

\item (\Cref{sec:constantc}) We show that the result of~\cite{Conlon2017} cannot be improved in the sense that the factors $4K_G$  and $\pi^2K_G^\C$ are optimal in the case of vertex-transitive graphs with $\R$-weighted and $\C$-weighted edges, respectively.

\item (\Cref{sec:constantnc}) Our work leaves open whether the factor $2\pi^2$ in our main result is optimal. However, our proof consists of two steps, the first of which gives a factor~2 and the second a factor~$\pi^2$, and we show these steps are individually optimal.
    We prove that the first step is optimal by showing that an example of Haagerup and Ito~\cite{Haagerup1995} for the non-commutative Grothendieck inequality is irreducibly covariant, which uses some representation theory of~$\mathrm{SO}(n)$.
    The optimality of the second step follows directly from a result of~\cite{Conlon2017}.
\end{itemize}


\paragraph{Acknowledgements}
We would like to thank M\={a}ris Ozols, Michael Walter and Freek Witteveen for fruitful discussions.

\section{Preliminaries}\label{sec:prelims}

Write $[n] = \{1,\dots,n\}$.
For a finite set~$S$, write $\E_{s\in S}$ for $\frac{1}{|S|} \sum_{s\in S}$. 
For a compact set~$S$, write~$C(S)$ for the set of continuous functions from $S$ to $\C$.
For a compact group $\Gamma$, write $\E_{g \in \Gamma}$ for the the integral with respect to the (unique) Haar probability measure on~$\Gamma$.

Write $M_n(\C)$ for the set of complex $n\times n$ matrices
and let $U(n) = \{X\in M_n(\C) \st X^*X = \id\}$ be the set of unitary matrices.
Here, all maps of the form $\Phi : M_n(\C) \to M_n(\C)$ are linear, and we refer to these as superoperators. 
A superoperator~$\Phi$ is \emph{unital} if $\Phi(\id)=\id$ and it is \emph{completely positive} if for all $k\in\N$ the superoperator $\id\otimes\Phi:M_k\otimes M_n\rightarrow M_k\otimes M_n$ maps positive semidefinite matrices to positive semidefinite matrices.
Completely positive superoperators that are trace preserving are called \emph{quantum channels}.

We normalize inner products so that for $x,y\in\C^n$ we define $\langle y,x\rangle = \E_{i\in[n]} \overline{y_i} x_i$ and for matrices $X,Y\in M_n(\C)$ we have $\langle Y,X\rangle = \frac{1}{n} \mathrm{Tr}[Y^* X]$.

\paragraph{Norms.}
For $p\in [1,\infty)$,  $x\in \C^n$ and $X \in M_n(\C)$, the $\ell_p$ norm and (normalized) Schatten-$p$ norm are defined by
\begin{align*}
    \norm{x}_{\ell_p} = \Big( \E_{i\in[n]} |x_i|^{p} \Big)^{1/p}
    \qquad \text{and} \qquad
    \norm{X}_{S_p} = \Big( \frac{1}{n} \mathrm{Tr}\big[ (X^* X)^{p/2} \big] \Big)^{1/p}
\end{align*}
and $\|x\|_{\ell_\infty} = \max_i|x_i|$ and $\|X\|_{S_\infty} = \sup\{|\langle Xx,y\rangle| \st \|x\|_{\ell_2},\|y\|_{\ell_2}\leq 1\}$.
Note that for the identity matrix $\id\in M_n$ we have $\norm{\id}_{S_p}=1$ for all $p\in [1,\infty]$.

\begin{proposition} \label{prop:diagonalnorm}
    Let $p \geq 1$ and let $X\in M_n(\C)$. Then $\norm{X}_{S_p} \geq \norm{(X_{11},\dots,X_{nn})}_{L_p}$.
\end{proposition}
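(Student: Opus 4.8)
The plan is to unwind the two normalized norms and reduce the statement to the purely matrix-theoretic inequality
\[
\sum_{i=1}^n |X_{ii}|^p \;\le\; \sum_{j=1}^n \sigma_j^p ,
\]
where $\sigma_1,\dots,\sigma_n\ge 0$ are the singular values of $X$. Indeed $\norm{X}_{S_p}^p = \frac1n\Tr\big[(X^*X)^{p/2}\big] = \frac1n\sum_j \sigma_j^p$ and $\norm{(X_{11},\dots,X_{nn})}_{L_p}^p = \frac1n\sum_i |X_{ii}|^p$, so the displayed inequality is precisely the $p$-th power of the claim (the factors $\frac1n$ cancel, and both sides are nonnegative so taking $p$-th roots is harmless). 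I will carry out the argument for $p\in[1,\infty)$; the case $p=\infty$ then follows by letting $p\to\infty$, since with the present normalization $\norm{\cdot}_{S_p}\to\norm{\cdot}_{S_\infty}$ and $\norm{\cdot}_{L_p}\to\norm{\cdot}_{L_\infty}$ (alternatively one checks $p=\infty$ directly from $|X_{ii}|=|\langle e_i,Xe_i\rangle|\le \sigma_1$).

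For the reduced inequality I would start from a singular value decomposition $X = U\,\diag(\sigma_1,\dots,\sigma_n)\,V^*$ with $U,V$ unitary. This gives $X_{ii} = \sum_{j} U_{ij}\,\sigma_j\,\overline{V_{ij}}$, hence by the triangle inequality
\[
|X_{ii}| \;\le\; \sum_{j=1}^n \sigma_j\, P_{ij}, \qquad P_{ij} := |U_{ij}|\,|V_{ij}| .
\]
The matrix $P=(P_{ij})$ has nonnegative entries, and applying the Cauchy--Schwarz inequality to the unit-norm rows and columns of $U$ and $V$ shows that every row sum and every column sum of $P$ is at most $1$; that is, $P$ is doubly substochastic.

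The final step is a convexity argument. Fix $i$, append one extra coordinate by setting $P_{i,n+1} := 1-\sum_{j=1}^n P_{ij}\ge 0$ and $\sigma_{n+1} := 0$, so that $\sum_{j=1}^{n+1} P_{ij}=1$. Since $t\mapsto t^p$ is convex on $[0,\infty)$ for $p\ge1$, Jensen's inequality yields
\[
|X_{ii}|^p \;\le\; \Big(\sum_{j=1}^{n+1} P_{ij}\,\sigma_j\Big)^{p} \;\le\; \sum_{j=1}^{n+1} P_{ij}\,\sigma_j^{p} \;=\; \sum_{j=1}^{n} P_{ij}\,\sigma_j^{p},
\]
using $0^p=0$. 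Summing over $i$ and using that the column sums of $P$ are at most $1$ (and $\sigma_j^p\ge0$) gives $\sum_i |X_{ii}|^p \le \sum_j\big(\sum_i P_{ij}\big)\sigma_j^p \le \sum_j \sigma_j^p$, which is the reduced inequality.

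The only mildly delicate point is that $P$ is merely substochastic rather than doubly stochastic (so one cannot quote Birkhoff--von Neumann verbatim); the zero-singular-value padding above is exactly what removes this wrinkle. Equivalently, one could invoke the classical fact that the vector of moduli of the diagonal entries of $X$ is weakly majorized by its vector of singular values and then apply convexity of $t\mapsto t^p$, but the self-contained version is short enough to spell out.
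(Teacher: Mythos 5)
Your proof is correct, but it takes a genuinely different route from the paper's. The paper observes that
\[
\E_{s\in\{\pm 1\}^n}\Diag(s)\, X\,\Diag(s)=\diag(X),
\]
i.e.\ the diagonal part of $X$ is an average of unitary conjugates of $X$, and then applies the triangle inequality together with the unitary invariance of Schatten norms to get $\norm{\diag(X)}_{S_p}\le \norm{X}_{S_p}$ in one line. You instead pass through the singular value decomposition, show that the matrix $P_{ij}=|U_{ij}|\,|V_{ij}|$ is doubly substochastic, pad it to a stochastic matrix with a zero singular value, and close with Jensen's inequality. Your argument in effect establishes the classical weak majorization $(|X_{11}|,\dots,|X_{nn}|)\prec_w(\sigma_1,\dots,\sigma_n)$, which is stronger information than the proposition itself requires and would give the inequality for any convex increasing function, not just $t\mapsto t^p$; the price is invoking SVD and a majorization-style bookkeeping step. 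The paper's averaging trick is shorter and avoids SVD entirely, but it gives exactly the stated inequality and nothing more. Both are sound; the paper's version is the more economical for the purpose at hand, while yours is the more structurally informative.
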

\begin{proof}
For a vector $x\in\C^n$, denote by $\mathrm{Diag}(x)$ the $n\times n$ matrix with $x$ on the diagonal and for a matrix $X$ denote by $\mathrm{diag}(X)$ the matrix where we set the off-diagonal elements to 0.
A small computation shows that
\begin{align*}
 \E_{s\in\{\pm 1\}^n}\mathrm{Diag}(s)\, X\,\mathrm{Diag}(s)=\mathrm{diag}(X).
\end{align*}
Since the Schatten-$p$ norms are invariant under conjugation with a unitary matrix, applying the above with the triangle inequality gives 
\begin{align*}
    \norm{(X_{11},\dots,X_{nn})}_{L_p}=\norm{\mathrm{diag}(X)}_{S_p}
    \leq \E_{s\in\{\pm 1\}^n}\norm{\mathrm{Diag}(s)\, X\, \mathrm{Diag}(s)}_{S_p}=\norm{X}_{S_p}. &\qedhere
\end{align*}
\end{proof}

For $q\in [1,\infty]$, define $q'\in [1,\infty]$ to be its dual given by $\frac{1}{q} + \frac{1}{q'} = 1$.
For $p,q\in [1,\infty]$, a matrix $A\in M_n(\C)$ and a superoperator $\Phi : M_n(\C) \to M_n(\C)$, define 
\begin{align*}
\norm{A}_{\ell_p\to\ell_q} &= \sup\{|\langle y, Ax\rangle| \st \norm{x}_{\ell_p}\leq 1,\, \norm{y}_{\ell_{q'}}\leq 1\}\\
\norm{\Phi}_{S_p\to S_q} &= \sup\{|\langle Y, \Phi(X)\rangle| \st \norm{X}_{S_p}\leq 1,\, \norm{Y}_{S_{q'}}\leq 1\}.
\end{align*}
 If $G$ is a $d$-regular graph on $n$ vertices with normalized adjacency matrix~$A$, then $\lambda(G) = \|A - \frac{1}{n}J\|_{L_2\to L_2}$, where~$J$ is the all-ones matrix. Also recall from \eqref{def:super_exp} that for a superoperator $\Phi$ the expansion parameter is $\lambda(\Phi) = \|\Phi - \Pi\|_{S_2 \to S_2}$.

Also define the \emph{cut norms}  by
\begin{align*}
        \norm{A}_\mathrm{cut} &= \max \{ |\langle y, A x \rangle| \st x,y \in \{0,1\}^n \}\\
        \norm{\Phi}_\mathrm{cut} &= \sup \{ |\langle Y, \Phi(X) \rangle| \st X,Y \text{ projectors} \} .
    \end{align*}
It is then not hard to see that if $G$ is a $d$-regular graph on $n$ vertices with normalized adjacency matrix~$A$, then $\epsilon(G) = \|A - \frac{1}{n}J\|_{\mathrm{cut}}$.
Similarly, we have $\epsilon(\Phi) = \norm{\Phi -\Pi}_{\mathrm{cut}}$.

We have the following relation between these norms, the proof of which is a simple generalization of the same result from~\cite{Conlon2017} for matrices.

\begin{restatable}{lemma}{cutinftyone} \label{lemma:cutinftyone}
    For any superoperator $\Phi$, we have  $\norm{\Phi}_\mathrm{cut} \leq \norm{\Phi}_{S_\infty\to S_1} \leq \pi^2 \norm{\Phi}_\mathrm{cut}$ and $\pi^2$ is the best possible constant.
\end{restatable}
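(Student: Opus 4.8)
The plan is to establish the two inequalities separately and then deduce optimality of $\pi^2$ from the commutative case treated in~\cite{Conlon2017}.

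For the first inequality $\|\Phi\|_{\mathrm{cut}}\le\|\Phi\|_{S_\infty\to S_1}$ I would simply observe that every orthogonal projector $P$ satisfies $\|P\|_{S_\infty}\le 1$, and that the index dual to $1$ is $\infty$, so the set of pairs $(X,Y)$ of projectors over which $\|\Phi\|_{\mathrm{cut}}$ is a supremum is contained in the set $\{(X,Y):\|X\|_{S_\infty}\le 1,\ \|Y\|_{S_\infty}\le 1\}$ defining $\|\Phi\|_{S_\infty\to S_1}$. This direction is immediate.

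For the reverse inequality, fix $X,Y$ with $\|X\|_{S_\infty}\le 1$ and $\|Y\|_{S_\infty}\le 1$. Since any contraction in $M_n(\C)$ is an average of two unitaries (via its singular value decomposition, writing each singular value $\sigma=\cos\beta=\tfrac12(e^{i\beta}+e^{-i\beta})$), while $X\mapsto\langle Y,\Phi(X)\rangle$ is linear and $Y\mapsto\langle Y,\Phi(X)\rangle$ is conjugate-linear, the triangle inequality lets me assume $X=U$ and $Y=V$ are unitary. The heart of the argument is a decomposition of a unitary into a continuum of projectors mimicking the scalar identity $z=\pi\,\E_{\theta\in[0,2\pi)}\,e^{i\theta}\,\mathbf 1[\mathrm{Re}(e^{-i\theta}z)\ge 0]$ for $z\in\T$, which follows from the elementary computation $\int_0^{2\pi}e^{i\theta}\,\mathbf 1[\mathrm{Re}(e^{-i\theta}z)\ge 0]\,\mathrm d\theta=2z$. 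Writing $U=\sum_j e^{i\alpha_j}P_j$ in its spectral decomposition and setting $U_\theta=\sum_{j:\,\mathrm{Re}(e^{-i\theta}e^{i\alpha_j})\ge 0}P_j$ --- an orthogonal projector, being a sum of pairwise orthogonal spectral projectors --- and applying the scalar identity eigenvalue by eigenvalue yields $U=\pi\,\E_\theta\,e^{i\theta}U_\theta$, and likewise $V=\pi\,\E_\psi\,e^{i\psi}V_\psi$. Substituting and using $V_\psi^*=V_\psi$ gives $\langle V,\Phi(U)\rangle=\pi^2\,\E_{\theta,\psi}\,e^{i\theta}e^{-i\psi}\,\langle V_\psi,\Phi(U_\theta)\rangle$; since each phase has modulus $1$ and $|\langle V_\psi,\Phi(U_\theta)\rangle|\le\|\Phi\|_{\mathrm{cut}}$ for every $\theta,\psi$, the triangle inequality gives $|\langle V,\Phi(U)\rangle|\le\pi^2\|\Phi\|_{\mathrm{cut}}$, as wanted.

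For optimality I would reduce to the commutative statement. Conlon and Zhao~\cite{Conlon2017} exhibit complex matrices $A$ with $\|A\|_{\ell_\infty\to\ell_1}/\|A\|_{\mathrm{cut}}$ arbitrarily close to $\pi^2$; under the embedding of $\C$-weighted graphs into quantum channels described in \Cref{apx:embedding}, these give superoperators $\Phi$ whose cut norm and $S_\infty\to S_1$ norm equal (after the same normalization on both sides) the corresponding matrix norms, so the factor $\pi^2$ cannot be lowered.

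I expect the bookkeeping in the second step to be the only real obstacle: verifying that $U_\theta$ is genuinely an orthogonal projector, pinning down the constant $\pi$ exactly in the scalar circle identity, and handling the conjugate-linearity in $Y$ so that the unimodular phases $e^{i\theta}e^{-i\psi}$ drop out under the modulus. The reduction to unitaries and the transfer of the lower bound from the matrix case are routine given \Cref{apx:embedding}.
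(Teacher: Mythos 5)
Your proof is correct and follows essentially the same route as the paper's: reduce to unitaries by convexity, decompose a unitary $U$ via the scalar circle identity $z=\pi\,\E_\theta\,e^{i\theta}\mathbf 1[\Re(e^{-i\theta}z)\ge 0]$ into an average of spectral projectors $U_\theta$, and then push through the modulus with the cut norm; optimality is imported from Conlon--Zhao via the embedding of \Cref{apx:embedding}. The only cosmetic differences are that you derive the easy inequality directly from $\|P\|_{S_\infty}\le 1$ instead of passing through the PSD relaxation of the cut norm, and you phrase the unitary decomposition via spectral projections rather than diagonalization $X=UAU^*$; both are equivalent to what the paper does.
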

\begin{proof}
First note that the cut norm as defined above can also be written as
\begin{align}
    \label{eq:nccutnormrelaxed}
    \norm{\Phi}_\mathrm{cut} = \sup \{ |\langle Y, \Phi(X) \rangle| \st X,Y \succeq 0 \:,\; \norm{X}_{S_\infty} , \norm{Y}_{S_\infty} \leq 1 \} ,
\end{align}
    because the set $\{ X \st X \succeq 0 , \norm{X}_{S_\infty} \leq 1\}$ is the convex hull of the set of projectors. Hence, by linearity the supremum in~\eqref{eq:nccutnormrelaxed} will always be attained by projectors.

    The first inequality of the lemma follows by dropping the positive semidefinite constraint. For the second inequality, let $z$ be a complex number of norm 1, and $w$ a uniform random complex number of norm $1$. Then
    \begin{align*}
        z = \pi\; \mathbb{E}_w [ w \; 1_{\{\Re(z\bar{w})\geq 0\}} \; ].
    \end{align*}
    Note that $\E_w [ f(w) ] = \frac{1}{2\pi} \int_{0}^{2\pi} f(e^{i\theta}) d\theta$, hence the equality follows by using $\int_{-\pi/2}^{\pi/2} \cos(\theta) d\theta = 2$.
    We have $\norm{\Phi}_{S_\infty\to S_1} = \sup \{ |\langle Y, \Phi(X)\rangle| \st \norm{X}_{S_\infty} , \norm{Y}_{S_\infty} \leq 1 \}$.
    The set of matrices $X$ such that $\norm{X}_{S_\infty}\leq 1$ is the convex hull of the set of unitary matrices, so by linearity we can assume that the supremum in $\norm{\Phi}_{S_\infty\to S_1}$ is obtained by unitary $X,Y$. Unitary matrices are diagonalizable, so write $X = U A U^*$ and $Y = V B V^*$ with $U,V$ unitary and $A,B$ diagonal. Let $u,w\in\C$, $|u|=|w|=1$ be uniform random complex numbers and define diagonal matrices $A',B'$ as $A'_{ii}(w) = 1_{\{\Re(A_{ii} \bar{w})\geq 0\}}$ and $B'_{ii}(u) = 1_{\{\Re(B_{ii} \bar{u})\geq 0\}}$. By the above we have $A = \pi\; \mathbb{E}_w [ w A'(w)]$ and similar for $B$, so we have $X = \pi\; \mathbb{E}_w[ w UA'(w)U^*]$ and $Y=\pi\;\mathbb{E}_u[ u VB'(u)V^*]$.
    Now, $UA'(w)U^*$ and $VB'(u)V^*$ are projections for all values of $w$ and $u$, as required in the definition of the cut norm. Therefore
    \begin{align*}
        \norm{\Phi}_{S_\infty\to S_1}
        = |\langle Y, \Phi(X)\rangle|
        &= \pi^2 |\mathbb{E}_{u,w} \bar{u} w \langle VB'(u)V^*, \Phi(UA'(w)U^*) \rangle |\\
        &\leq \pi^2 \mathbb{E}_{u,w} |\langle VB'(u)V^*, \Phi(UA'(w)U^*) \rangle |\\
        &\leq \pi^2 \mathbb{E}_{u,w} \norm{\Phi}_\mathrm{cut} \\
        &=\pi^2\norm{\Phi}_\mathrm{cut} ,
    \end{align*}
    completing the first part of the proof.
    Conlon and Zhao show that $\pi^2$ is the best possible constant in the commutative case, using the matrix $A\in M_n(\C)$ given by $A_{st} = e^{2\pi i (s-t)/n}$. This matrix satisfies $\norm{A}_{\ell_\infty \to \ell_1} = n$ and one can show $\norm{A}_\mathrm{cut} = (\pi^{-2} + o(1))n$. By \Cref{prop:normgeneralization} in \Cref{apx:embedding}, their example can be embedded into a superoperator with the same norms so $\pi^2$ is also the best possible constant here.
\end{proof}

Define the \emph{Grothendieck norm} of of a matrix~$A\in M_n(\C)$ by
\beqn
\norm{A}_G := \sup\Big\{ \Big|\frac{1}{n}\sum_{i,j=1}^n A_{ij}\langle x_i,y_j\rangle\Big| \st d\in \N,\:\: x_i,y_j\in \C^d,\: \|x_i\|_{\ell_2} \leq 1,\, \|y_j\|_{\ell_2} \leq 1 \Big\}.
\eeqn
Then, the \emph{complex Grothendieck constant} is given by
\beqn
K_G^\C := \sup\Big\{\frac{\|A\|_G}{\|A\|_{\ell_\infty\to\ell_1}} \st n\in \N,\: A\in M_n(\C)\Big\}.
\eeqn
The current best upper and lower bounds on~$K_G^\C$ are $1.4049$~\cite{Haagerup:1987} and $1.338$~\cite{Davie:1984}, respectively.
The real version of the Grothendieck constant, denoted by $K_G$ and mentioned in the introduction, is obtained by replacing the underlying field in the above quantities by the reals.





\paragraph{Some basic group theory.}
Given a graph $G = (V,E)$, a permutation $\pi: V\to V$ is an \emph{automorphism} of $G$ if for all $u,v\in V$, we have $\{\pi(u), \pi(v)\}\in E \Leftrightarrow \{u,v\}\in E$.
The automorphisms of $G$ form a group under composition, which we call $\aut(G)$.
Then, $G$ is said to be \emph{vertex transitive} if for every $u,v\in V$, there is a $\pi\in \aut(G)$ such that $\pi(u) = v$.
For superoperators, we have the following analogous definitions.
A unitary representation of a group~$\Gamma$ on~$\C^n$ is a homomorphism from $\Gamma$ to~$U(n)$ and it is irreducible if the only subspaces of~$\C^n$ that are left invariant by the group action are the zero-dimensional subspace and $\C^n$ itself.

\begin{definition}[Irreducible covariance] \label{def:IC}
    A superoperator $\Phi:M_n(\C)\to M_n(\C)$ is \emph{irreducibly covariant} if there exist a compact group~$\Gamma$ and
    continuous irreducible unitary representations $U, V\colon \Gamma\to U(n)$ such that for all $g\in \Gamma$ and $X\in M_n(\C)$, we have 
    \begin{align*}
     \Phi(U(g) X U^*(g))= V(g)\Phi(X)V^*(g). 
    \end{align*}
\end{definition}

%

\section{Converse expander mixing lemmas} \label{sec:cz}

In this section, we prove the ``converse expander mixing lemmas'' announced in the first and third bullet in the introduction as well as the examples announced in the second bullet.
As a warm-up, we start with a proof of the commutative case due to Conlon and Zhao, which we reprove in a slightly different manner analogous to how we will prove the non-commutative case.

\subsection{Commutative case} \label{sec:czalt}
In the following, let $S$ be a compact set and~$\Gamma$ be a compact group acting continuously and transitively on $S$.
The Haar probability measure on~$\Gamma$ induces a measure on~$S$ (by pullback) according to which the $L_p$-norm (for $p\in [1,\infty)$) and inner product of~$f,g\in C(S)$ are given by
\beq\label{def:L2}
\|f\|_{L_p} = \Big(\E_{\pi\in \Gamma}\big|f\big(\pi(s_0)\big)\big|^p\Big)^{\frac{1}{p}}
\qquad
\text{and}
\qquad
\langle f,g\rangle
=
\E_{\pi\in \Gamma}\overline{f\big(\pi(s_0)\big)}g\big(\pi(s_0)\big),
\eeq
where (by transitivity)~$s_0$ can be taken to be some arbitrary but fixed element of~$S$.
We lift the action of~$\Gamma$ on $S$ to an action on $C(S)$ by precomposition, that is, for any function $f\in C(S)$ and  element~$\pi\in \Gamma$, define the function~$f^{\pi}$ by $f^{\pi}(s) := f(\pi(s))$.
Furthermore, for a linear map $A\colon C(S)\to C(S)$ define $A^\pi$ by $A^{\pi}f := (A f^\pi)^{\pi^{-1}}$ and say that~$A$ is transitive covariant with respect to~$\Gamma$ if for any $\pi\in \Gamma$ we have $A^{\pi} = A$.\footnote{In general one says $A$ is \emph{covariant} with respect to~$\Gamma$, but we say \emph{transitive} to emphasize that we require~$\Gamma$ to act transitively on~$S$.} We sometimes omit the group and simply say~$A$ is \emph{transitive covariant} if such a group~$\Gamma$ exists.

In~\cite{Conlon2017}, the following result is proved (over the real numbers) for the case~$S = [n]$, in which case transitive covariant linear maps~$A$ are simply $n\times n$ matrices which commute with the permutation matrices of a transitive subgroup $\Gamma$ of $S_n$.
However, their proof easily implies the more general version below.

\begin{restatable}[Conlon--Zhao]{theorem}{CZ_matrix}\label{thm:CZ_matrix}
    Let~$S$ be as above and let $A:C(S)\to C(S)$ be a linear map that is transitive covariant with respect to~$\Gamma$. Then,
    $$\norm{A}_{L_2\to L_2} \leq K_G^\C\norm{A}_{L_\infty\to L_1}\;.$$
\end{restatable}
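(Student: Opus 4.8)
The plan is to mimic the standard Grothendieck-inequality argument, but using covariance to convert the two ``global'' test functions appearing in the Grothendieck norm into pointwise bounds that reconstruct the $L_2\to L_2$ norm. First I would unwind the norm $\norm{A}_{L_2\to L_2}$ by duality: there exist $f,g\in C(S)$ with $\norm{f}_{L_2},\norm{g}_{L_2}\le 1$ such that $\norm{A}_{L_2\to L_2}=|\langle g, Af\rangle|$. Writing the inner products as Haar expectations over $\Gamma$ via \eqref{def:L2}, the quantity $\langle g, Af\rangle$ becomes an average over $\Gamma$, and I want to recognize this as (a relaxation of) a Grothendieck-type bilinear form applied to $A$, so that the chain $\norm{A}_G \le K_G^\C \norm{A}_{L_\infty\to L_1}$ can be invoked.

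The key device is the following averaging trick, which is exactly where transitive covariance enters. Given the optimal $f,g$ for the $L_2$ norm, I would use the group to ``spread them out'': for each $\pi\in\Gamma$ consider the scalars $f(\pi(s_0))$ and $g(\pi(s_0))$, and build vector-valued assignments $s\mapsto x_s\in C(\Gamma)$ (or $\ell_2(\Gamma)$ after discretization) by something like $x_s(\pi) = f(\pi^{-1}(s))$ and $y_s(\pi)=g(\pi^{-1}(s))$, suitably normalized. Transitivity ensures $\norm{x_s}_{\ell_2}=\norm{f}_{L_2}\le 1$ and $\norm{y_s}_{\ell_2}=\norm{g}_{L_2}\le 1$ uniformly in $s$, so these are legal test vectors for $\norm{A}_G$. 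Covariance of $A$ (i.e. $A^\pi=A$) then guarantees that the inner products $\langle x_s, y_t\rangle$, when fed through $A$, average back up to $\langle g, Af\rangle$ rather than something smaller — concretely, $\frac{1}{n}\sum_{s,t}A_{st}\langle x_s,y_t\rangle$ (or the $C(S)$ analogue with the induced measure) telescopes to $|\langle g,Af\rangle|$ after applying Fubini and the change of variables $\pi\mapsto \pi\sigma$ that the covariance relation $A^\sigma=A$ permits. Hence $\norm{A}_{L_2\to L_2}\le \norm{A}_G$. Combining with the defining inequality $\norm{A}_G\le K_G^\C\norm{A}_{L_\infty\to L_1}$ finishes the argument.

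The main obstacle — and the part that needs care rather than cleverness — is making the vector-valued reformulation rigorous when $S$ is an arbitrary compact set with the pullback measure, rather than the finite set $[n]$ of \cite{Conlon2017}. One must check that $C(\Gamma)$-valued (equivalently $L_2(\Gamma)$-valued after completing) assignments are admissible in the definition of $\norm{A}_G$ even though that definition is phrased with finite-dimensional $\C^d$; this is handled by approximating the Haar integral by finite averages and passing to the limit, using compactness of $\Gamma$ and continuity of everything in sight. One also has to verify that the bilinear form $\frac1n\sum A_{ij}\langle x_i,y_j\rangle$ in the definition of $\norm{A}_G$ has the right continuous-$S$ analogue, namely $\E_{s,t}A(s,t)\langle x_s,y_t\rangle$ with respect to the induced measure, and that $A$ acting on $C(S)$ has a well-defined ``kernel'' in the relevant weak sense — or, more cleanly, avoid kernels entirely by writing $\langle g, Af\rangle = \E_{\pi}\E_{\sigma}\,\overline{g(\pi s_0)}\,(Af^{\sigma})(\cdot)$-type expressions and manipulating only the operator $A$ and the functions, never an explicit matrix. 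Everything else (the normalizations, the triangle inequality, invoking $K_G^\C$) is routine. I would also remark, as the authors presumably do, that over $\R$ the identical argument yields the real Grothendieck constant $K_G$, recovering the Conlon--Zhao bound $\lambda(G)\le 4K_G\,\epsilon(G)$ for vertex-transitive graphs after combining with \Cref{lemma:cutinftyone}-type estimates relating $\norm{\cdot}_{L_\infty\to L_1}$ to the cut norm.
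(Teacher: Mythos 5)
Your argument is correct in outline but takes a genuinely different route from the paper's proof. You aim to establish the inequality $\norm{A}_{L_2\to L_2}\leq \norm{A}_{G}$ directly, by taking the optimal $f,g$ for the $L_2\to L_2$ norm, building $L_2(\Gamma)$-valued assignments $s\mapsto x_s$ with $x_s(\pi)=f(\pi^{-1}(s))$, and using transitivity to get the uniform bound $\norm{x_s}\leq\norm{f}_{L_2}$ together with covariance to recover $\langle g,Af\rangle$ from the vector-valued bilinear form; you then appeal to the bilinear-form version of Grothendieck's inequality $\norm{A}_G\leq K_G^{\C}\norm{A}_{L_\infty\to L_1}$. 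The paper instead invokes the \emph{factorization} form of Grothendieck's inequality (\Cref{thm:cgi}), which is stated directly for operators on $C(S)$ and produces two probability measures $\lambda,\nu$ on $S$; after the same covariance identity $|\langle g,Af\rangle|=\E_{\pi}|\langle g^{\pi},Af^{\pi}\rangle|$ one applies the factorization bound to each term, uses AM--GM, and then Tonelli plus transitivity to replace $\int|f^\pi|^2\,d\lambda$ (averaged over $\pi$) by $\norm{f}_{L_2}^2$. The key difference: the factorization version sidesteps entirely the obstacles you flag yourself --- you have to make sense of the Grothendieck norm for continuous-domain operators, decide whether $L_2(\Gamma)$-valued assignments are admissible, and avoid assuming $A$ has a kernel. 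In the paper's argument none of this arises because the measures $\lambda,\nu$ on $S$ are exactly what integrates cleanly against the Haar average. Both approaches are sound; yours is essentially the original Conlon--Zhao route (which the paper notes it is ``streamlining''), while the paper's is shorter and immediately rigorous over compact $S$. One small inaccuracy in your write-up: you do not actually need to ``recognize $\langle g,Af\rangle$ as a relaxation'' of a Grothendieck form and hope it ``telescopes''; with $A^\pi=A$ every term $\langle g^\pi,Af^\pi\rangle$ is \emph{equal} to $\langle g,Af\rangle$, so the averaging identity is an exact equality, not an approximation that needs Fubini gymnastics to close.
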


Here we give a somewhat more streamlined proof of this result based on a well-known factorization version of Grothendieck's inequality~\cite{Grothendieck53} (see also~\cite{Pisier2012}), which will serve as a stepping stone to the proof of the non-commutative case.\footnote{The main difference is that in~\cite{Conlon2017}, the result is first proved for weighted Cayley graphs, after which it is shown that this implies the result for transitive covariant matrices.}
In our setting the inequality asserts the following

\begin{theorem}[Commutative Grothendieck inequality (factorization)] \label{thm:cgi}
   Let~$S$ be as above and let $A:C(S)\to C(S)$ be a linear map. Then, there exist probability measures $\lambda,\nu$ on~$S$ such that for all $f,g\in C(S)$, we have
    \begin{align*}
        \vert\langle g, A f\rangle\vert \leq K_G^\C
        \norm{A}_{\ell_\infty\to \ell_1}
        \left( \int_S |f(s)|^2 \; d\lambda(s) \right)^{1/2}
        \left( \int_S |g(s)|^2 d\nu(s) \right)^{1/2} .
    \end{align*}
\end{theorem}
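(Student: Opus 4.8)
I would prove this by recognising the statement as the \emph{factorization form} of Grothendieck's inequality applied to the commutative $C^*$-algebra $C(S)$. Concretely, to the linear map $A$ I attach the bilinear form $\phi$ on $C(S)\times C(S)$ given by $\phi(f,h):=\langle \bar h, Af\rangle$, and observe that its norm $\sup\{\,|\phi(f,h)| : \norm{f}_{\ell_\infty}\le 1,\ \norm{h}_{\ell_\infty}\le 1\,\}$ equals $\norm{A}_{\ell_\infty\to\ell_1}$. The classical Grothendieck factorization theorem (see~\cite{Pisier2012}) then produces probability measures $\mu,\nu$ on $S$ with
\[
|\phi(f,h)| \;\le\; K_G^\C\,\norm{A}_{\ell_\infty\to\ell_1}\,\Bigl(\int_S |f|^2\,d\mu\Bigr)^{1/2}\Bigl(\int_S |h|^2\,d\nu\Bigr)^{1/2}
\]
for all $f,h\in C(S)$; taking $h=\bar g$ and using $\int_S|\bar g|^2\,d\nu=\int_S|g|^2\,d\nu$ yields the statement with $\lambda:=\mu$. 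Transitivity of $\Gamma$ on $S$ plays no role here --- it enters only afterwards, in the proof of \Cref{thm:CZ_matrix}, where $\lambda,\nu$ get averaged over $\Gamma$ into the uniform measure.

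For a self-contained argument I would instead derive the factorization from the matrix Grothendieck inequality $\norm{A}_G\le K_G^\C\norm{A}_{\ell_\infty\to\ell_1}$ recorded above, in three steps. \emph{First}, reformulate it in ``$\ell_1$-sum'' form: for all $f_1,\dots,f_n,g_1,\dots,g_n\in C(S)$,
\[
\sum_{j=1}^n |\phi(f_j,g_j)| \;\le\; K_G^\C\,\norm{A}_{\ell_\infty\to\ell_1}\,\Bigl\|\bigl(\textstyle\sum_{j}|f_j|^2\bigr)^{1/2}\Bigr\|_{\ell_\infty}\Bigl\|\bigl(\textstyle\sum_{j}|g_j|^2\bigr)^{1/2}\Bigr\|_{\ell_\infty},
\]
which is the standard bilinear incarnation of Grothendieck's inequality and follows from the matrix version by discretising $S$ with a finite net and passing to the limit. \emph{Second}, extract the measures by a Hahn--Banach/minimax argument: by weak-$*$ compactness of the set $P(S)\times P(S)$ of pairs of probability measures and the finite intersection property, it suffices, for each finite family $\{(f_j,g_j)\}_{j\le n}$, to find $(\mu,\nu)$ with $2|\phi(f_j,g_j)|\le K_G^\C\norm{A}_{\ell_\infty\to\ell_1}\bigl(\int|f_j|^2d\mu+\int|g_j|^2d\nu\bigr)$ for every $j$; applying von Neumann's minimax theorem in the weights $\theta\in\Delta_n$ and in $(\mu,\nu)$, substituting $\sqrt{\theta_j}f_j,\sqrt{\theta_j}g_j$, and invoking the $\ell_1$-sum inequality reduces the claim to the elementary bound $2\sqrt{PQ}\le P+Q$. \emph{Third}, homogenise: replacing $(f,g)$ by $(tf,t^{-1}g)$ and optimising over $t>0$ upgrades the additive estimate to the product estimate in the statement.

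The step I expect to be the main obstacle is the first one: making the passage from the finite-dimensional matrix inequality to the $\ell_1$-sum reformulation on the infinite-dimensional space $C(S)$ rigorous with the \emph{exact} constant $K_G^\C$, and, relatedly, setting up the weak-$*$ compactness in the second step carefully (the degenerate cases $\int|f|^2d\mu=0$ require a limiting argument). The minimax and homogenisation steps are then routine. If one is content to cite the factorization theorem as a black box, the entire proof collapses to the one-paragraph translation above.
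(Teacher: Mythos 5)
The paper gives no proof of this statement: it is quoted as the known factorization form of Grothendieck's inequality, with~\cite{Grothendieck53} and~\cite{Pisier2012} as references. Your first paragraph is exactly that citation unwound, and the bookkeeping is correct: the bilinear form $\phi(f,h)=\langle\bar h,Af\rangle$ has norm $\norm{A}_{L_\infty\to L_1}$, the classical factorization theorem on $C(S)\times C(S)$ applies, and substituting $h=\bar g$ with $|\bar g|^2=|g|^2$ gives the stated inequality; you also correctly observe that the group $\Gamma$ and transitivity play no role here and enter only in the proof of Theorem~\ref{thm:CZ_matrix}. The self-contained three-step derivation you sketch (matrix inequality $\Rightarrow$ $\ell_1$-sum form $\Rightarrow$ Pietsch-style extraction of the measures via minimax and weak-$*$ compactness $\Rightarrow$ homogenisation by $f\mapsto tf$, $g\mapsto t^{-1}g$) is the standard proof, which the paper omits, and your assessment that the first step is the delicate one is accurate; one small caveat is that ``discretising $S$ with a finite net'' is not quite the right mechanism, since one must control $A$ on all of $C(S)$ and not just at the net points --- the usual route is to approximate $C(S)$ from within by finite-dimensional $\ell_\infty^n$-subalgebras (partitions of unity / conditional expectations), along which the constant $K_G^\C$ is stable.
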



\begin{proof}[Proof of Theorem~\ref{thm:CZ_matrix}]
    It follows from the triangle inequality and transitivity that
    \begin{align*}
        \vert\langle g, A f\rangle\vert &\leq
        \E_{\pi \in \Gamma} \vert\langle g, A^{\pi} f \rangle\vert  =
        \E_{\pi \in \Gamma} \vert\langle g^{\pi}, A f^{\pi} \rangle\vert.
     \end{align*}
By Theorem~\ref{thm:cgi} and the AM-GM inequality there are probability measures $\lambda,\nu$ on~$S$ such that the above right-hand side is at most
\begin{align*}
         \frac{K_G^\C \norm{A}_{\ell_\infty\to \ell_1}}{2}
        \E_{\pi \in \Gamma}
        \left( \int_S |f^\pi(s)|^2 d\lambda(s) +
        \int_S |g^{\pi}(s)|^2 d\nu(s) \right) 
        = 
        \frac{K_G^\C \norm{A}_{\ell_\infty\to \ell_1}}{2} ( \norm{f}_{\ell_2}^2 + \norm{g}_{\ell_2}^2 ) ,
    \end{align*}
    where we switched the order of the integrals (using Tonelli's theorem) and the expression~\eqref{def:L2} for the~$L_2$ norm. 
    For $\norm{f}_{\ell_2} = \norm{g}_{\ell_2} = 1$ this shows $\norm{A}_{\ell_2\to \ell_2} \leq K_G^\C \norm{A}_{\ell_\infty\to \ell_1}$.
\end{proof}

\subsection{Non-commutative case} \label{sec:ncconverse}

Our main technical result is as follows.

\begin{restatable}{theorem}{ceml} \label{thm:ceml}
    Let $\Phi:M_n(\C)\to M_n(\C)$ be an irreducibly covariant superoperator. Then,
    $\norm{\Phi}_{S_\infty\to S_1} \leq \norm{\Phi}_{S_2\to S_2} \leq 2\norm{\Phi}_{S_\infty \to S_1}$.
\end{restatable}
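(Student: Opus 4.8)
The plan is to prove the two inequalities separately: the left one, $\norm{\Phi}_{S_\infty\to S_1}\le\norm{\Phi}_{S_2\to S_2}$, is a soft ``expander mixing'' bound holding for every superoperator, while the right one is the substantive part, which I would obtain by imitating the proof of \Cref{thm:CZ_matrix} with Haagerup's non-commutative Grothendieck inequality in place of the commutative \Cref{thm:cgi}. For the left inequality I would only use that the normalized Schatten norms are monotone in the exponent --- a power-mean inequality applied to the singular values under the uniform measure --- so $\norm{X}_{S_2}\le\norm{X}_{S_\infty}$; since both norms of $\Phi$ are suprema of the same objective $|\langle Y,\Phi(X)\rangle|$, with $X$ ranging over the $S_\infty$- resp.\ $S_2$-ball and $Y$ over the dual ball (again $S_\infty$ resp.\ $S_2$), and the $S_\infty$-ball is contained in the $S_2$-ball, this is immediate and uses no symmetry.

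For the right inequality, fix $X,Y$ with $\norm{X}_{S_2},\norm{Y}_{S_2}\le 1$ and let $\Gamma$ and the irreducible unitary representations $U,V$ witness irreducible covariance of $\Phi$. First, using $U(g)^*=U(g^{-1})$ and the covariance identity with group element $g^{-1}$ one gets $\Phi\big(U(g)^*XU(g)\big)=V(g)^*\Phi(X)V(g)$, and pairing this against $V(g)^*YV(g)$ in the trace inner product yields the invariance $\langle Y,\Phi(X)\rangle=\langle Y_g,\Phi(X_g)\rangle$ for every $g$, where $X_g:=U(g)^*XU(g)$ and $Y_g:=V(g)^*YV(g)$ have the same Schatten norms as $X,Y$. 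Hence $|\langle Y,\Phi(X)\rangle|\le\E_{g\in\Gamma}|\langle Y_g,\Phi(X_g)\rangle|$, the analogue of the first display in the proof of \Cref{thm:CZ_matrix}. Next I would invoke Haagerup's inequality once, for the bilinear form associated to $\Phi$, whose norm is $\norm{\Phi}_{S_\infty\to S_1}$: it supplies density matrices $\rho_1,\rho_2,\sigma_1,\sigma_2$ depending only on $\Phi$ such that for all $g$, $|\langle Y_g,\Phi(X_g)\rangle|\le\norm{\Phi}_{S_\infty\to S_1}\big(\Tr[\rho_1 X_gX_g^*]+\Tr[\rho_2 X_g^*X_g]\big)^{1/2}\big(\Tr[\sigma_1 Y_gY_g^*]+\Tr[\sigma_2 Y_g^*Y_g]\big)^{1/2}$. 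Averaging over $\Gamma$ and applying Cauchy--Schwarz on the probability space $\Gamma$ (playing the role of AM--GM in the commutative proof) then bounds $|\langle Y,\Phi(X)\rangle|$ by $\norm{\Phi}_{S_\infty\to S_1}$ times $\big(\E_g(\Tr[\rho_1X_gX_g^*]+\Tr[\rho_2X_g^*X_g])\big)^{1/2}\big(\E_g(\Tr[\sigma_1Y_gY_g^*]+\Tr[\sigma_2Y_g^*Y_g])\big)^{1/2}$. Finally, since $X_gX_g^*=U(g)^*XX^*U(g)$ and $X_g^*X_g=U(g)^*X^*XU(g)$, cyclicity of the trace turns $\E_g\Tr[\rho_1 X_gX_g^*]$ into $\Tr\big[\big(\E_g U(g)\rho_1 U(g)^*\big)XX^*\big]$; the twirl $\E_g U(g)\rho_1 U(g)^*$ commutes with every $U(h)$, hence is scalar by Schur's lemma (irreducibility of $U$), and its trace pins it down as the maximally mixed state, so $\E_g\Tr[\rho_1 X_gX_g^*]=\norm{X}_{S_2}^2$. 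The identical computation handles the $\rho_2$-term (using irreducibility of $U$ again) and both $\sigma$-terms (using irreducibility of $V$), so each parenthesized average equals $2\norm{X}_{S_2}^2$ resp.\ $2\norm{Y}_{S_2}^2$, giving $|\langle Y,\Phi(X)\rangle|\le 2\norm{\Phi}_{S_\infty\to S_1}\norm{X}_{S_2}\norm{Y}_{S_2}$ and so $\norm{\Phi}_{S_2\to S_2}\le 2\norm{\Phi}_{S_\infty\to S_1}$. The factor $2$ is precisely the cost of the two-density-matrix factorization in Haagerup's inequality once the twirl has merged both states on each side.

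The main obstacle is the twirling step, and in particular the fact that it genuinely needs both $U$ and $V$ to be irreducible --- not just that $\Phi$ intertwines some pair of representations --- so that Schur's lemma forces every $\Gamma$-invariant operator on $\C^n$ to be a scalar; this is what makes the auxiliary states from Haagerup's inequality vanish and leaves behind only $\norm{X}_{S_2}$ and $\norm{Y}_{S_2}$. A related subtlety is that Haagerup's inequality must be applied a single time to the fixed form $\langle\cdot,\Phi(\cdot)\rangle$, so that one set of density matrices works for all $g\in\Gamma$ simultaneously; applying it only after averaging would destroy the structure. The remaining ingredients --- the covariance identity, Cauchy--Schwarz, and the trace manipulations --- are routine and mirror \Cref{thm:CZ_matrix} step by step.
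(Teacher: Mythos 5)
Your proposal is correct and follows essentially the same argument as the paper: invariance of $\langle Y,\Phi(X)\rangle$ under twirling by the covariance group, a single application of Haagerup's factorization inequality to $\Phi$ with fixed density matrices, and Schur's lemma (via \Cref{lemma:ictransitive}) to collapse the twirled terms to $\norm{X}_{S_2}^2$ and $\norm{Y}_{S_2}^2$. The only cosmetic differences are that you use Cauchy--Schwarz over $\Gamma$ where the paper uses AM--GM pointwise before averaging (both yield the identical factor $2$), and you twirl the auxiliary density matrices while the paper twirls $X_g^*X_g$ directly (equivalent by cyclicity of the trace).
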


Since the supremum in $\norm{\Phi}_{S_\infty \to S_1}$ is taken over $X,Y$ with $S_\infty$-norm equal to 1,
the first inequality of the theorem follows from the fact that $\norm{X}_{S_2} \leq \norm{X}_{S_\infty}$.
As projectors have Schatten-$\infty$ norm~1, the first inequality also easily implies the analogue of the Expander Mixing Lemma, that is, $\epsilon(\Phi) \leq \lambda(\Phi)$,  where $\lambda(\Phi)$ and~$\epsilon(\Phi)$ are as in~\eqref{def:super_exp} and~\eqref{def:super_uniformity}, respectively; note that when $\Phi$ is irreducibly covariant, so is $\Phi - \Pi$.
The second inequality is proved at the end of this section and in \Cref{sec:constantnc} we show that the factor~2 in the theorem is optimal.
With Lemma~\ref{lemma:cutinftyone}, which relates the uniformity parameter $\epsilon(\Phi)$ to $\norm{\Phi-\Pi}_{S_\infty\to S_1}$, Theorem~\ref{thm:ceml} then immediately gives the following result stated in the introduction.


\begin{restatable}[Converse Quantum Expander Mixing Lemma]{corollary}{corceml} \label{cor:ceml}
    Let $\Phi : M_n(\C) \to M_n(\C)$ be an irreducibly covariant superoperator. Then, $\lambda(\Phi) \leq 2\pi^2\epsilon(\Phi)$.
\end{restatable}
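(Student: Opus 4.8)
The plan is to prove Theorem~\ref{thm:ceml}, from which Corollary~\ref{cor:ceml} follows immediately: combining the theorem applied to $\Phi - \Pi$ (which is irreducibly covariant whenever $\Phi$ is, since $\Pi$ commutes with conjugation by any unitary) with Lemma~\ref{lemma:cutinftyone} gives $\lambda(\Phi) = \norm{\Phi-\Pi}_{S_2\to S_2} \leq 2\norm{\Phi-\Pi}_{S_\infty\to S_1} \leq 2\pi^2 \norm{\Phi-\Pi}_{\mathrm{cut}} = 2\pi^2\epsilon(\Phi)$. So the real work is the second inequality $\norm{\Phi}_{S_2\to S_2} \leq 2\norm{\Phi}_{S_\infty\to S_1}$ for irreducibly covariant $\Phi$.

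For that, I would mirror the commutative argument of Theorem~\ref{thm:CZ_matrix} but replace the commutative Grothendieck inequality with Haagerup's non-commutative factorization version. The non-commutative Grothendieck inequality (factorization form) states: for any superoperator $\Phi : M_n(\C)\to M_n(\C)$ there exist states (density matrices) $\rho_1,\rho_2,\sigma_1,\sigma_2$ on $\C^n$ such that for all $X,Y\in M_n(\C)$,
\begin{align*}
  |\langle Y,\Phi(X)\rangle| \leq \norm{\Phi}_{S_\infty\to S_1}\Big(\tfrac12\Tr[\rho_1 XX^*] + \tfrac12\Tr[\rho_2 X^*X]\Big)^{1/2}\Big(\tfrac12\Tr[\sigma_1 YY^*] + \tfrac12\Tr[\sigma_2 Y^*Y]\Big)^{1/2},
\end{align*}
with the constant equal to $1$ (this is Haagerup's sharp form; the absence of an extra constant is exactly why we get $2$ and not, say, $2K_G$). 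The strategy is then: by irreducible covariance, for every $g\in\Gamma$ we have $\langle Y,\Phi(X)\rangle = \langle V(g)^*YV(g), \Phi(U(g)^*XU(g))\rangle$, so we can average over $g\in\Gamma$ with the Haar measure. Applying the factorization bound inside the average, using AM-GM to split the product of square roots into a sum of two squares, and then swapping the order of the Haar integral and the matrix traces (Tonelli), we arrive at expressions like $\E_{g}\Tr[\rho\, U(g)^*XX^*U(g)] = \Tr\big[(\E_g U(g)\rho U(g)^*)\, XX^*\big]$. The key point is that $\E_g U(g)\rho U(g)^*$ is a density matrix invariant under the irreducible representation $U$, hence by Schur's lemma it equals $\frac1n\id$; likewise for the $V$-side. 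That collapses the bound to $\frac{\norm{\Phi}_{S_\infty\to S_1}}{2}\big(\norm{X}_{S_2}^2 + \norm{Y}_{S_2}^2\big)$ using the normalization $\langle A,B\rangle = \frac1n\Tr[A^*B]$, and taking $\norm{X}_{S_2}=\norm{Y}_{S_2}=1$ gives the factor $2$.

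The main obstacle, and the place I would be most careful, is the bookkeeping with which representation acts on which side: $X$ is conjugated by $U(g)$ and $Y$ by $V(g)$, and the factorization inequality produces \emph{separate} density matrices for the $XX^*$ and $X^*X$ terms, so after averaging I need $\E_g U(g)\rho_1 U(g)^*$, $\E_g \overline{U(g)}\,\rho_2\,\overline{U(g)}^*$ (or the transpose/conjugate-representation version, depending on whether $X^*X$ or $XX^*$ appears), etc., each to be recognized as an averaged density matrix over an irreducible representation — note that if $U$ is irreducible then so is its conjugate, so Schur still applies. I would double-check the exact statement of Haagerup's inequality I invoke (whether it is phrased with $XX^*,X^*X$ or with $X^*X$ appearing symmetrically) and make sure the normalization of the inner product is tracked correctly so that the $\frac1n$ factors land where they should. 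Everything else — AM-GM, Tonelli/Fubini for the compact-group average, and Schur's lemma for invariant states — is routine. I would also remark, as the paper does, that optimality of the factor $2$ is deferred to Section~\ref{sec:constantnc} and optimality of $\pi^2$ to Lemma~\ref{lemma:cutinftyone}.
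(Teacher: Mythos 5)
Your overall strategy is identical to the paper's: covariance lets you Haar-average, Haagerup's factorization form of the non-commutative Grothendieck inequality controls each term, AM-GM splits the product of square roots, and Lemma~\ref{lemma:ictransitive} (Schur's lemma) collapses the averaged traces. There is, however, a concrete bug: your statement of Haagerup's inequality carries extraneous factors of $\tfrac12$. The correct factorization form (Theorem~\ref{thm:grothendieckfactorization}) reads
\begin{align*}
|\langle Y,\Phi(X)\rangle| \le \norm{\Phi}_{S_\infty\to S_1}\bigl(\Tr[\rho_1 X^*X]+\Tr[\rho_2 XX^*]\bigr)^{1/2}\bigl(\Tr[\sigma_1 Y^*Y]+\Tr[\sigma_2 YY^*]\bigr)^{1/2}
\end{align*}
with no halves inside the parentheses. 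If your version with the $\tfrac12$'s held, your own computation would land on $\norm{\Phi}_{S_2\to S_2}\le\norm{\Phi}_{S_\infty\to S_1}$ (a factor of~$1$), which is false: Proposition~\ref{prop:opt_ceml} shows the factor~$2$ is approached. Indeed the intermediate bound you report, $\tfrac12\norm{\Phi}_{S_\infty\to S_1}(\norm{X}_{S_2}^2+\norm{Y}_{S_2}^2)$, evaluates to $\norm{\Phi}_{S_\infty\to S_1}$ at $\norm{X}_{S_2}=\norm{Y}_{S_2}=1$, not to $2\norm{\Phi}_{S_\infty\to S_1}$, so your arithmetic is internally inconsistent with the factor~$2$ you then claim. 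With the correct, unnormalized form, AM-GM gives $\tfrac12\norm{\Phi}_{S_\infty\to S_1}\E_g\bigl(\Tr[\rho_1 X_g^*X_g]+\Tr[\rho_2 X_gX_g^*]+\Tr[\sigma_1 Y_g^*Y_g]+\Tr[\sigma_2 Y_gY_g^*]\bigr)$; each trace averages to $\norm{X}_{S_2}^2$ or $\norm{Y}_{S_2}^2$, and the factor~$2$ drops out.

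A smaller point: your worry about conjugate or transposed representations is misplaced. Writing $X_g=U(g)XU(g)^*$, both $X_g^*X_g=U(g)X^*XU(g)^*$ and $X_gX_g^*=U(g)XX^*U(g)^*$ are conjugated by the \emph{same} representation~$U$, so Lemma~\ref{lemma:ictransitive} applies directly (and similarly with~$V$ for the $Y$-terms); no $\overline{U(g)}$ or $U(g)^{T}$ arises. Once the inequality is stated correctly, the remainder --- including deducing the corollary from Theorem~\ref{thm:ceml}, Lemma~\ref{lemma:cutinftyone}, and the observation that $\Phi-\Pi$ is irreducibly covariant whenever $\Phi$ is --- is exactly the paper's argument.
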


In this non-commutative setting we use the following analog of Theorem~\ref{thm:cgi} (a factorization version of the non-commutative Grothendieck inequality), proved by Haagerup in~\cite{Haagerup1985}; see also~\cite{Pisier2012}.
A density matrix is a positive semidefinite matrix with trace equal to 1.

\begin{theorem}[Haagerup] \label{thm:grothendieckfactorization}
    Let $\Phi\colon M_n(\C)\to M_n(\C)$ be a superoperator. Then, there exist density matrices $\rho_1,\rho_2,\sigma_1,\sigma_2$ such that for any $X,Y\in M_n(\C)$, we have
    \begin{align} \label{eq:grothendieckfactorization}
        \vert\langle Y,\Phi(X)\rangle\vert \leq
        \norm{\Phi}_{S_\infty\to S_1}
        \left( \mathrm{Tr}[\rho_1 X^* X] + \mathrm{Tr}[\rho_2 X X^*] \right)^{1/2}
        \left( \mathrm{Tr}[\sigma_1 Y^* Y] + \mathrm{Tr}[\sigma_2 Y Y^*] \right)^{1/2} .
    \end{align}
\end{theorem}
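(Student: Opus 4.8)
\textbf{Proof plan for Theorem~\ref{thm:grothendieckfactorization} (Haagerup's factorization).}

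The plan is to deduce the factorization form from the bilinear (inequality) form of the non-commutative Grothendieck inequality together with a compactness/minimax argument, which is the standard route (cf.\ Pisier's survey~\cite{Pisier2012}). First I would recall the bilinear version: for every superoperator $\Phi$ and all $X,Y\in M_n(\C)$ one has $|\langle Y,\Phi(X)\rangle| \le \|\Phi\|_{S_\infty\to S_1}$ whenever $X,Y$ lie in the unit ball of $S_\infty$, and more usefully, for the associated bilinear form there is a bound in terms of the ``row and column'' Hilbertian seminorms. Concretely, writing $B(Y,X) = \langle Y,\Phi(X)\rangle$, the content of the non-commutative Grothendieck inequality (with Haagerup's optimal constant $1$ after the symmetrization built into the $S_\infty\to S_1$ norm) is that
\begin{align*}
|B(Y,X)| \le \|\Phi\|_{S_\infty\to S_1}\,\Bigl(\tfrac12\Tr[\rho_1 X^*X]+\tfrac12\Tr[\rho_2 XX^*]\Bigr)^{1/2}\Bigl(\tfrac12\Tr[\sigma_1 Y^*Y]+\tfrac12\Tr[\sigma_2 YY^*]\Bigr)^{1/2}
\end{align*}
for a single well-chosen quadruple of density matrices; the factors of $\tfrac12$ can be absorbed by rescaling the density matrices, matching the stated form. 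So the real task is to pass from ``for each $X,Y$ there exist densities'' (the easy, pointwise statement) to ``there exist densities that work for all $X,Y$ simultaneously.''

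The key steps, in order: (1) State the pointwise estimate — for any finite families $X_1,\dots,X_k$ and $Y_1,\dots,Y_k$, applying the inequality to suitable combinations yields $\sum_i |B(Y_i,X_i)| \le \|\Phi\|_{S_\infty\to S_1}$ under normalization of $\sum_i$ of the relevant quadratic expressions — this is exactly where the bilinear non-commutative Grothendieck inequality is invoked as the analytic input. (2) Set up the minimax: consider the function
\begin{align*}
F\bigl((\rho_1,\rho_2,\sigma_1,\sigma_2),(X,Y)\bigr) = |\langle Y,\Phi(X)\rangle|^2 - \|\Phi\|_{S_\infty\to S_1}^2\bigl(\Tr[\rho_1X^*X]+\Tr[\rho_2XX^*]\bigr)\bigl(\Tr[\sigma_1Y^*Y]+\Tr[\sigma_2YY^*]\bigr)
\end{align*}
on $\mathcal{D}^{\times 4}\times(\text{ball}\times\text{ball})$, where $\mathcal{D}$ is the compact convex set of density matrices. (3) Linearize: replace the product of the two trace-sums by its arithmetic–geometric bound $\tfrac12(a^2\alpha + b^{-2}\beta)$ over an auxiliary scalar $b>0$, or better, homogenize so that the objective becomes affine (hence concave) in $(\rho_1,\rho_2,\sigma_1,\sigma_2)$ while remaining convex (after taking a sup over a ball, which is convex) in the $(X,Y)$ variables — this is the delicate bookkeeping step. (4) Apply Sion's minimax theorem to interchange $\inf$ over the density matrices with $\sup$ over $(X,Y)$; the $\sup=\inf$ value is $\le 0$ precisely by the pointwise estimate from step (1). (5) Extract the minimizing quadruple $(\rho_1,\rho_2,\sigma_1,\sigma_2)$ by compactness of $\mathcal{D}^{\times 4}$, and unwind the homogenization to land on the claimed inequality for all $X,Y$, absorbing constants.

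The main obstacle I anticipate is step (3)–(4): getting the expression into a genuinely bilinear (concave–convex) shape so that Sion's theorem applies. The product of the two square roots is the culprit — it must be split, typically by introducing an extra positive scaling parameter $t$ and writing $\sqrt{a}\sqrt{b} = \inf_{t>0}\tfrac12(ta + b/t)$, and then one has to check that after this substitution the relevant set (now including the $t$-variable, or a normalization constraint tying the two sides together) is still convex and compact, and that the saddle value is controlled by the Grothendieck input uniformly. A secondary subtlety is making sure the optimal constant is exactly $1$ and not some $K$: this is where one must use that $\|\Phi\|_{S_\infty\to S_1}$ already encodes a symmetrization over $X\mapsto X^*$ (equivalently, that the relevant operator-space version of Grothendieck's theorem for $M_n$ has constant $1$, Haagerup's theorem), rather than the classical constant $K_G$. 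If one is content to cite the bilinear non-commutative Grothendieck inequality with constant $1$ as a black box, the remainder is the minimax-and-compactness packaging sketched above; if not, one would additionally need to reproduce Haagerup's proof of that inequality, which is a substantial separate undertaking and, I would argue, outside the scope of what this lemma needs.
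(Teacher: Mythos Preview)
The paper does not prove Theorem~\ref{thm:grothendieckfactorization} at all: it is stated as Haagerup's result, with citations to~\cite{Haagerup1985} and the survey~\cite{Pisier2012}, and is then used as a black box in the proof of Theorem~\ref{thm:ceml}. So there is no ``paper's own proof'' to compare against.

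That said, your outline is the standard route to the factorization form and is essentially what one finds in~\cite{Pisier2012}: take the bilinear non-commutative Grothendieck inequality as analytic input, then run a minimax/compactness argument over the compact convex set of density matrices to upgrade the pointwise estimate to a uniform one. Your identification of the delicate point --- linearizing the product of the two square roots via $\sqrt{a}\sqrt{b}=\inf_{t>0}\tfrac12(ta+b/t)$ so that Sion's theorem applies --- is correct, and your remark that the constant comes out as~$1$ (rather than some~$K$) precisely because one takes Haagerup's sharp bilinear inequality as input is also right; compare with the equivalent form~\eqref{eq:ncgrothendieck} in the paper, where $K_G'\le 2$ with the averaged normalization becomes~$1$ once the factors of~$\tfrac12$ are absorbed into the sums $\Tr[\rho_1X^*X]+\Tr[\rho_2XX^*]$. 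For the purposes of this paper, however, all of this is exactly the ``substantial separate undertaking'' you yourself flag as out of scope: the authors simply cite the result.
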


We also use the following lemma.

\begin{restatable}{lemma}{ictransitive} \label{lemma:ictransitive}
    Let $\Gamma$ be a compact group.
    Then, a unitary representation $U:\Gamma \to U(n)$ is irreducible if and only if for any $X\in M_n(\C)$, we have
    $$\E_{g \in \Gamma} U(g) X U(g)^* = \mathrm{Tr}(X)\frac{1}{n} \id.$$
\end{restatable}
\begin{proof}
    By Schur's lemma, if $U$ is an irreducible representation, then for $T \in M_n(\C)$
    \begin{align*}
        \Big[ \forall g \in \Gamma \quad U(g) T U(g)^* = T \Big] \iff \Big[ \exists \lambda\in \C \quad T = \lambda \: \id \Big] .
    \end{align*}
    Let $T_X = \E_{g\in \Gamma} U(g) X U(g)^*$, then by the group structure we have $U(g) T_X U(g)^* = T_X$ for all $g\in \Gamma$. Therefore, if $U$ is irreducible then $T_X = \lambda_X \: \id$. By taking the trace, it follows that $\lambda_X = \mathrm{Tr}(X)/n$. In the other direction, if $U$ is reducible then there exists a projector $P$ onto an irreducible subspace that is left invariant, i.e. $U(g) P U(g)^* = P$ for all $g\in \Gamma$, so $T_P \neq \lambda \id$.
\end{proof}

\begin{proof}[Proof of Theorem~\ref{thm:ceml}]
    Denote by $\Gamma$ and $U, V\colon \Gamma\to U(n)$ the group and irreducible representations such that $\Phi$ is irreducibly covariant with respect to $\Gamma$ (see Definition~\ref{def:IC}).
    For any $X,Y \in M_n(\C)$ write $X_g = U(g) X U^*(g)$ and $Y_g = V(g) Y V^*(g)$, then we have
    \begin{align*}
        \vert \langle Y , \Phi(X) \rangle \vert
         &=\E_{g\in \Gamma}\vert \langle Y_g, \Phi(X_g) \rangle \vert .
    \end{align*}
    By Theorem~\ref{thm:grothendieckfactorization} and the AM-GM inequality, there exist density matrices $\rho_1,\rho_2,\sigma_1,\sigma_2$ such that the right hand side is bounded from above by
    \begin{align*}
        \frac{1}{2} \norm{\Phi}_{S_\infty\to S_1} \E_{g\in \Gamma} \left(
        \mathrm{Tr}[\rho_1 X_g^* X_g] +
        \mathrm{Tr}[\rho_2 X_g X_g^*] +
        \mathrm{Tr}[\sigma_1 Y_g^* Y_g] +
        \mathrm{Tr}[\sigma_2 Y_g Y_g^*] \right) .
    \end{align*}
    By Lemma~\ref{lemma:ictransitive} we have $\E_{g\in \Gamma} X_g^* X_g = \E_{g\in \Gamma} U(g) X^* X U^*(g) = \frac{1}{n} \mathrm{Tr}[X^* X] \mathrm{Id} = \norm{X}_{S_2}^2 \mathrm{Id}$. Let $\rho$ be a density matrix, then $\E_{g\in \Gamma} \mathrm{Tr}[ \rho X_g^* X_g] = \norm{X}_{S_2}^2$.
    The same holds for $\E_{g\in \Gamma} \mathrm{Tr}[\rho X_g X_g^*]$ but with $U$, and for $Y$ with $V$, so we see that the above quantity is equal to
    \begin{align*}
        \norm{\Phi}_{S_\infty\to S_1} \left(\Vert X\Vert_{S_2}^2+\Vert Y\Vert_{S_2}^2\right).
    \end{align*}
    If $\norm{X}_{S_2} = \norm{Y}_{S_2} = 1$ we obtain $\norm{\Phi}_{S_2\to S_2} \leq 2 \norm{\Phi}_{S_\infty\to S_1}$.
\end{proof}

%
%
%

\subsection{Embedding graphs into quantum channels} \label{apx:embedding}

In this subsection, we elucidate the claim that quantum channels generalize graphs and prove the result stated in the second bullet in the introduction, namely that there are non-irreducible quantum channels for which a converse expander  mixing lemma does not hold.

We consider the following embeddings.
For $A\in M_n(\C)$, define $\Phi_A : M_n(\C) \to M_n(\C)$ as
\begin{align}\label{def:embedding}
    \Phi_A(X) = \sum_{i,j} A_{ij} X_{jj} E_{ii} ,
\end{align}
where $E_{ij}$ is the matrix with a single~$1$ at position~$(i,j)$. 
When $A$ is a transition matrix, i.e., its column sums are 1, then it is not hard to see that~$\Phi_A$ is completely positive and trace preserving and that~\mbox{$\Phi_{\frac{1}{n}J} = \Pi$}.
Several other ways exist to create quantum expanders from expander graphs, see for example~\cite{Hastings2009} and~\cite{Harrow2008}, but as we show below, our embedding given above carries over all relevant properties of the graph we consider here.

Conlon and Zhao~\cite{Conlon2017} give an infinite sequence of $d$-regular graphs~$G_n$ that are $o(1)$-uniform but for which $\lambda(G_n) \geq 1/2$. 
Combined with the following proposition, this immediately gives the result stated in the second bullet in the introduction.

\begin{restatable}{proposition}{normgeneralization} \label{prop:normgeneralization}
    Let $A\in M_n(\C)$ and $p,q\in [1,\infty]$. Then, for~$\Phi_A$ as in~\eqref{def:embedding}, we have 
    \beqn
    \norm{\Phi_A - \Pi}_{S_p\to S_q} = \norm{A - \frac{1}{n}J}_{\ell_p\to \ell_q}\quad \text{and}\quad  \norm{\Phi_A - \Pi}_{\mathrm{cut}} = \norm{A - \frac{1}{n}J}_\mathrm{cut}.
    \eeqn
\end{restatable}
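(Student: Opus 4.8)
The plan is to reduce both norm identities to statements about the matrix $A - \frac1n J$ by unpacking the action of $\Phi_A$ on the trace inner product. The key observation is that $\Phi_A$ only ever ``sees'' the diagonal of its input and only ever ``writes'' to the diagonal of its output: from \eqref{def:embedding} we have $\langle Y, \Phi_A(X)\rangle = \frac1n \sum_{i} \overline{Y_{ii}} (A x)_i$ where $x = (X_{11},\dots,X_{nn})$ is the diagonal of $X$. Writing $y = (Y_{11},\dots,Y_{nn})$ for the diagonal of $Y$, this says $\langle Y, \Phi_A(X)\rangle = \langle y, A x\rangle$ in the normalized $\ell_2$ inner product on $\C^n$. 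Since $\Pi = \Phi_{\frac1n J}$, the same holds for $\Phi_A - \Pi$ with $A$ replaced by $A - \frac1n J$. So the entire problem is to compare a supremum over matrices $X, Y$ (with Schatten or projector constraints) to a supremum over their diagonals $x, y$ (with $\ell_p$ or $\{0,1\}^n$ constraints).

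For the $S_p \to S_q$ identity I would argue two inequalities. For ``$\leq$'': given $X$ with $\norm{X}_{S_p} \le 1$ and $Y$ with $\norm{Y}_{S_{q'}} \le 1$, \Cref{prop:diagonalnorm} gives $\norm{x}_{\ell_p} \le \norm{X}_{S_p} \le 1$ and likewise $\norm{y}_{\ell_{q'}} \le 1$, so $|\langle Y,(\Phi_A-\Pi)(X)\rangle| = |\langle y,(A-\frac1n J)x\rangle| \le \norm{A-\frac1n J}_{\ell_p\to\ell_q}$; taking the supremum gives $\norm{\Phi_A-\Pi}_{S_p\to S_q} \le \norm{A-\frac1n J}_{\ell_p\to\ell_q}$. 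For ``$\geq$'': given any $x,y \in \C^n$ with $\norm{x}_{\ell_p} \le 1$, $\norm{y}_{\ell_{q'}} \le 1$, take $X = \Diag(x)$ and $Y = \Diag(y)$; then $\norm{X}_{S_p} = \norm{x}_{\ell_p} \le 1$ and $\norm{Y}_{S_{q'}} = \norm{y}_{\ell_{q'}} \le 1$ (a diagonal matrix has Schatten-$p$ norm equal to the $\ell_p$ norm of its diagonal), and $\langle Y,(\Phi_A-\Pi)(X)\rangle = \langle y,(A-\frac1n J)x\rangle$, so the sup over such $X,Y$ already recovers $\norm{A-\frac1n J}_{\ell_p\to\ell_q}$. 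The cut-norm identity is handled the same way: for ``$\leq$'', note that if $Y$ is a projector then its diagonal $y$ has entries in $[0,1]$, and since $\norm{\cdot}_{\mathrm{cut}}$ for matrices can be taken over $x,y \in [0,1]^n$ by convexity (the extreme points of $[0,1]^n$ are $\{0,1\}^n$), we get $|\langle y,(A-\frac1n J)x\rangle| \le \norm{A-\frac1n J}_{\mathrm{cut}}$; for ``$\geq$'', given $x,y\in\{0,1\}^n$ the matrices $\Diag(x)$ and $\Diag(y)$ are projectors realizing the value $\langle y,(A-\frac1n J)x\rangle$.

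The one genuinely delicate point — and the main obstacle — is the ``$\leq$'' direction of the cut-norm identity, because the matrix cut norm is defined with $x, y \in \{0,1\}^n$, not $x,y\in[0,1]^n$, so I need to justify that relaxing to $[0,1]^n$ does not change the value. This is the standard fact that a bilinear (or rather, affine-in-each-argument) function on a product of cubes attains its extremum at a vertex; I would state it as a one-line convexity argument, exactly analogous to the relaxation \eqref{eq:nccutnormrelaxed} used in the proof of \Cref{lemma:cutinftyone} for the superoperator cut norm. Everything else is a routine unfolding of definitions, and I expect the writeup to be short. One should be slightly careful that \Cref{prop:diagonalnorm} as stated covers $p \in [1,\infty)$ via the $L_p$ notation, and check the $p = \infty$ case separately (there $\norm{X}_{S_\infty} \ge \max_i |X_{ii}|$ because $|X_{ii}| = |\langle X e_i, e_i\rangle| \le \norm{X}_{S_\infty}$), so that the argument is valid for all $p,q \in [1,\infty]$ as claimed.
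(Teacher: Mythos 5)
Your proposal is correct and follows essentially the same route as the paper: both hinge on the identity $\langle Y, (\Phi_A-\Pi)(X)\rangle = \langle y, (A-\tfrac1nJ)x\rangle$ where $x,y$ are the diagonals of $X,Y$, use \Cref{prop:diagonalnorm} for the ``$\leq$'' direction, plug in $\Diag(x),\Diag(y)$ for the ``$\geq$'' direction, and relax $\{0,1\}^n$ to $[0,1]^n$ by convexity for the cut norm. The only cosmetic difference is that you work with the bilinear-form definition of $\norm{\cdot}_{S_p\to S_q}$ throughout, while the paper phrases the first part via the ratio $\norm{\Phi_B(X)}_{S_q}/\norm{X}_{S_p}$; your extra note on the $p=\infty$ case of \Cref{prop:diagonalnorm} is a harmless bit of diligence, as the paper's conjugation argument already works for all $p\in[1,\infty]$.
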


\begin{proof}
    Let $B = A - \frac{1}{n}J$, then $\Phi_A - \Pi = \Phi_B$.
    By compactness and definition of $\norm{\cdot}_{S_p\to S_q}$ we can assume there is an $X \in M_n(\C)$ such that $\norm{\Phi_B}_{S_p\to S_q} = \norm{\Phi_B(X)}_{S_q} / \norm{X}_{S_p}$.
    Write $X = \mathrm{diag}(x) + X_\mathrm{other}$ where $x\in\C^n$ is the diagonal of $X$, and $X_\mathrm{other}$ are the off-diagonal entries. Note that by definition of $\Phi_B$ we have $\Phi_B(X)=\Phi_B(\mathrm{diag}(x)) = \mathrm{diag}(Bx)$.
    By definition of Schatten norms, $\norm{\mathrm{diag}(x)}_{S_p} = \norm{x}_{\ell_p}$ and by Proposition~\ref{prop:diagonalnorm} we have $\norm{X}_{S_p} \geq \norm{x}_{\ell_p}$.
    We have
    \begin{align*}
        \norm{B}_{\ell_p\to \ell_q} \geq \frac{\norm{Bx}_{\ell_q} }{ \norm{x}_{\ell_p} } \geq \frac{ \norm{\mathrm{diag}(Bx)}_{S_q} }{ \norm{X}_{S_p} } = \frac{ \norm{\Phi_B(X)}_{S_q} }{ \norm{X}_{S_p} } = \norm{\Phi_B}_{S_p\to S_q}
    \end{align*}
    Now let $y\in\C^n$ be such that $\norm{B}_{\ell_p\to \ell_q} = \norm{By}_{\ell_q} / \norm{y}_{\ell_p}$. Then 
    \begin{align*}
        \norm{\Phi_B}_{S_p\to S_q} \geq \frac{\norm{\Phi_B(\mathrm{diag}(y))}_{S_q} }{ \norm{\mathrm{diag}(y)}_{S_p} } = \frac{ \norm{\mathrm{diag}(By)}_{S_q} }{ \norm{y}_{\ell_p} } = \frac{ \norm{By}_{\ell_q} }{ \norm{y}_{\ell_p} } = \norm{B}_{\ell_p\to \ell_q} .
    \end{align*}
    This proves the first part.

    The cut norm of a matrix takes the supremum over $x,y\in \{0,1\}^n$. Instead we can relax this to $x,y \in [0,1]^n$, since by linearity the supremum will always be attained by the extreme points. Similarly, for the superoperator case, we use Equation~\eqref{eq:nccutnormrelaxed}. Then, there exist $x,y\in [0,1]^n$ such that $\norm{B}_\mathrm{cut} = |\langle Bx, y\rangle|$. We have $\mathrm{diag}(x),\mathrm{diag}(y)\succeq 0$ and $\norm{\mathrm{diag}(x)}_{S_\infty},\norm{\mathrm{diag}(y)}_{S_\infty} \leq 1$. Therefore
    \begin{align*}
        \norm{\Phi_B}_\mathrm{cut} \geq |\langle \mathrm{diag}(y), \Phi_B(\mathrm{diag}(x)) \rangle| = |\langle \mathrm{diag}(y), \mathrm{diag}(Bx)\rangle| = |\langle y, Bx\rangle| = \norm{B}_\mathrm{cut} .
    \end{align*}
    In the other direction, let $X,Y\in M_n(\C)$ such that $X,Y\succeq 0$ and $\norm{X}_{S_\infty} , \norm{Y}_{S_\infty} \leq 1$. Define $x,y$ to be the diagonals of $X,Y$, i.e. $x_i = X_{ii}$ and $y_i = Y_{ii}$. By Proposition~\ref{prop:diagonalnorm} we have $\norm{x}_{\ell_\infty},\norm{y}_{\ell_\infty}\leq 1$. Since $X,Y\succeq 0$ we know all diagonal entries of $X$ and $Y$ are real and non-negative, so we have $x,y\in [0,1]^n$. We conclude
    \begin{align*}
        \norm{B}_\mathrm{cut} \geq |\langle y, Bx \rangle| = |\langle \mathrm{diag}(y), \mathrm{diag}(Bx) \rangle| = |\langle Y, \Phi_B(X) \rangle| = \norm{\Phi_B}_\mathrm{cut},
    \end{align*}
    completing the proof.
\end{proof}
Note that $\norm{A - \frac{1}{n}J}_{\ell_2\to \ell_2}$ is the second largest eigenvalue in absolute value of the matrix $A$, so spectral expansion is preserved under the embedding of graphs into quantum channels. Also, uniformity is preserved since the cut-norm does not change.

The following proposition shows that the embedding~\eqref{def:embedding} preserves transitivity.
This shows that our Theorem~\ref{thm:ceml} generalizes the main result of~\cite{Conlon2017}, albeit with a slightly worse constant.

\begin{restatable}{proposition}{transitivegeneralization} \label{prop:transitivegeneralization}
    For any $A\in M_n(\C)$, $A$ is vertex transitive if and only if $\Phi_A$ is irreducibly covariant.
\end{restatable}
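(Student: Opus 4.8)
The plan is to prove both directions of the equivalence by relating the permutation symmetry of $A$ on the vertex set $[n]$ to a representation-theoretic symmetry of $\Phi_A$ on $M_n(\C)$. The natural group to work with is a vertex-transitive subgroup $\Gamma \leq S_n$ (for the forward direction) or the group $\Gamma$ witnessing irreducible covariance of $\Phi_A$ (for the converse). The key observation is that the embedding $\Phi_A(X) = \sum_{i,j} A_{ij} X_{jj} E_{ii}$ intertwines the conjugation action of a permutation matrix $P_\pi$ with the action of $\pi$ on the diagonal: one checks directly that $\Phi_A(P_\pi X P_\pi^*)$ has diagonal entries $(A P_\pi^{\mathrm{diag}} x)$ where $x$ is the diagonal of $X$, so that $\Phi_A$ is covariant with respect to the representation $g \mapsto P_{\pi(g)}$ on both sides precisely when $A$ commutes with all $P_\pi$, $\pi \in \Gamma$. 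The subtlety is that the defining (permutation) representation of $S_n$ on $\C^n$ is \emph{never} irreducible — it always fixes the all-ones vector — so I cannot use it directly in \Cref{def:IC}.

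For the forward direction ($A$ vertex transitive $\Rightarrow$ $\Phi_A$ irreducibly covariant), I would therefore look for a \emph{different} compact group and a genuinely irreducible representation. The honest route is: first reduce to $\Phi_A - \Pi = \Phi_B$ with $B = A - \frac1n J$, which kills the trivial subrepresentation; but this is cosmetic since we need covariance of $\Phi_A$ itself, and $\Phi_A$ and $\Pi$ are separately covariant so this doesn't change the group. The real work is producing irreducible $U, V$. I expect the right move is to enlarge the group: take $\Gamma' = \Gamma \ltimes (\text{something})$ or, more likely, to use the group $\Gamma \times T$ where $T$ is a torus acting by diagonal unitaries $\mathrm{Diag}(t_1,\dots,t_n)$, $|t_i|=1$. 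Conjugation by $\mathrm{Diag}(t)$ sends $E_{ij} \mapsto t_i \bar t_j E_{ij}$ and fixes all diagonal matrices, so $\Phi_A$ is covariant under the full group of diagonal unitaries with $U = V = $ this representation. The representation of the torus on $\C^n$ by $t \mapsto \mathrm{Diag}(t)$ is \emph{not} irreducible (it decomposes into the $n$ coordinate lines), but combined with the transitive permutation action of $\Gamma$ it becomes irreducible: the semidirect product $\Gamma \ltimes \T^n$ (generalized permutation matrices with entries a transitive permutation pattern) acts irreducibly on $\C^n$ because any invariant subspace must be a coordinate subspace (by the torus) that is $\Gamma$-invariant (by transitivity), forcing it to be $0$ or $\C^n$. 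So I would take $\Gamma' = \Gamma \ltimes \T^n$ with $U(g,t) = V(g,t) = P_{\pi(g)}\mathrm{Diag}(t)$, verify this is a continuous unitary representation, verify it is irreducible by the argument just sketched, and verify $\Phi_A(U(g,t) X U(g,t)^*) = U(g,t)\Phi_A(X)U(g,t)^*$ using that $A$ commutes with $P_\pi$ and that conjugation by $\mathrm{Diag}(t)$ leaves diagonal matrices invariant while $\Phi_A$ only ever reads off / writes to diagonals.

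For the converse ($\Phi_A$ irreducibly covariant $\Rightarrow$ $A$ vertex transitive), suppose $\Phi_A$ is covariant with respect to $\Gamma$ and irreducible $U,V$. I want to extract a transitive permutation group acting on $[n]$ commuting with $A$. The diagonal subspace $\mathcal{D} \subseteq M_n(\C)$ of diagonal matrices is exactly the range of $\Phi_A$ and $\Phi_A$ vanishes on off-diagonal matrices, so $\Phi_A$ factors through the map $\C^n \to \C^n$, $x \mapsto Ax$ (identifying $\mathcal D \cong \C^n$). Covariance plus Lemma~\ref{lemma:ictransitive} applied to $U$ and $V$ should force the actions $X \mapsto U(g) X U(g)^*$ and $Y \mapsto V(g) Y V(g)^*$ to permute (up to phases) the diagonal matrix units $E_{ii}$ — indeed, the image $\Phi_A(E_{jj}) = \sum_i A_{ij} E_{ii}$ must transform compatibly, and irreducibility rules out degenerate cases. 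More precisely, I'd argue that irreducibility of $V$ forces $\{V(g) E_{ii} V(g)^*\}$ to be a single orbit that spans $\mathcal D$ (otherwise the span of a sub-orbit gives a proper invariant subspace of a related representation), yielding a transitive group of permutations $\sigma_g$ of $[n]$ with $V(g) E_{ii} V(g)^* = E_{\sigma_g(i)\sigma_g(i)}$; similarly permutations $\tau_g$ from $U$. Then covariance of $\Phi_A$ translates into $A_{\sigma_g(i), \tau_g(j)} = A_{ij}$, and transitivity of $\{\sigma_g\}$ (or of $\{\tau_g\}$) together with a diagonal argument shows $A$ is vertex transitive. The main obstacle I anticipate is this converse: carefully justifying that irreducibility forces the conjugation actions to be \emph{monomial} (permutation-plus-phase) on the diagonal algebra and that the resulting permutation group on $[n]$ is transitive, without hand-waving — this likely needs a clean application of Schur's lemma / Lemma~\ref{lemma:ictransitive} to the commutant of $\{V(g)\}$ intersected with $\mathcal D$, plus care that "vertex transitive" for a general weighted $A$ means invariance under a transitive subgroup of $S_n$ acting by simultaneous row/column permutation.
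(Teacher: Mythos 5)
Your forward direction is essentially the paper's proof: the paper also enlarges $\aut(A)$ by the diagonal torus $\mathbb{T}^n$ acting as $\mathrm{Diag}(\alpha)$, checks covariance for both types of generators, and invokes Lemma~\ref{lemma:ictransitive} to get irreducibility of the generated group (your invariant-subspace sketch is an equivalent way to see the same thing). So that half is on track.

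The converse is where you have a genuine gap, and it is not merely a matter of ``carefully justifying'' the monomial claim --- the claim itself is too strong. Irreducibility of $V$, even together with covariance, does \emph{not} force $V(g) E_{ii} V(g)^*$ to again be a diagonal matrix unit (nor a phase multiple of one). Covariance only tells you that $V(g) \Phi_A(X) V(g)^*$ is diagonal whenever $X$ is, and only for matrices in the range of $\Phi_A$; it says nothing about $V(g) E_{ii} V(g)^*$ itself being diagonal. The paper avoids this trap: it never asks $V(g) E_{ii}V(g)^*$ to be diagonal, but instead looks only at its \emph{diagonal entries}, defining $(P_g)_{ij} = |U(g)_{ij}|^2 = \big(U(g)E_{jj}U(g)^*\big)_{ii}$. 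This is automatically a doubly stochastic matrix (since $U(g)$ is unitary), and a short computation using covariance gives $P_g^{T} A P_g = A$. Then Birkhoff's theorem writes $P_g$ as a convex combination of permutation matrices, and --- this is the second ingredient you are missing --- the $\{0,1\}$-valuedness of $A$ forces every permutation matrix in that convex combination to individually satisfy $\Pi^{T} A \Pi = A$, so each lies in $\aut(A)$. Finally, Lemma~\ref{lemma:ictransitive} gives $\E_g (P_g)_{lk}=1/n>0$ for all $k,l$, so some $\Pi\in\aut(A)$ sends $k\mapsto l$, proving transitivity. Your worry about the meaning of ``vertex transitive'' for general weighted $A$ is well founded: the paper's converse genuinely uses that $A$ is an adjacency matrix, and the statement should be read in that setting. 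To fix your proof you would replace the monomial/orbit argument on $\{E_{ii}\}$ by the doubly-stochastic-plus-Birkhoff argument on the diagonals of $U(g)E_{jj}U(g)^*$.
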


\begin{proof}
    Suppose~$A$ is vertex transitive.
    Let $\pi \in\mathrm{Aut}(A)$ be a permutation and $P_\pi \in M_n(\C)$ be the associated permutation matrix, so that $P_\pi A P_\pi^* = A$.
    Then,
    \begin{align*}
        \Phi_A (P_\pi X P_\pi^*)
        &= \sum_{i,j} A_{ij} (P_\pi X P_\pi^*)_{jj} E_{ii} \\
        &= \sum_{i,j} A_{ij} X_{\pi^{-1}(j)\pi^{-1}(j)} E_{ii} \\
        &= \sum_{i,j} A_{i\pi(j)} X_{jj} E_{ii}\\
        &= \sum_{i,j} A_{\pi(i)\pi(j)} X_{jj} E_{\pi(i)\pi(i)} \\
        &= \sum_{i,j} A_{\pi(i)\pi(j)} X_{jj} (P_\pi E_{ii} P_\pi^*) = P_\pi \Phi_A(X) P_\pi^*.
    \end{align*}
    This shows that for all $\pi \in \mathrm{Aut}(A)$ we have $\Phi_A(P_\pi X P_\pi^*)=P_\pi \Phi_A(X) P_\pi^*$.

Let~$\mathbb{T} = \{c\in \C\st |c| = 1\}$ be the complex unit circle.
    For $\alpha\in\mathbb{T}^n$, define $U_\alpha:= \mathrm{diag}(\alpha)$. 
    We have $U_\alpha E_{ii} U_\alpha^* = |\alpha_i|^2 E_{ii} = E_{ii}$ and $(U_\alpha X U_\alpha^*)_{ii} = |\alpha_i|^2 X_{ii} = X_{ii}$. Therefore
    \begin{align*}
        \Phi_A (U_\alpha X U_\alpha^*)
        = \sum_{i,j} A_{ij} (U_\alpha X U_\alpha^*)_{jj} E_{ii}
        = \sum_{i,j} A_{ij} X_{jj} U_\alpha E_{ii} U_\alpha^* =  U_\alpha \Phi_A(X) U_\alpha^* .
    \end{align*}
    We combine these two observations as follows. First we have that
    \begin{align*}
        \left( \E_{\alpha \in \mathbb{T}^n} U_\alpha X U_\alpha^* \right)_{ij}
        = \E_{\alpha \in \mathbb{T}^n} \alpha_i X_{ij} \overline{\alpha_j}
        = \int_{0}^{2\pi} \int_{0}^{2\pi} e^{i\theta_i} X_{ij} e^{-i\theta_j} \; d\theta_i d\theta_j = X_{ii} \delta_{ij}
    \end{align*}
    If $A$ is vertex transitive then for all $x\in\C^n$ we have $\E_{\pi \in \mathrm{Aut}(A)} P_\pi \, \mathrm{diag}(x)\, P_\pi^* = (\E_{i} x_i)\; \id$.
    Therefore
    \begin{align*}
        \E_{\substack{\pi \in \mathrm{Aut}(A)\\ \alpha \in \mathbb{T}^n}} (P_\pi U_\alpha) X (P_\pi U_\alpha)^*
        = \E_{\pi \in \mathrm{Aut}(A)} P_\pi \left(\E_{\alpha \in \mathbb{T}^n} U_\alpha X U_\alpha^* \right) P_\pi^*
        = \frac{\mathrm{Tr}(X)}{n} \id.
    \end{align*}
    Letting $G\subset M_n(\C)$ be the subgroup generated by the $U_\alpha$ and $P_\pi$ for $\pi \in \mathrm{Aut}(A)$, we see that for any $g\in G$ $$\Phi_A(gXg^*)=g\Phi_A(X)g^*$$ and by the previous equation and Lemma~\ref{lemma:ictransitive}, $G$ acts irreducibly on $\C^n$ (and it is unitary).
    This proves $\Phi$ is irreducibly covariant with respect to the group $G$ with equal representations.

    For the other direction, let $U: G \to U(n)$ be the irreducible representation such that $\Phi_A$ is irreducibly covariant, i.e. $\Phi_A(U(g) X U^*(g)) = U(g) \Phi_A (X) U^*(g)$ for all $g\in G$. Define $P_g \in M_n(\C)$ as $(P_g)_{ij} = |U(g)_{ij}|^2$ so that $( U(g) E_{jj} U(g)^* )_{ii} = (P_g)_{ij}$.
    Then
    \begin{align*}
        A_{kl} = \mathrm{Tr}[ E_{kk} \Phi_A(E_{ll}) ]
        &= \mathrm{Tr}[ U(g) E_{kk} U(g)^*  \; \Phi_A( U(g) E_{ll} U(g)^* ) ] \\
        &= \sum_{ij} A_{ij} (P_g)_{jl} (P_g)_{ik} = (P_g^T A P_g)_{kl} ,
    \end{align*}
    showing $P_g^T A P_g = A$. Since $U(g)$ is unitary, $P_g$ is doubly stochastic so by Birkhoff's Theorem~$P_g$ is a convex combination of permutation matrices, i.e., $P_g = \E_{i} \Pi_i$ for some (not necessarily uniform) probability distribution and where $\Pi_i$ is a permutation matrix. We have
    \begin{align*}
        A_{kl} = (P_g^T A P_g)_{kl} = \E_{i} \E_{j} (\Pi_i^T A \Pi_j)_{kl} = \E_{i} \E_{j} A_{\pi_i(k)\:\pi_j(l)} .
    \end{align*}
    Since $A$ is $\{0,1\}$-valued, it follows that if $A_{kl}=1$ then all elements of the convex combination on the right-hand side must be~$1$, and if $A_{kl}=0$ then all elements of the right hand side must be~0. Therefore, for all~$i$ we have $\Pi_i^T A \Pi_i = A$.
    By irreducibility, we have for all $k,l$ that
    \begin{align*}
        \frac{1}{n} = \frac{\mathrm{Tr}[ E_{kk} ]}{n} \id_{ll}
        &= \left( \E_{g\in G} U(g) E_{kk} U^*(g) \right)_{ll}
        = \E_{g\in G} \left\vert U(g)_{lk} \right\vert^2 ,
    \end{align*}
    showing $\E_{g\in G} (P_g)_{lk} = 1/n$. It follows that there is a $g\in G$ such that $(P_g)_{lk}> 0$. Decomposing $P_g$ into permutation matrices shows there is a $\Pi \in \mathrm{Aut}(A)$ such that $\Pi_{lk} = 1$. This holds for all $k,l$, proving the lemma.
\end{proof}

\subsection{Randomizing superoperators} \label{sec:dense}

We prove the following analogue of one of the results from~\cite{Chung1989} showing that for any \mbox{$d$-regular} graph~$G$, it holds that $\lambda(G) \leq \left(2\epsilon(G)/\delta^2\right)^{1/4}$, where $\delta = d/n$ is the edge density. 
This in particular establishes a tight relation between spectral expansion and uniformity for sequences of graphs with \mbox{$\delta_n \geq \Omega(1)$}.
For $A\in M_n(\C)$, we have $\norm{A}_{\ell_1\to \ell_\infty} = n \sup_{ij} |A_{ij}|$, and for an $n$-vertex $d$-regular graph with normalized adjacency matrix~$A$ we have $\sup_{ij} |A_{ij}| = \frac{1}{d}$ so $\norm{A-J/n}_{\ell_1\to \ell_\infty} = \frac{1}{\delta}-1$ with $J$ being the all-ones matrix. Therefore, a sequence of graphs with normalized adjacency matrices $A_n$ is dense exactly when $\norm{A_n-J_n/n}_{\ell_1\to \ell_\infty} \leq \mathcal{O}(1)$, where $J_n$ is the all-ones $n$ by $n$ matrix .

Let $\Pi$ be the projector onto the identity matrix. A superoperator~$\Phi$ is said to be $\eta$-\emph{randomizing} if $\norm{\Phi-\Pi}_{S_1\to S_\infty} \leq \eta$, which when $\eta \leq \mathcal O(1)$, may thus be seen as an analogue of density.
Note that by \Cref{prop:normgeneralization} the embedding of any dense graph is $\mathcal{O}(1)$-randomizing.

\begin{restatable}{proposition}{ncdense} \label{prop:ncdense}
    Let $\Phi : M_n(\C) \to M_n(\C)$ be a superoperator that is $\mathcal O(1)$-randomizing.
    Then, $\lambda(\Phi) \leq \mathcal{O} (\epsilon(\Phi)^{1/4})$.
\end{restatable}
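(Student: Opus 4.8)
The plan is to mimic the classical proof that $\lambda(G)^4 \leq 2\epsilon(G)/\delta^2$, but in the operator setting, using the machinery already developed. The classical argument takes the second-largest eigenvector, squares it to pass from an $\ell_2$-statement to an $\ell_\infty$-controlled cut-like statement, and uses the density bound to convert the $\ell_1\to\ell_\infty$ control into something usable. In our setting, let me write $\Psi = \Phi - \Pi$, so $\lambda(\Phi) = \norm{\Psi}_{S_2\to S_2}$, $\epsilon(\Phi) = \norm{\Psi}_{\mathrm{cut}}$, and $\norm{\Psi}_{S_1\to S_\infty} \leq \mathcal{O}(1)$ by hypothesis. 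The first step is to relate $\norm{\Psi}_{S_2\to S_2}$ to $\norm{\Psi}_{S_\infty\to S_1}$ via interpolation: since $\Psi$ maps $S_1\to S_\infty$ with bounded norm and $S_\infty\to S_1$ with norm $\norm{\Psi}_{S_\infty\to S_1}$, the Riesz--Thoren interpolation theorem for Schatten spaces (interpolating the pair $(S_1\to S_\infty)$ and $(S_\infty\to S_1)$ at the midpoint, where both endpoints land at $S_2\to S_2$) should give
\begin{align*}
\norm{\Psi}_{S_2\to S_2} \leq \norm{\Psi}_{S_1\to S_\infty}^{1/2}\, \norm{\Psi}_{S_\infty\to S_1}^{1/2} \leq \mathcal{O}(1)\cdot \norm{\Psi}_{S_\infty\to S_1}^{1/2}.
\end{align*}
This is the operator analogue of the classical step $\lambda^2 \leq \norm{A-J/n}_{\ell_1\to\ell_\infty}\cdot\norm{A-J/n}_{\ell_\infty\to\ell_1}$, and it is where I expect to need to be careful: the interpolation must be set up with the correct normalized Schatten norms, and one must check the self-adjointness/duality bookkeeping so that the complex interpolation genuinely interpolates to $S_2\to S_2$ at $\theta = 1/2$. (Alternatively, one can prove this directly: write $\norm{\Psi(X)}_{S_2}^2 = \langle \Psi(X), \Psi(X)\rangle$ and bound it by $\norm{\Psi(X)}_{S_1}\norm{\Psi(X)}_{S_\infty}$ using Hölder for Schatten norms, then bound each factor — but the $\norm{\Psi(X)}_{S_1}$ factor wants $X\in S_\infty$ and the $\norm{\Psi(X)}_{S_\infty}$ factor wants $X\in S_1$, and reconciling this for a single $X$ with $\norm{X}_{S_2}\leq 1$ is essentially what interpolation does cleanly.)

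The second step is to combine this with \Cref{lemma:cutinftyone}, which gives $\norm{\Psi}_{S_\infty\to S_1} \leq \pi^2\norm{\Psi}_{\mathrm{cut}} = \pi^2\epsilon(\Phi)$. Chaining the two inequalities yields
\begin{align*}
\lambda(\Phi) = \norm{\Psi}_{S_2\to S_2} \leq \mathcal{O}(1)\cdot \norm{\Psi}_{S_\infty\to S_1}^{1/2} \leq \mathcal{O}(1)\cdot \big(\pi^2\epsilon(\Phi)\big)^{1/2} = \mathcal{O}\big(\epsilon(\Phi)^{1/2}\big),
\end{align*}
which is in fact a \emph{stronger} bound than the claimed $\mathcal{O}(\epsilon(\Phi)^{1/4})$. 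Since $\epsilon(\Phi) = \norm{\Psi}_{\mathrm{cut}} \leq \norm{\Psi}_{S_\infty\to S_1}$ is at most a constant (both $X,Y$ have $S_\infty$-norm $1$ and $\norm{\Psi}_{S_1\to S_\infty}$ is $\mathcal O(1)$, so $\norm{\Psi}_{\mathrm{cut}}$ is bounded too), $\epsilon(\Phi)^{1/2} \leq \epsilon(\Phi)^{1/4}$ up to constants, so the claimed statement follows a fortiori. I would state the proposition with the $1/4$ exponent (to parallel the classical Chung--Graham--Wilson formulation) but remark that the proof actually delivers $1/2$.

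The main obstacle is making the interpolation step rigorous with the paper's normalization conventions: the normalized Schatten norms $\norm{X}_{S_p} = (\frac{1}{n}\Tr[(X^*X)^{p/2}])^{1/p}$ carry dimension-dependent factors, and one must verify these factors cancel correctly so that no hidden power of $n$ creeps into the bound. Concretely, I would invoke the complex interpolation theorem for the scale of (normalized) Schatten classes — noting that $S_2$ is the interpolation space halfway between $S_1$ and $S_\infty$ with matching normalization constant equal to $1$ for the identity — and check that the operator norm $\norm{\cdot}_{S_p\to S_q}$ interpolates accordingly; the self-duality of $S_2$ is what makes both endpoint maps $S_1\to S_\infty$ and $S_\infty\to S_1$ contribute a map $S_2\to S_2$ at the midpoint. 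Everything else is a direct chaining of inequalities already proved in the excerpt.
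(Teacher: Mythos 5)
Your proof is correct, but it takes a genuinely different route from the paper's and in fact delivers a \emph{stronger} conclusion. The paper does not use interpolation at all: its Lemma~\ref{lemma:ncgowers} is a Gowers-style argument, decomposing $X = \frac{1}{n}\sum_i\lambda_i P_i$ and $Y = \frac{1}{n}\sum_j\mu_j Q_j$ into rank-one pieces and applying Cauchy--Schwarz twice to obtain
$\norm{\Phi}_{S_2\to S_2}^4 \leq \norm{\Phi}_{S_1\to S_\infty}^3\norm{\Phi}_{S_\infty\to S_1}$,
which after combining with Lemma~\ref{lemma:cutinftyone} gives the stated $\epsilon^{1/4}$ exponent. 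Your Riesz--Thorin route gives the cleaner
$\norm{\Phi}_{S_2\to S_2}^2 \leq \norm{\Phi}_{S_1\to S_\infty}\norm{\Phi}_{S_\infty\to S_1}$,
hence $\lambda(\Phi) \leq O(\epsilon(\Phi)^{1/2})$, which is strictly better for small $\epsilon$ (and, since $\epsilon(\Phi)$ is bounded under the $O(1)$-randomizing hypothesis, it implies the $\epsilon^{1/4}$ statement as you observe). Your worry about the normalized Schatten norms is legitimate but resolves cleanly: writing $\norm{X}_{S_p} = n^{-1/p}\norm{X}_{\mathcal{S}_p}$ for the unnormalized Schatten norm, one finds $\norm{T}_{S_p\to S_q} = n^{1/p-1/q}\norm{T}_{\mathcal{S}_p\to\mathcal{S}_q}$, so the product $\norm{T}_{S_1\to S_\infty}\norm{T}_{S_\infty\to S_1}$ picks up $n^1\cdot n^{-1}=1$ and the unnormalized interpolation inequality transfers verbatim; the fact that $S_2$ sits exactly halfway in both the source and target couples is what makes the midpoint land on $S_2\to S_2$ with no stray powers of $n$. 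The trade-off between the two approaches: the paper's Cauchy--Schwarz argument is self-contained and elementary, while yours invokes Calder\'on's interpolation theorem for Schatten classes as a black box but yields a tighter exponent. It would be worth recording the improved $\epsilon^{1/2}$ bound, together with a sentence noting that the same interpolation argument also sharpens the commutative dense-graph statement of~\cite{Chung1989} quoted at the start of \Cref{sec:dense}.
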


To prove Proposition~\ref{prop:ncdense}, we require the following lemma.
\begin{restatable}{lemma}{ncgowers} \label{lemma:ncgowers}
    Let $\Phi : M_n(\C) \to M_n(\C)$ be a superoperator and let $C = \norm{\Phi}_{S_1\to S_\infty}$.
    Then we have $\norm{\Phi}_{S_2 \to S_2} \leq \Big( C^3 \norm{\Phi}_{S_\infty \to S_1} \Big)^{1/4}$.
\end{restatable}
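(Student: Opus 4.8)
The plan is to prove a chain of "Cauchy--Schwarz"-type inequalities relating $\norm{\Phi}_{S_2\to S_2}$ to the $S_\infty\to S_1$ norm, losing factors of $C = \norm{\Phi}_{S_1\to S_\infty}$ along the way, mimicking the classical proof that bounds $\lambda(G)$ in terms of $\epsilon(G)$ and the density via a Gowers-type counting argument. First I would fix $X$ with $\norm{X}_{S_2}\le 1$ achieving (up to compactness) $\norm{\Phi(X)}_{S_2} = \norm{\Phi}_{S_2\to S_2}$, and write $\norm{\Phi}_{S_2\to S_2}^2 = \langle \Phi(X),\Phi(X)\rangle = \langle \Phi^*\Phi(X), X\rangle$. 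The idea is that $\Psi := \Phi^*\Phi$ is a positive superoperator, and one wants to iterate: $\norm{\Phi}_{S_2\to S_2}^2 = \norm{\Psi}_{S_2\to S_2}$ reduces the problem to $\Psi$, and $\norm{\Psi}_{S_\infty\to S_1} \le \norm{\Phi}_{S_\infty\to S_1}\cdot\norm{\Phi^*}_{\dots}$-type submultiplicativity lets us cash that in. The cleanest route, though, is probably to avoid $\Phi^*\Phi$ and instead directly interpolate: by the Riesz--Thorin / complex interpolation philosophy for Schatten norms, $\norm{\Phi}_{S_2\to S_2}$ is controlled by a geometric mean of $\norm{\Phi}_{S_1\to S_\infty}$ and $\norm{\Phi}_{S_\infty\to S_1}$ together with one more "bootstrapping" factor.

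Concretely, the key steps I would carry out are: (1) show the submultiplicativity-type bound $\norm{\Phi}_{S_2\to S_2}^2 \le \norm{\Phi}_{S_1\to S_\infty}\,\norm{\Phi}_{S_\infty\to S_1}$ would be too lossy in the wrong direction, so instead I would prove $\norm{\Phi}_{S_2\to S_2} \le \norm{\Phi}_{S_1\to S_2}^{1/2}\norm{\Phi}_{S_\infty\to S_2}^{1/2}$ by interpolation, and then bound each of these; (2) bound $\norm{\Phi}_{S_\infty\to S_2} \le \norm{\Phi}_{S_\infty\to S_1}^{1/2}\norm{\Phi}_{S_\infty\to S_\infty}^{1/2}$ and $\norm{\Phi}_{S_1\to S_2}\le \norm{\Phi}_{S_1\to S_1}^{1/2}\norm{\Phi}_{S_1\to S_\infty}^{1/2}$, again by interpolation in the output slot; (3) bound the "boundary" norms $\norm{\Phi}_{S_\infty\to S_\infty}$ and $\norm{\Phi}_{S_1\to S_1}$ by $C = \norm{\Phi}_{S_1\to S_\infty}$ — this uses that for any superoperator $\norm{\Phi}_{S_p\to S_p}\le \norm{\Phi}_{S_1\to S_\infty}$ for $p\in\{1,\infty\}$, which follows since $\norm{X}_{S_\infty}\le \norm{X}_{S_1}$ (after normalization the $S_1$ norm dominates $S_\infty$) and duality. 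Combining, $\norm{\Phi}_{S_2\to S_2} \le (C\cdot C\cdot C\cdot \norm{\Phi}_{S_\infty\to S_1})^{1/4} = (C^3\norm{\Phi}_{S_\infty\to S_1})^{1/4}$, which is exactly the claimed bound. I would double-check the exponent bookkeeping carefully, since the whole content of the lemma is getting the power $3$ on $C$ and $1/4$ on the product right.

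The main obstacle I anticipate is justifying the interpolation inequalities in the correct normalized Schatten-norm convention used in the paper (the paper normalizes $\norm{\id}_{S_p}=1$, so the usual Riesz--Thorin statements need the $1/n$ factors tracked), and in particular verifying that $\norm{\Phi}_{S_p\to S_q}$ interpolates bilinearly in $(1/p, 1/q)$ in the form I need — i.e. that $\norm{\Phi}_{S_2\to S_2}$ really is dominated by the geometric mean $\norm{\Phi}_{S_1\to S_2}^{1/2}\norm{\Phi}_{S_\infty\to S_2}^{1/2}$ and then the outputs split likewise. An alternative, more hands-on route that sidesteps abstract interpolation is the Gowers-style one: expand $\norm{\Phi}_{S_2\to S_2}^{4}$ or a suitable fourth power as a sum/integral over a "quadrilateral" of inner products $\langle Y_i, \Phi(X_j)\rangle$, apply Cauchy--Schwarz twice to free up two of the four slots into $S_\infty$-balls and bound the other two using $C$, then recognize the surviving sum as $\norm{\Phi}_{S_\infty\to S_1}$; this is closer to the classical proof of $\lambda\le (2\epsilon/\delta^2)^{1/4}$ and I would fall back on it if the interpolation constants misbehave. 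Either way, the only real risk is an off-by-a-power error, so the plan is to commit to one route and verify the exponents against the classical graph statement as a sanity check.
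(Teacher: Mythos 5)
Your interpolation route is correct and gives the stated bound, but it is genuinely different from what the paper does: the paper uses the "Gowers-style" approach that you flag only as a fallback. Concretely, the paper picks extremizers $X = \frac{1}{n}\sum_i\lambda_i P_i$, $Y = \frac{1}{n}\sum_j\mu_j Q_j$ with $P_i,Q_j$ rank-one of unit $S_1$-norm, applies Cauchy--Schwarz twice to $|\langle Y,\Phi(X)\rangle|^4$ to get an average over four indices of products $\langle Q_j,\Phi(P_i)\rangle\langle P_i,\Phi^*(Q_{j'})\rangle\langle Q_{j'},\Phi(P_{i'})\rangle\langle P_{i'},\Phi^*(Q_j)\rangle$, and then bounds three of the four factors by $C$ and the remaining structured piece by $\norm{\Phi}_{S_\infty\to S_1}$. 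This is completely elementary. Your route instead invokes Riesz--Thorin interpolation for Schatten classes (a non-commutative interpolation theorem), which is a heavier hammer but is also valid here: your chain
\begin{align*}
\norm{\Phi}_{S_2\to S_2}\le\norm{\Phi}_{S_1\to S_2}^{1/2}\norm{\Phi}_{S_\infty\to S_2}^{1/2}\le\norm{\Phi}_{S_1\to S_1}^{1/4}\norm{\Phi}_{S_1\to S_\infty}^{1/4}\norm{\Phi}_{S_\infty\to S_\infty}^{1/4}\norm{\Phi}_{S_\infty\to S_1}^{1/4}\le C^{3/4}\norm{\Phi}_{S_\infty\to S_1}^{1/4}
\end{align*}
is sound, the normalization concern you raise is harmless (the $n^{1/p-1/q}$ factors relating normalized to unnormalized $S_p\to S_q$ norms cancel along any interpolation segment), and the boundary bounds $\norm{\Phi}_{S_1\to S_1},\norm{\Phi}_{S_\infty\to S_\infty}\le C$ hold because with the normalized convention $\norm{X}_{S_1}\le\norm{X}_{S_\infty}$ so the $S_1$ ball contains the $S_\infty$ ball.

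One remark worth flagging: you dismissed the one-shot interpolation $\norm{\Phi}_{S_2\to S_2}\le\norm{\Phi}_{S_1\to S_\infty}^{1/2}\norm{\Phi}_{S_\infty\to S_1}^{1/2}$ as "too lossy in the wrong direction," but it is in fact \emph{stronger} than the lemma. With normalized Schatten norms one always has $\norm{\Phi}_{S_\infty\to S_1}\le\norm{\Phi}_{S_1\to S_\infty}=C$ (the $S_\infty$ ball sits inside the $S_1$ ball and $\norm{\cdot}_{S_1}\le\norm{\cdot}_{S_\infty}$), so $(C\norm{\Phi}_{S_\infty\to S_1})^{1/2}\le(C^3\norm{\Phi}_{S_\infty\to S_1})^{1/4}$. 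So if one is willing to invoke Riesz--Thorin, the single interpolation already suffices and yields a sharper inequality; the paper's longer-looking $(C^3\cdot)^{1/4}$ form matches the classical graph statement $\lambda\le(2\epsilon/\delta^2)^{1/4}$ and falls out naturally from the two-step Cauchy--Schwarz, which avoids non-commutative interpolation entirely and keeps the argument self-contained.
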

\begin{proof}
    Note that by definition of $C$ we have $| \langle Q, \Phi(P)\rangle | \leq C \norm{Q}_{S_1} \norm{P}_{S_1}$.
    Let $X,Y \in M_n(\C)$ be such that $\langle Y, \Phi(X)\rangle = \norm{\Phi}_{S_2 \to S_2}$ with $\norm{X}_{S_2} = \norm{Y}_{S_2} = 1$. Write $X = \frac{1}{n}\sum_{i=1}^n \lambda_i P_i$ and $Y = \frac{1}{n}\sum_{i=1}^n \mu_i Q_i$ with $P_i,Q_i$ rank-1 matrices with $\norm{Q_i}_{S_1} = \norm{P_i}_{S_1} = 1$. We have $\norm{\lambda}_{\ell_2} = \norm{\mu}_{\ell_2} = 1$ and by applying Cauchy-Schwarz twice,
\begin{align*}
    |\langle Y, \Phi(X) \rangle|^4
    &= \Big| \E_{ij} \lambda_i \mu_j \langle Q_j , \Phi (P_i) \rangle \Big|^4 \\
    &\leq \Big(\E_i \lambda_i^2 \Big)^{2} \; \Big( \E_{i} \big|\E_{j}\mu_j \langle Q_j , \Phi (P_i) \rangle \big|^2 \Big)^2 \\
    &= \Big( \E_{i,j,j'}\mu_j \mu_{j'} \langle  Q_j , \Phi(P_i) \rangle \langle P_i , \Phi^*(Q_{j'}) \rangle \Big)^2 \\
    &\leq \Big( \E_{j,j'} \mu_j^2 \mu_{j'}^2 \Big) \Big( \E_{j,j'} \Big| \E_i \langle Q_j, \Phi(P_i) \rangle \langle P_i, \Phi^* (Q_{j'}) \rangle \Big|^2 \Big) \\
    &= \E_{i,i',j,j'} \langle Q_j, \Phi (P_i) \rangle \langle P_i, \Phi^*(Q_{j'}) \rangle \langle Q_{j'}, \Phi(P_{i'})\rangle \langle P_{i'}, \Phi^*(Q_j) \rangle ,
\end{align*}
where all indices are averaged from $1$ to $n$. Now we see
\begin{align*}
    |\langle Y, \Phi(X) \rangle|^4
    &\leq \E_{i,j} \langle Q_j, \Phi (P_i)\rangle \Big\langle \E_{j'} \langle Q_{j'}, \Phi (P_i)\rangle Q_{j'}, \Phi \big( \E_{i'} \langle P_{i'}, \Phi^* (Q_j)\rangle P_{i'} \big) \Big\rangle \\
    &\leq \E_{i,j} |\langle Q_j, \Phi (P_i)\rangle| \; \norm{\Phi}_{S_\infty\to S_1} \; \norm{\E_{j'} \langle Q_{j'}, \Phi (P_i)\rangle  Q_{j'} }_{S_\infty} \; \norm{ \E_{i'} \langle P_{i'}, \Phi^* (Q_j)\rangle  P_{i'} }_{S_\infty} \\
    &\leq \E_{i,j} |\langle Q_j, \Phi (P_i)\rangle| \; \norm{\Phi}_{S_\infty\to S_1} \; \max_{j'} |\langle Q_{j'}, \Phi (P_i)\rangle | \; \max_{i'} | \langle Q_j, \Phi (P_{i'})\rangle| \\
    &\leq C^3 \norm{\Phi}_{S_\infty\to S_1}. & \qedhere
\end{align*}
\end{proof}

\begin{proof}[Proof of \Cref{prop:ncdense}]
    Let $\Pi(X) = \frac{1}{n} \mathrm{Tr}[X] \id$ be the projector on to the identity. By assumption, we have $\norm{\Phi-\Pi}_\mathrm{cut} = \epsilon(\Phi)$. Define $C = \norm{\Phi-\Pi}_{S_1\to S_\infty}$. Using Lemma~\ref{lemma:cutinftyone} and Lemma~\ref{lemma:ncgowers} applied to $\Phi-\Pi$ we find $\norm{\Phi-\Pi}_{S_2\to S_2} \leq ( C^3 \pi^2 \epsilon(\Phi) )^{1/4}$.
\end{proof}

\section{Optimality of constants} \label{sec:constants}

\subsection{Commutative case} \label{sec:constantc}

In this section we prove the fourth bullet point in our introduction.
Theorem~\ref{thm:CZ_matrix} shows that~$K_G^\C$ bounds the ratio of the $L_2\to L_2$ and $L_\infty\to L_1$ norms, and Lemma~\ref{lemma:cutinftyone} (the matrix version) shows that~$\pi^2$ bounds the ratio of the $L_\infty\to L_1$ norm and the cut norm. We now prove the optimality of the combined inequality.

Let $S^{m-1} = \{x\in \C^m \st \|x\|_{L_2} = 1\}$ denote the $(m-1)$-dimensional unit sphere endowed with its Haar probability measure~$\mu$.
\begin{theorem} \label{thm:optimalcombined}
    For any $\epsilon>0$ there exist positive integers $m,k$ and a transitive covariant linear map $M:C(S^{m-1}\times[k])\to C(S^{m-1}\times [k])$ such that $\norm{M}_{L_2\to L_2} \geq (\pi^2 K_G^\C - \epsilon) \norm{M}_{\mathrm{cut}}$.
\end{theorem}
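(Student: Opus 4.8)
The goal is to show the combined bound $\norm{M}_{L_2\to L_2} \leq \pi^2 K_G^\C \norm{M}_{\mathrm{cut}}$ from Theorem~\ref{thm:CZ_matrix} and Lemma~\ref{lemma:cutinftyone} is essentially tight, even restricting to transitive covariant maps. The strategy is to build a single example that simultaneously saturates both inequalities, and to do so by combining two extremal objects: (i) a near-optimal matrix $A\in M_k(\C)$ witnessing $\norm{A}_G \approx K_G^\C\norm{A}_{\ell_\infty\to\ell_1}$, and (ii) the circulant-type matrix $A'_{st} = e^{2\pi i (s-t)/m}$ from the proof of Lemma~\ref{lemma:cutinftyone}, for which $\norm{A'}_{\ell_\infty\to\ell_1}/\norm{A'}_{\mathrm{cut}} \to \pi^2$. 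The domain $S^{m-1}\times[k]$ is chosen so that the sphere factor implements the Grothendieck-to-$\ell_\infty\to\ell_1$ gap via spherical harmonics / vector integration, while the $[k]$ factor carries the matrix $A$.

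First I would set up the transitive covariant structure. On $S^{m-1}$ the group $\Gamma_1 = \mathrm{U}(m)$ (or $\mathrm{O}(m)$ over $\R$) acts transitively; on $[k]$ we take $\Gamma_2$ a transitive subgroup of $S_k$ chosen so that $A$ can be taken $\Gamma_2$-covariant — by the reduction in \cite{Conlon2017} (which we may invoke), any near-optimal Grothendieck matrix can be replaced by one coming from a weighted Cayley graph on an abelian group, hence circulant and covariant under a cyclic $\Gamma_2$. So we work with $\Gamma = \Gamma_1\times\Gamma_2$ acting on $S = S^{m-1}\times[k]$. The map $M$ is built as a tensor-like product: on the $[k]$ part it acts by the matrix $A-\tfrac1k J_k$ (shifted so it is "mean-zero", which is harmless since the cut norm and $L_2\to L_2$ norm of interest concern the traceless part), and on the sphere part it acts by the rank-one-ish "kernel" $K(x,y) = \langle x,y\rangle$ — i.e. $(Mf)(x,i) = \sum_j (A)_{ij}\int_{S^{m-1}} \langle x,y\rangle f(y,j)\,d\mu(y)$, possibly with a correction term to handle means. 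This $M$ is manifestly covariant under $\Gamma$ since $\langle x,y\rangle$ is $\mathrm{U}(m)$-invariant and $A$ is $\Gamma_2$-covariant.

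Next, the two norm computations. For $\norm{M}_{L_2\to L_2}$: the kernel $\langle x,y\rangle$ on $L_2(S^{m-1})$ is (up to the normalization $1/m$ coming from $\E_i|x_i|^2 = 1/m$... one must track constants carefully) the orthogonal projection onto the span of coordinate functions $x\mapsto x_\ell$, which is an eigenspace; so the sphere factor contributes a clean multiplicative constant, and $\norm{M}_{L_2\to L_2}$ equals that constant times $\norm{A}_{\ell_2\to\ell_2}$. Choosing $A$ near-optimal for Grothendieck forces $\norm{A}_{\ell_2\to\ell_2}$ large relative to $\norm{A}_{\mathrm{cut}}$... actually the cleanest route is: pick $A$ so that $\norm{A}_{\ell_2\to\ell_2}\geq(K_G^\C-\delta)\norm{A}_{\mathrm{cut}}$ using that the worst case for Theorem~\ref{thm:CZ_matrix} is attained (via \cite{Conlon2017}), and pick $m$ large so the sphere gadget contributes the $\pi^2$ factor. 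For $\norm{M}_{\mathrm{cut}}$: here is the main obstacle. One must show that putting indicator-type test functions on $S^{m-1}\times[k]$ cannot do better than $\pi^{-2}$ times the naive bound — i.e. that the sphere integral $\int\int 1_E(x)\langle x,y\rangle 1_F(y)$ over measurable $E,F\subseteq S^{m-1}$ is bounded, after optimizing, by $(\pi^{-2}+o(1))$ times the $\ell_\infty\to\ell_1$-type value. This is exactly the content of the "one can show $\norm{A'}_{\mathrm{cut}} = (\pi^{-2}+o(1))n$" claim in Lemma~\ref{lemma:cutinftyone}, transported from the discrete circulant to the continuous sphere; the proof goes by a random-hyperplane / random-phase rounding argument showing halfspaces $\{x : \Re(\langle x,w\rangle)\geq 0\}$ are near-optimal cuts, combined with the explicit evaluation $\int_{-\pi/2}^{\pi/2}\cos\theta\,d\theta = 2$ that already appears in the excerpt. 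I expect verifying this cut-norm lower bound for the combined kernel — checking that the sphere and $[k]$ roundings can be performed simultaneously without loss — to be the technically heaviest step, but it should follow by the same averaging argument applied on each factor independently and then multiplying. Finally, assembling: $\norm{M}_{L_2\to L_2}/\norm{M}_{\mathrm{cut}} \geq (\text{sphere factor ratio})\cdot(K_G^\C-\delta) \to \pi^2 K_G^\C - \epsilon$ for $m$ large and $\delta$ small, completing the proof.
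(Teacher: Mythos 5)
Your plan swaps the roles that the two factors play compared with the paper's proof, and the swap introduces a gap that I do not think you can close as stated. In the paper, the \emph{sphere} factor carries the Grothendieck gap (Lemma~\ref{lemma:optimalKG}) and the \emph{discrete} factor $[k]$ carries the $\pi^2$ gap (Lemma~\ref{lemma:pilift}). You propose the reverse: take a near-optimal Grothendieck matrix $A$, make it live directly on $[k]$ by assuming it can be chosen circulant (i.e.\ covariant under a cyclic $\Gamma_2$), and let the sphere kernel $\langle x,y\rangle$ contribute the $\pi^2$.

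The step ``by the reduction in~\cite{Conlon2017}, any near-optimal Grothendieck matrix can be replaced by one coming from a weighted Cayley graph on an abelian group'' is not a consequence of~\cite{Conlon2017} and, as far as I know, is not true. Conlon--Zhao prove the inequality $\norm{A}_{L_2\to L_2}\leq K_G^\C\norm{A}_{L_\infty\to L_1}$ \emph{for} Cayley/transitive-covariant matrices; they do not show that an arbitrary Grothendieck-extremal matrix can be symmetrized to a circulant one without degrading the ratio. Producing a transitive covariant object that nearly attains $K_G^\C$ is precisely the nontrivial content of Lemma~\ref{lemma:optimalKG}, and the device used there is essential: one takes the Grothendieck-optimal vectors $x^i,y^j$ of an arbitrary $A$ and averages over $U(2n)$ to define $\langle f, B(g)\rangle = \frac{1}{n}\sum_{i,j}A_{ij}\int_{U(2n)}f(Ux^i)g(Uy^j)\,dU$; this is $U(2n)$-covariant by construction, has $L_\infty\to L_1$ norm bounded by that of $A$, and an explicit choice of coordinate functions $f_i(x)=x_i$ recovers $\norm{A}_G$ from below. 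Your inner-product kernel $(Mf)(x,i)=\sum_j A_{ij}\int\langle x,y\rangle f(y,j)\,d\mu(y)$ is a different object: the sphere part is a fixed rank-$m$ projection and does not encode the optimizing vectors of $A$, so it cannot replace the averaging construction. Without a transitive-covariant substitute for $A$ on the $[k]$ factor, your map $M$ is not covariant.

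Two further points, less central but still unaddressed. First, you need the sphere kernel alone to exhibit a $\pi^2$ gap between $\norm{\cdot}_{L_\infty\to L_1}$ and $\norm{\cdot}_\mathrm{cut}$; this is clear for the complex circle $m=1$ (it \emph{is} the circulant example in continuous disguise), but you have not verified it for the $m>1$ spheres you would actually need, and there is no reason the answer should be dimension-independent. The paper avoids this by keeping the $\pi^2$ step on the discrete factor and proving the elementary Lemma~\ref{lemma:pibound} about roots of unity. Second, even granting both single-factor gaps, the cut norm is not automatically multiplicative under your tensor-like pairing, so ``apply the averaging argument on each factor independently and then multiply'' needs an argument; the paper handles the analogous issue in Lemma~\ref{lemma:pilift} by explicitly regrouping $\langle g, M(f)\rangle$ as $\frac{1}{k^2}\big\langle \sum_i\omega^i g^i,\, B(\sum_j\omega^j f^j)\big\rangle$ and then bounding the $L_\infty$ norm of the inner phase-weighted sum, rather than invoking any multiplicativity.
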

The optimality of $\pi^2$ between the $L_\infty\to L_1$ norm and the cut norm is already covered in Lemma~\ref{lemma:cutinftyone}. We show that $K_G^\C$ is optimal in the sense that Theorem~\ref{thm:CZ_matrix} cannot be improved (despite the fact that the exact value of the Grothendieck constant~$K_G^\C$ is unknown). We do this in Lemma~\ref{lemma:optimalKG} below. Then in \Cref{lemma:pilift} we show that any map can be lifted to one on a bigger space with appropriately bounded cut norm. The combination of these lemmas proves our theorem.

In the introduction we also mentioned the optimal constant $4 K_G$ in the case where the field is $\R$ instead of $\C$. The proofs below still apply in this case, with only small modifications.

\begin{lemma} \label{lemma:optimalKG}
    For any $\epsilon > 0$ there exists a positive integer $m$ and a transitive covariant linear map $B:C(S^{m-1}) \to C(S^{m-1})$ such that $\|B\|_{L_2\to L_2} \geq (K_G^\C - \epsilon)\|B\|_{L_\infty\to L_1}$.
\end{lemma}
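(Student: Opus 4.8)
The plan is to build the desired map $B$ directly from a near-optimal witness for the complex Grothendieck constant, using the sphere $S^{m-1}$ as a ``universal'' domain on which the unit vectors appearing in the Grothendieck norm can be encoded, and then symmetrizing over the unitary group acting on $\C^m$ to obtain transitive covariance without changing the relevant norms too much.

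First I would recall the bilinear formulation: there is a finite matrix $A \in M_N(\C)$ with $\|A\|_G \geq (K_G^\C - \epsilon/2)\|A\|_{\ell_\infty \to \ell_1}$, witnessed by unit vectors $x_1,\dots,x_N, y_1,\dots,y_N$ in some $\C^d$; by padding we may take $d = m$ for any large enough $m$. The idea is to define a kernel $b \colon S^{m-1}\times S^{m-1} \to \C$ that is \emph{covariant} under the diagonal action of $\mathrm{U}(m)$ on $S^{m-1}\times S^{m-1}$ — i.e.\ $b(Ux,Uy) = b(x,y)$ for all $U\in\mathrm{U}(m)$ — and such that the associated integral operator $B$ on $C(S^{m-1})$ has $\|B\|_{L_2\to L_2}$ large while $\|B\|_{L_\infty\to L_1}$ is small. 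A natural choice is $b(x,y) = \varphi(\langle x,y\rangle)$ for a suitable function $\varphi\colon \overline{\mathbb{D}}\to\C$, since such kernels are automatically $\mathrm{U}(m)$-invariant and $\mathrm{U}(m)$ acts transitively on $S^{m-1}$, giving transitive covariance for free. Then $\|B\|_{L_\infty\to L_1} = \sup_{f,g}|\E_{x}\E_{y}\,\overline{g(x)}\,\varphi(\langle x,y\rangle)f(y)|$ over $|f|,|g|\le 1$, and $\|B\|_{L_2\to L_2}$ is the largest absolute eigenvalue, computable via the Funk–Hecke / spherical harmonics decomposition of $\varphi(\langle x,y\rangle)$.

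The mechanism linking this back to $K_G^\C$ is the standard Grothendieck ``rounding'' identity: for the right choice of $\varphi$ (essentially $\varphi(t)$ proportional to $t$, or more precisely the kernel coming from the Gaussian/sign-rounding trick), one has an identity of the form $\langle u,v\rangle = c_m\,\E_{w\in S^{m-1}}\,\mathrm{sign}\langle u,w\rangle\,\overline{\mathrm{sign}\langle v,w\rangle}$ (with $c_m \to$ the relevant constant as $m\to\infty$), which is exactly what is used to prove $\|A\|_G \le K_G^\C\|A\|_{\ell_\infty\to\ell_1}$. Running this backwards: place the witness vectors $x_i,y_j$ and weights $A_{ij}$ into test functions $f,g$ on $S^{m-1}$ supported near the directions $x_i,y_j$ (or, cleanly, pass to the product space $S^{m-1}\times[k]$ with $k=N$ and let the $[k]$ coordinate index which vector, which is precisely why the theorem is stated over $S^{m-1}\times[k]$ rather than $S^{m-1}$ — but Lemma~\ref{lemma:optimalKG} itself, being over $S^{m-1}$ alone, should be provable by choosing $\varphi$ so that the operator norm picks out the Grothendieck-type extremal behaviour in the large-$m$ limit). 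One shows $\|B\|_{L_2\to L_2}\ge (K_G^\C-\epsilon)\|B\|_{L_\infty\to L_1}$ by matching the two sides against the Grothendieck bilinear form in the limit $m\to\infty$, using that the $L_\infty\to L_1$ norm of this kernel converges to $\|A\|_{\ell_\infty\to\ell_1}$-type quantity while the top eigenvalue converges to $\|A\|_G$-type quantity.

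The main obstacle I expect is the \emph{lower} bound on $\|B\|_{L_2\to L_2}$ together with the matching \emph{upper} bound on $\|B\|_{L_\infty\to L_1}$ — i.e.\ showing these two quantities really do separate by the full factor $K_G^\C$ and not something smaller. Upper-bounding $\|B\|_{L_\infty\to L_1}$ of an explicit spherical-convolution kernel is delicate: one must argue that no cleverer test functions $f,g$ beat the Grothendieck value, which is essentially the content of Grothendieck's inequality itself applied in the right direction, so the proof will need to invoke that the sphere-integral rounding is tight — presumably by appealing to the known near-optimal $A$ and checking that the chosen $f,g$ nearly attain the sup. A secondary technical nuisance is controlling the convergence of the constant $c_m$ and of the spherical-harmonic eigenvalues as $m\to\infty$, and making sure the transitive-covariance structure (group $\mathrm{U}(m)$ acting on $S^{m-1}$, lifted to $C(S^{m-1})$ by precomposition as in Section~\ref{sec:czalt}) is set up so that $B$ genuinely commutes with the action. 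I would handle the bookkeeping by working with the $\mathrm{U}(m)$-invariant kernel from the start, so covariance is automatic, and reduce everything to a one-variable analysis of $\varphi$ and its Funk–Hecke coefficients.
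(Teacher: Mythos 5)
Your high-level instinct (build a $\mathrm{U}(m)$-covariant integral operator on the sphere by averaging, so that covariance is automatic) is the right one, and indeed the paper's construction can be read as a $\mathrm{U}(m)$-invariant kernel operator. But the specific way you propose to realize this has a genuine gap: you want the kernel to be a fixed $\varphi(\langle x,y\rangle)$ (``essentially $\varphi(t)$ proportional to $t$'' or coming from a sign-rounding identity), and you put the near-extremal Grothendieck witness $A$ and its vectors $x^i,y^j$ only into the \emph{test functions} $f,g$. That cannot work. The quantities $\|B\|_{L_2\to L_2}$ and $\|B\|_{L_\infty\to L_1}$ are properties of $B$ alone; test functions can only lower-bound a norm, never improve the ratio. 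A fixed kernel like $\varphi(t)=t$ yields an operator whose two norms, and hence their ratio, are explicit universal constants independent of any witness --- you will not get $K_G^\C$ out of it. Since the Grothendieck constant is only approached and never attained by any single finite matrix, the operator $B$ itself must vary with $\epsilon$ and must encode a near-extremal $A$, not merely have $A$ appear in the choice of $f,g$.

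The paper resolves exactly this by defining $B$ \emph{through its bilinear form in terms of $A$}: picking unit vectors $x^1,\dots,x^n,y^1,\dots,y^n\in S^{2n-1}$ that nearly attain $\|A\|_G$ and setting
\begin{align*}
\langle f, B(g)\rangle \;=\; \frac{1}{n}\sum_{i,j} A_{ij}\int_{U(2n)} f(Ux^i)\,g(Uy^j)\,dU .
\end{align*}
This is $\mathrm{U}(2n)$-covariant by a change of variable $U\mapsto VU$, the $L_\infty\to L_1$ norm is $\leq \|A\|_{\ell_\infty\to\ell_1}$ by a one-line triangle-inequality/sup argument inside the integral, and the $L_2\to L_2$ norm is $\geq \|A\|_G$ by testing against the coordinate functions $f_i(x)=x_i$ and using $\E_i\|f_i\|_{L_2}^2=1$. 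Note that the resulting kernel \emph{is} a function (in fact a measure) of $\langle x,y\rangle$ alone, but it is a linear combination of singular measures concentrated where $\langle x,y\rangle$ equals one of the finitely many values $\langle x^i,y^j\rangle$; it is not a smooth $\varphi$ and certainly not $\varphi(t)\propto t$. Your Funk--Hecke plan would in principle let you recompute the $L_2\to L_2$ norm of such a kernel, but it gives you no leverage on the $L_\infty\to L_1$ norm, which you yourself flag as the ``main obstacle'' and then leave unresolved. To repair your proposal you would essentially have to rediscover the bilinear-form averaging above; as written, the key step is missing.
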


\begin{proof}
    By definition of the Grothendieck constant, for any $\epsilon > 0$ there exists an $n\in \N$ and a linear map $A \in M_n(\C)$ such that $\norm{A}_{\mathrm{G}} \geq (K_G^\C - \epsilon) \norm{A}_{L_\infty \to L_1}$.
    This map $A$ might not be transitive covariant, so from it we will now construct a transitive covariant linear map $B:C(S^{2n-1}) \to C(S^{2n-1})$ such that  $\norm{B}_{\ell_\infty\to \ell_1} \leq \norm{A}_{\ell_\infty \to \ell_1}$ and $\norm{B}_{\ell_2\to \ell_2} \geq \norm{A}_\mathrm{G}$.
This idea is based on a lemma found in~\cite{Briet2011}.


    Let $x^i , y^j \in S^{2n-1}$ be the vectors that attain the Grothendieck norm for $A$, which can always be assumed to be $2n$-dimensional since there are only~$2n$ of them, so
    \begin{align*}
        \norm{A}_G = \Big|\frac{1}{n} \sum_{i,j} A_{ij} \langle x^{i} , y^{j} \rangle\Big|.
    \end{align*}
    Define the map $B$ by
    \begin{align*}
        \langle f, B(g)\rangle = \frac{1}{n} \sum_{i,j} A_{ij} \int_{U(2n)} f( U x^{i} ) g(U y^{j}) dU.
    \end{align*}
    To bound $\norm{B}_{\ell_\infty\to \ell_1}$ we have to bound $|\langle f, B(g)\rangle|$ for $f,g : S^{2n-1} \to [-1,1]$.
    By the triangle inequality,
    \begin{align*}
        |\langle f, B(g)\rangle|
        &\leq \int_{U(2n)} \Big| \frac{1}{n} \sum_{i,j} A_{ij}  f( U x^{i} ) g(U y^{j}) \Big| dU 
        \leq \int_{U(2n)} \norm{A}_{\ell_\infty\to \ell_1}
          dU \leq \norm{A}_{\ell_\infty\to \ell_1} .
    \end{align*}
    Now for each $i\in[2n]$ let $f_i\in C(S^{2n-1})$ be given by $f_i(x) = x_i$ (i.e. the $i$-th coordinate). Then,
    \begin{align*}
        \frac{1}{2n} \sum_{i=1}^{2n}\langle f_i,B(f_i)\rangle
        &\leq  \frac{1}{2n} \sum_{i=1}^{2n}\|B\|_{L_2\to L_2} \|f_i\|_{L_2}^2\\
        &= \|B\|_{L_2\to L_2}\int_{S^{2n-1}} \frac{1}{2n}\sum_{i=1}^{2n}x_i^2 d\mu(x)\\
        &= \|B\|_{L_2\to L_2}.
    \end{align*}
    On the other hand,
    \begin{align*}
        \frac{1}{2n} \sum_{i=1}^{2n}\langle f_i,B(f_i)\rangle
        = \frac{1}{n} \sum_{i,j} A_{ij} \int_{U(2n)} \langle U x^{i} , U y^{j} \rangle dU
        = \frac{1}{n} \sum_{i,j} A_{ij} \langle x^{i} , y^{j} \rangle = \norm{A}_G ,
    \end{align*}
    so we conclude $\norm{B}_{L_2\to L_2} \geq \norm{A}_G$.
    We will show $B$ is transitive covariant with respect to $\Gamma = U(2n)$. To show $B$ is invariant, we have to prove that for all $V\in U(2n)$ we have $\langle f^V , B(g^V)\rangle = \langle f, B(g)\rangle$. Indeed,
    \begin{align*}
        \langle f^V , B(g^V)\rangle
        &= \frac{1}{n} \sum_{i,j} A_{ij} \int_{U(2n)} f( V U x^{i} ) g( V U y^{j}) dU \\
        &= \frac{1}{n} \sum_{i,j} A_{ij} \int_{U(2n)} f( U' x^{i} ) g( U' y^{j}) dU' = \langle f, B(g)\rangle ,
    \end{align*}
    which completes the proof.
\end{proof}

\begin{lemma} \label{lemma:pilift}
    Let $S$ be any compact set and let $B : C(S) \to C(S)$ be a linear map. For any $\epsilon>0$ there exists a $k\in \N$ and a linear map $M : C(S\times [k])\to C(S\times[k])$ such that
    \begin{align*}
        \frac{\norm{M}_\mathrm{cut}}{\norm{M}_{L_2\to L_2}} \leq \big( \frac{1}{\pi^2} + \epsilon \big) \frac{\norm{B}_{L_\infty \to L_1}}{\norm{B}_{L_2 \to L_2}}
    \end{align*}
    and if $B$ is transitive covariant then so is $M$.
\end{lemma}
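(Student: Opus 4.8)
The plan is to mimic, in the "commutative / function-space" setting, the construction that Conlon and Zhao use to show the cut norm can be made small relative to the $L_\infty\to L_1$ norm by passing to a larger index set. Concretely, I would take $k$ large and define $M : C(S\times[k])\to C(S\times[k])$ by tensoring $B$ with a fixed $k\times k$ matrix $D$ that has $\|D\|_{L_\infty\to L_1}$ of order $1$ but whose cut norm is smaller by a factor close to $\pi^2$; the natural choice is the matrix $D_{st}=e^{2\pi i(s-t)/k}$ from \Cref{lemma:cutinftyone}, which satisfies $\|D\|_{\ell_\infty\to\ell_1}=k$ and $\|D\|_{\mathrm{cut}}=(\pi^{-2}+o(1))k$ as $k\to\infty$. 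Writing a function on $S\times[k]$ as $f(s,t)$, one sets $(Mf)(s,t)=\sum_{t'\in[k]}D_{tt'}\,(B f(\cdot,t'))(s)$, i.e.\ $M = B\otimes D$ in the obvious sense, normalized so the inner product on $C(S\times[k])$ is $\langle f,g\rangle=\E_{t\in[k]}\langle f(\cdot,t),g(\cdot,t)\rangle_{C(S)}$.

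The key steps, in order: (1) Identify $C(S\times[k])$ with $C(S)^{\oplus k}$ and check that under this identification $M$ acts as $B\otimes D$ on the appropriate tensor/direct-sum structure; the $L_2\to L_2$ norm then factorizes as $\|M\|_{L_2\to L_2}=\|B\|_{L_2\to L_2}\,\|D\|_{\ell_2\to\ell_2}$. Since $D/\sqrt{k}$ (up to the normalization of the $\ell_2$ norm) is unitary --- its rows are orthonormal characters --- we get $\|D\|_{\ell_2\to\ell_2}=1$ in the normalized convention, so $\|M\|_{L_2\to L_2}=\|B\|_{L_2\to L_2}$. (2) Bound the cut norm from above: for $X,Y$ projectors (indicator functions) on $S\times[k]$, write $|\langle Y,M(X)\rangle|$ and pull the $[k]$-part out, using that for \emph{any} fixed $s$-dependent data the inner sum over $t,t'$ is controlled by $\|D\|_{\mathrm{cut}}\cdot\|B\|_{L_\infty\to L_1}$ --- here one uses that the relevant $C(S)$-valued quantities have $L_\infty$-norm at most $1$ (because $X,Y$ are $\{0,1\}$-valued) so that $B$ contributes at most its $L_\infty\to L_1$ norm pointwise in $t,t'$, and then the $t,t'$ averaging is exactly a cut-norm-type expression for $D$. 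This yields $\|M\|_{\mathrm{cut}}\le (\pi^{-2}+o(1))\,\|B\|_{L_\infty\to L_1}$, and choosing $k$ large enough absorbs the $o(1)$ into $\epsilon$. (3) Transitive covariance: if $B$ is transitive covariant with respect to a group $\Gamma$ acting transitively on $S$, then let $\Gamma'=\Gamma\times\Z/k\Z$ act on $S\times[k]$ by $(\pi,a)\cdot(s,t)=(\pi(s),t+a)$; this action is transitive, and $M$ commutes with it because $B$ is $\Gamma$-covariant and $D$ is a circulant matrix (invariant under the cyclic shift), so $M^{(\pi,a)}=M$ for all $(\pi,a)$.

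The main obstacle is step (2): one must be careful that the cut norm of $M$ is genuinely governed by $\|D\|_{\mathrm{cut}}$ and not just $\|D\|_{\ell_\infty\to\ell_1}$ --- the subtlety is that the "weights" multiplying the entries $D_{tt'}$ are not scalars in $\{0,1\}$ but $C(S)$-valued (namely $s\mapsto (Y(s,t))$ and $s\mapsto (BX(\cdot,t'))(s)$), and one needs that the reduction from $\ell_\infty\to\ell_1$ to cut norm for $D$ still goes through when the test vectors are replaced by these functions. The right way to handle this is to first integrate/average over $s$ and reduce to a genuine $\{0,1\}$ or $[0,1]$-valued cut-norm instance for $D$ composed with the scalar $\|B\|_{L_\infty\to L_1}$ bound applied pointwise; this is exactly the computation Conlon and Zhao carry out in the matrix case, and it transfers verbatim once the tensor structure of $M$ is set up. The remaining steps are routine bookkeeping with the normalized norms.
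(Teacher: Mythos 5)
Your construction is the same as the paper's---the map $M$ you propose, written as $(Mf)(s,j)=\frac{1}{k}\sum_{i}\omega^{i-j}(Bf^i)(s)$ with $\omega=e^{2\pi i/k}$, is exactly what appears in the paper, and your step~(3) on covariance via $\Gamma'=\Gamma\times\Z_k$ matches as well. But step~(2), which you yourself flag as ``the main obstacle,'' is not resolved by what you write, and in fact the bound you propose is not the right one. After integrating out the $S$-variable, the quantity $\langle g,Mf\rangle$ is a bilinear form $\frac{1}{k^2}\sum_{t,t'}D_{tt'}\,a_{tt'}$ with $a_{tt'}=\langle g^t,B(f^{t'})\rangle_S$. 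Each $|a_{tt'}|\le\norm{B}_{L_\infty\to L_1}$, but the matrix $(a_{tt'})$ is \emph{not} of the form $b_t c_{t'}$ with $b,c\in[0,1]^k$; it is an arbitrary matrix with bounded entries. Bounding $\bigl|\sum_{t,t'}D_{tt'}a_{tt'}\bigr|$ by $\sup_{t,t'}|a_{tt'}|$ only gives $\norm{D}_{\ell_\infty\to\ell_1}$, not $\norm{D}_{\mathrm{cut}}$, which loses the crucial factor $\pi^{-2}$. So the claim that ``the reduction transfers verbatim'' is false, and your proposal does not reach the stated bound.

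The argument the paper actually uses exploits that $D$ is \emph{rank one}: since $D_{tt'}=\overline{\omega^t}\,\omega^{t'}$, the entire double sum collapses to a single $S$-inner product,
\begin{align*}
\langle g, Mf\rangle_{S\times[k]} \;=\; \Bigl\langle\, \tfrac{1}{k}\textstyle\sum_t \omega^t g^t,\; B\bigl(\tfrac{1}{k}\textstyle\sum_{t'} \omega^{t'} f^{t'}\bigr)\,\Bigr\rangle_S ,
\end{align*}
and then one applies $\norm{B}_{L_\infty\to L_1}$ once, together with the pointwise estimate (Lemma~\ref{lemma:pibound}) that for $[0,1]$-valued $f,g$ and $k$ large, $\bigl\|\frac{1}{k}\sum_t\omega^t g^t\bigr\|_{L_\infty}\le \frac{1}{\pi}+o(1)$ and likewise for $f$. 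It is this separate pointwise Fourier bound---not a cut-norm bound for $D$---that supplies the factor $(\pi^{-2}+\epsilon)$. You did not identify this mechanism; that is the genuine gap. A minor additional slip: $D_{st}=\omega^{s-t}$ is not a scalar multiple of a unitary (it is rank one, in fact a multiple of a Hermitian idempotent), so your reason for $\norm{D}_{\ell_2\to\ell_2}=1$ is wrong, though the paper sidesteps this entirely by just exhibiting test functions $f(s,i)=\omega^{-i}u(s)$, $g(s,i)=\omega^{-i}v(s)$ to get the lower bound $\norm{M}_{L_2\to L_2}\ge\norm{B}_{L_2\to L_2}$, which is all that is needed.
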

\begin{proof}
    We will choose $k$ large enough, to be determined later.
    For any $f,g \in C(S \times [k])$ define $f^i \in C(S)$ as $f^i(s) := f(s,i)$, and similar for $g^i$.    
    Define $\omega = e^{2\pi i / k}$.
    Define a linear map $M : C(S\times [k]) \to C(S\times [k])$ as
    \begin{align*}
        \big( M(f) \big)(t,j) := \frac{1}{k} \sum_{i=1}^k \omega^{i-j} B(f^i)(t) , \quad \text{for }t\in S\text{ and } j \in [k] .
    \end{align*}
    We then have
    \begin{align*} 
        \langle g, M (f)\rangle_{S\times [k]}
        &= \frac{1}{k^2} \Big\langle \sum_i \omega^i g^i, B \big( \sum_j \omega^j f^j \big) \Big\rangle_S
    \end{align*}
    where one factor of $\frac{1}{k}$ comes from our normalization of the inner product. This implies
    \begin{align} \label{eq:splitnorm}
        \big|\langle g, M (f)\rangle_{S\times[k]}\big|
        &\leq \norm{B}_{L_\infty \to L_1}
        \Big\Vert \frac{1}{k} \sum_{i=1}^k \omega^i g^i \Big\Vert_{L_\infty}
        \Big\Vert \frac{1}{k} \sum_{j=1}^k \omega^j f^j \Big\Vert_{L_\infty} .
    \end{align}
    If $f,g\in C(S\times[k])$ are the $[0,1]$-valued functions that attain the cut norm of~$M$, then by \eqref{eq:splitnorm}
    \begin{align*}
        \norm{M}_\mathrm{cut} &\leq \Big(\frac{1}{\pi^2} + \epsilon \Big) \norm{B}_{L_\infty \to L_1} ,
    \end{align*}
    where we used \Cref{lemma:pibound} to bound $\Big\Vert \frac{1}{k} \sum_{i=1}^k \omega^i g^i \Big\Vert_{L_\infty}$.

    Let $u,v \in C(S)$ with $\norm{u}_{L_2} = \norm{v}_{L_2} = 1$ be such that $\norm{B}_{L_2\to L_2} = \langle v , B(u)\rangle_S$.
    Now define $f_{(u)},g_{(v)} \in C(S\times[k])$ as $f_{(u)}(s,i) := \omega^{-i} u(s)$ and $g_{(v)}(s,i) :=  \omega^{-i} v(s)$, which also have $L_2$-norm equal to 1. We then see
    \begin{align*}
        \norm{M}_{L_2\to L_2} \geq
        \big\langle g_{(v)}, M (f_{(u)}) \big\rangle_{S\times[k]} = \langle v, B (u) \rangle_S = \norm{B}_{L_2\to L_2} .
    \end{align*}
    The combination of these observations completes the first part of the proof.
    Now assume~$B$ is transitive covariant with respect to~$\Gamma$, so $B(f^\pi)(\pi^{-1}(s)) = B(f)(s)$ for all $s\in S$ and $\pi \in \Gamma$. Define a new group $\Gamma'$ as the cartesian product $\Gamma' = \Gamma \times \mathbb{Z}_k$. For $(\pi,m)\in \Gamma'$ define the action $(\pi,m) : S\times [k] \to S\times [k]$ as $(\pi,m)(s,i) = (\pi(s), i+m)$. By entering $f^{(\pi,m)}$ into the definition of~$M$ it follows that $M^{(\pi,m)} = M$, so~$M$ is transitive covariant with respect to~$\Gamma'$, completing the proof.
\end{proof}

\begin{lemma} \label{lemma:pibound}
    Let $\epsilon > 0$, then there exists a $k_0 \in \N$ such that
    for all $k\geq k_0$ and $x \in [0,1]^k$ we have
    \begin{align*}
        \Big\vert \frac{1}{k} \sum_{j=1}^k e^{2\pi i \, j/k} x_j \Big\vert \leq \frac{1}{\pi} + \epsilon .
    \end{align*}
\end{lemma}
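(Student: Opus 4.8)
\textbf{Proof plan for Lemma~\ref{lemma:pibound}.}

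The statement is a continuous/averaged version of the elementary fact that $\frac1\pi \int_0^\pi \sin\theta\, d\theta = \frac{2}{\pi}$ combined with the observation that a $[0,1]$-valued sequence $x$ can do no better than putting full weight ($x_j=1$) on the indices where $e^{2\pi i j/k}$ points roughly in a single fixed direction and zero weight elsewhere. First I would reduce to that extremal configuration: fix a unit complex number $z$, and note that $\big|\frac1k\sum_j e^{2\pi i j/k} x_j\big| = \frac1k\sum_j \Re(\bar z\, e^{2\pi i j/k})\, x_j$ for the optimal choice of phase $z$; since $0\le x_j\le 1$, this is at most $\frac1k\sum_j \big(\Re(\bar z\, e^{2\pi i j/k})\big)_+$, the sum of the positive parts. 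So it suffices to show $\sup_{|z|=1}\frac1k\sum_{j=1}^k \big(\Re(\bar z\, e^{2\pi i j/k})\big)_+ \le \frac1\pi + \epsilon$ for all large $k$.

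Next I would recognize the right-hand side as a Riemann sum. For fixed $z = e^{i\phi}$, the function $\theta \mapsto \big(\cos(\theta-\phi)\big)_+$ is continuous (in fact Lipschitz) on $[0,2\pi]$, so $\frac1k\sum_{j=1}^k \big(\cos(2\pi j/k - \phi)\big)_+ \longrightarrow \frac{1}{2\pi}\int_0^{2\pi}\big(\cos(\theta-\phi)\big)_+\, d\theta$ as $k\to\infty$, and by periodicity this integral equals $\frac{1}{2\pi}\int_{-\pi/2}^{\pi/2}\cos\theta\, d\theta = \frac{1}{2\pi}\cdot 2 = \frac1\pi$, independently of $\phi$. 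The convergence is uniform in $\phi$ because the modulus of continuity of $(\cos(\theta-\phi))_+$ does not depend on $\phi$ and the Riemann-sum error for a $1$-Lipschitz function over a mesh of size $2\pi/k$ is $O(1/k)$; hence there is a $k_0$ with $\frac1k\sum_{j=1}^k \big(\cos(2\pi j/k-\phi)\big)_+ \le \frac1\pi + \epsilon$ for all $k\ge k_0$ and all $\phi$. Combining with the reduction of the first paragraph gives the claim.

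The only mild subtlety — and the single point I would be careful about — is the uniformity of the Riemann-sum estimate over the phase $\phi$ and the fact that we are sampling at the points $2\pi j/k$ rather than at arbitrary partition points; both are handled by the Lipschitz bound $\big|(\cos(\theta-\phi))_+ - (\cos(\theta'-\phi))_+\big|\le|\theta-\theta'|$, which yields $\big|\frac1k\sum_j (\cos(2\pi j/k-\phi))_+ - \frac1{2\pi}\int_0^{2\pi}(\cos(\theta-\phi))_+\,d\theta\big| \le \frac{2\pi}{k}$, and then choosing $k_0 > 2\pi/\epsilon$ suffices. Everything else is the trivial triangle-inequality/positive-part bookkeeping of the first step. (In the real case needed in Section~\ref{sec:constantc} one restricts $z\in\{\pm1\}$, which only makes the supremum smaller, so the same bound holds verbatim.)
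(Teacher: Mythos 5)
Your argument is correct and is essentially the paper's proof: both reduce the bound to a Riemann sum of a cosine-type function over $[0,2\pi]$ whose average is $1/\pi$. The only cosmetic difference is that the paper first symmetrizes $x\in[0,1]^k$ to $y=2x-1\in[-1,1]^k$ using $\sum_{j=1}^k e^{2\pi i j/k}=0$ and then bounds the summand by $\tfrac12\lvert\cos(\cdot)\rvert$, whereas you keep $x\in[0,1]^k$ and bound $x_j\cos(\cdot)$ directly by the positive part $(\cos(\cdot))_+$; both have the same period average $1/\pi$. You are in fact slightly more careful than the paper, which chooses a phase $\phi$ (depending on $k$ and $x$) and then passes to the limit $k\to\infty$ without explicitly noting that the Riemann-sum convergence is uniform in $\phi$ --- the point your Lipschitz bound supplies.
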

\begin{proof}
    First let $k_0$ be arbitrary, to be determined later and $k\geq k_0$.
    Define $y \in [-1,1]^k$ as $y_i = 2 x_i  - 1$, then
    \begin{align*}
        \Big\vert \frac{1}{k} \sum_{j=1}^k e^{2\pi i \, j / k} x_j \Big\vert
        = \frac{1}{2} \Big\vert \frac{1}{k} \sum_{j=1}^k e^{2\pi i\, j/k} y_j \Big\vert
        = \frac{1}{2} e^{2\pi i \phi} \frac{1}{k} \sum_{j=1}^k e^{2\pi i\,j/k} y_j  .
    \end{align*}
    In the first equality we used that $\sum_{j=1}^k e^{2\pi i \, j / k} = 0$.
    In the second equality we used that there exists a $\phi$ such that the full expression becomes real and positive. Since $e^{i \theta} = \cos(\theta) + i \sin(\theta)$ and the full expression is real, we know the $\sin$ component vanishes and therefore
    \begin{align*}
        \frac{1}{2} \frac{1}{k} \sum_{j=1}^k e^{2\pi i (\phi + j/k)} y_j
        &= \frac{1}{2} \frac{1}{k} \sum_{j=1}^k \cos(2\pi (\phi + j/k)) y_j .
    \end{align*}
    Now note that $\cos(2\pi (\phi + j/k)) y_j \leq \big|\cos(2\pi (\phi + j/k))\big|$ and hence
    \begin{align*}
        \frac{1}{2} \frac{1}{k} \sum_{j=1}^k \big| \cos(2\pi (\phi + j/k)) \big|
        \;\; \overset{k\to \infty}{\longrightarrow} \;\;
        \frac{1}{2} \int_{0}^{1} \big| \cos\big(2\pi (\phi + x)\big) \big| dx = \frac{1}{\pi} .
    \end{align*}
    This completes the proof.
\end{proof}

\subsection{Non-commutative case} \label{sec:constantnc}

In the non-commutative case we show optimality of \Cref{thm:ceml}.
By \Cref{lemma:cutinftyone}, the factor~$\pi^2$ between the cut-norm and $S_\infty\to S_1$-norm is also optimal.
In contrast with the commutative case, our work leaves the optimality of the combined inequality in \Cref{cor:ceml} as an open problem. 
Straightforward analogues of the techniques employed in \Cref{lemma:pilift} did not follow through in the non-commutative case.

\begin{restatable}{proposition}{opt_ceml} \label{prop:opt_ceml}
For any $\eps > 0$, there exists a positive integer~$n$ and an irreducibly covariant superoperator~$\Phi:M_n(\C)\to M_n(\C)$ such that $\norm{\Phi}_{S_2\to S_2} \geq (2-\eps)\norm{\Phi}_{S_\infty \to S_1}$.
\end{restatable}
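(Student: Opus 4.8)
The strategy is to produce an irreducibly covariant superoperator that saturates the factor~$2$ in Theorem~\ref{thm:ceml}, and the natural candidate is the extremal example for Haagerup's non-commutative Grothendieck inequality. As announced in the fifth bullet of the introduction, the plan is to take the example of Haagerup and Ito~\cite{Haagerup1995} that shows the constant~$1$ (equivalently the factor~$2$ after the AM-GM step) is optimal for the factorization form of the non-commutative Grothendieck inequality, and then to verify that this particular superoperator is in fact irreducibly covariant. First I would recall the Haagerup--Ito construction: it is built from the antisymmetric-type bilinear form associated to the standard representation of the orthogonal group, realized concretely on $M_n(\C)$ (for suitable~$n$, e.g. taking $\Phi$ essentially of the form $X\mapsto$ a combination of $X$ and $X^T$ or $X^*$, appropriately projected). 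The key algebraic observation is that conjugation $X\mapsto R X R^T$ for $R\in \mathrm{SO}(n)$ commutes with transposition, and more generally with the symmetric/antisymmetric decomposition of $M_n(\C)=\mathrm{Sym}\oplus\mathrm{Alt}$.

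The main steps, in order, are: (1) write down the Haagerup--Ito superoperator~$\Phi$ explicitly and record the two quantities $\norm{\Phi}_{S_2\to S_2}$ and $\norm{\Phi}_{S_\infty\to S_1}$, citing~\cite{Haagerup1995} for the statement that their ratio tends to~$2$ as the dimension grows (so that the claimed $(2-\eps)$ is achieved for $n$ large enough depending on~$\eps$); (2) exhibit a compact group~$\Gamma$ and two continuous unitary representations $U,V:\Gamma\to U(n)$ with respect to which $\Phi$ is covariant in the sense of Definition~\ref{def:IC}, where the natural choice is $\Gamma=\mathrm{SO}(n)$ (or a double cover / a group acting through it) with $U(g)$ the standard representation and $V(g)$ either the same representation or its complex conjugate, depending on how transposition versus adjoint enters~$\Phi$; (3) prove that $U$ and $V$ are \emph{irreducible} — this is where the representation theory of $\mathrm{SO}(n)$ is used, since the standard $n$-dimensional real representation of $\mathrm{SO}(n)$ is irreducible (for $n\geq 3$) and remains irreducible after complexification, so both $U$ and $V$ are irreducible unitary representations on $\C^n$; (4) verify the intertwining identity $\Phi(U(g)XU(g)^*)=V(g)\Phi(X)V(g)^*$ by direct computation, using that $R\in\mathrm{SO}(n)$ satisfies $R^* = R^{-1} = R^T$ and $\overline{R}=R$, so conjugation by $R$ commutes with whatever linear combination of $X\mapsto X$, $X\mapsto X^T$, $X\mapsto \Tr(X)\id/n$ defines~$\Phi$.

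I expect step~(3), confirming irreducibility of the relevant representations of $\mathrm{SO}(n)$ on $\C^n$, to be essentially standard (the defining representation is irreducible for $n\geq 3$, both over $\R$ and after complexification), so the real content is making sure the \emph{same} group element acts on the input and output sides in a way compatible with the structure of the Haagerup--Ito example. The subtle point — and the place where I expect the bookkeeping to be most delicate — is matching conventions: the Haagerup--Ito form may naturally involve the transpose $X^T$ rather than the adjoint $X^*$, and since $M_n(\C)\to M_n(\C)$ in Definition~\ref{def:IC} uses conjugation by unitaries, one must be careful that the chosen $\Phi$ is genuinely of the form $\Phi(X)=aX+bX^T$ composed with the trace-orthogonal projection (or similar), and that $R X^T R^* = (R^* X R)^T = (R^T X \overline{R})^T$ simplifies correctly using $R\in\mathrm{SO}(n,\R)$. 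Once the identity $\Phi(RXR^T)=R\,\Phi(X)\,R^T$ is checked for all $R\in\mathrm{SO}(n)$, irreducible covariance follows with $U=V=$ the defining representation, and combining with the norm estimates from~\cite{Haagerup1995} yields the proposition. Finally, I would remark that, as noted in the sixth bullet, the optimality of the $\pi^2$ factor is already contained in Lemma~\ref{lemma:cutinftyone}, so this proposition completes the claimed individual optimality of both steps.
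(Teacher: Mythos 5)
Your high-level strategy matches the paper's — take the Haagerup--Itoh extremal example for the non-commutative Grothendieck inequality, show it is irreducibly covariant under a representation of a special orthogonal group, and then cite the norm computation from~\cite{Haagerup1995} to get the ratio approaching~$2$. But the details of your plan are substantially off, and the two places where you gloss over the structure are exactly where the real work of the paper's proof lives.

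First, the Haagerup--Itoh superoperator is not of the form $X\mapsto aX+bX^T$ (or $X^*$) on $M_n(\C)$ with $\mathrm{SO}(n)$ acting in the defining representation. It is a map $\Phi:M_d(\C)\to M_d(\C)$ where $d=\binom{2n+1}{n}$, obtained by identifying $M_d(\C)$ with $L(H^{\wedge n},H^{\wedge(n+1)})$ for $H=\C^{2n+1}$ and setting $\Phi(x)=\sum_{i=1}^{2n+1}\langle c_i,x\rangle\,c_i$ where the $c_i$ are fermionic creation operators $H^{\wedge n}\to H^{\wedge(n+1)}$. The relevant group is $\mathrm{SO}(2n+1)$, not $\mathrm{SO}(n)$, and the relevant representations are the exterior powers $R_n,R_{n+1}:\,A\mapsto A^{\otimes n},A^{\otimes(n+1)}$ on $H^{\wedge n},H^{\wedge(n+1)}$, not the standard representation. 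So your step~(4), a direct check that conjugation by $R\in\mathrm{SO}(n)$ commutes with transposition, does not apply; the actual intertwining identity (the paper's equation~\eqref{relation-Phi}) is $\Phi(\pi(A)x\rho^*(A))=\pi(A)\Phi(x)\rho^*(A)$ with $\pi=R_{n+1}$, $\rho=R_n$, and it is verified via the transformation rule $U^{\otimes(n+1)}c_i(U^{\otimes n})^{-1}=\sum_j U_{ji}c_j$.

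Second, and more importantly, even once you have that identity you are not yet done: the left and right actions involve \emph{two different} irreducible representations ($\pi$ on the target $H^{\wedge(n+1)}$ side, $\rho$ on the source $H^{\wedge n}$ side), whereas Definition~\ref{def:IC} requires an identity of the form $\Phi(U(g)XU(g)^*)=V(g)\Phi(X)V(g)^*$ — the \emph{same} unitary multiplies $X$ on the left and on the right. The paper resolves this by invoking Proposition~\ref{prop:equivalence-of-irreps}, that for $N=2n+1$ odd the representations $R_k$ and $R_{N-k}$ of $\mathrm{SO}(N,\R)$ are unitarily equivalent via an isometry $V_k$, and then replaces $\Phi$ by $\Phi'(x)=\Phi(xV^*)V$. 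This $\Phi'$ has the same Schatten norms (unitary invariance), and it does satisfy the required covariance with $U=V=\pi$. Without this equivalence-of-irreps step, your construction does not produce a superoperator that is irreducibly covariant in the sense of the definition, only one covariant with respect to a pair of non-identical representations acting on the two legs asymmetrically. This is the missing idea; your plan as written would stall at this point.
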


One of the forms of the non-commutative Grothendieck inequality, equivalent to Theorem~\ref{thm:grothendieckfactorization}, is the following~\cite{Pisier2012}. Let $\Phi : M_n(\C) \to M_n(\C)$ be a linear map and $x_i, y_j \in M_n(\C)$ finite sets of matrices. Then,
\begin{align} \label{eq:ncgrothendieck}
        \Big| \sum_i \langle x_i, \Phi(y_i)\rangle \Big| \leq K_G' \norm{\Phi}_{S_\infty\to S_1}
        \left( \frac{ \norm{\sum_i x_i^* x_i} + \norm{\sum_i x_i x_i^*} }{2} \!\cdot\!
               \frac{ \norm{\sum_i y_i^* y_i} + \norm{\sum_i y_i y_i^*} }{2} \right)^{1/2}
\end{align}
where $K_G' \leq 2$ and the norms on the right hand side are operator norms $\norm{\cdot}_{S_\infty}$.
To show tightness, i.e. $K_G' \geq 2$, Haagerup and Itoh~\cite{Haagerup1995} (see~\cite{Pisier2012} for a survey) gave an explicit family of operators for which~\eqref{eq:ncgrothendieck} gives a lower bound of $K_G'$ approaching $2$. 
We will show that slight modifications of these operators are irreducibly covariant, which proves Proposition~\ref{prop:opt_ceml}. 
It is instructive to repeat their construction.
The proof uses techniques familiar in the context of the antisymmetric Fock space, but our proof is self contained.

\begin{lemma}[\cite{Haagerup1995}] \label{lemma:pisierconstruction}
    For each $n\in\mathbb{N}$ there exists a $d\in \N$ and a linear map $\Phi : M_{d}(\C) \to M_{d}(\C)$ with sets of matrices $\{x_i\}$, $\{y_i\}$ such that~\eqref{eq:ncgrothendieck} yields $K_G' \geq (2n+1)/(n+1)$.
\end{lemma}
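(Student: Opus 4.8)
The plan is to recall the Haagerup--Ito construction, which lives on the antisymmetric (fermionic) Fock space over $\C^n$. Concretely, I would set $\mathcal{H} = \bigwedge^\bullet \C^n = \bigoplus_{k=0}^n \bigwedge^k \C^n$, so that $d = \dim \mathcal{H} = 2^n$, and let $c_1,\dots,c_n \in M_d(\C)$ be the fermionic creation operators (the wedge maps $v \mapsto e_i \wedge v$), with adjoints $c_i^*$ the annihilation operators. These satisfy the canonical anticommutation relations $c_i c_j^* + c_j^* c_i = \delta_{ij}\,\id$ and $c_i c_j + c_j c_i = 0$. The key algebraic identities I would record are $\sum_i c_i^* c_i = N$ (the number operator, with $\|N\|_{S_\infty} = n$), $\sum_i c_i c_i^* = (n+1)\id - N$ so $\|\sum_i c_i c_i^*\|_{S_\infty} = n+1$, and similarly with the roles swapped. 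Thus for the choice $x_i = y_i = c_i$ each of the four operator norms appearing in \eqref{eq:ncgrothendieck} is at most $n+1$, and the geometric-mean factor on the right-hand side is bounded by $\tfrac{n+(n+1)}{2} = \tfrac{2n+1}{2}$ (after taking the average of the two norms inside each factor).

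Next I would define the linear map $\Phi : M_d(\C) \to M_d(\C)$ that makes the left-hand side of \eqref{eq:ncgrothendieck} large. Following Haagerup--Ito, take $\Phi(Y) = \sum_j c_j^* Y c_j$ (up to normalization), or more precisely choose $\Phi$ so that $\sum_i \langle x_i, \Phi(y_i)\rangle = \sum_{i,j} \langle c_i, c_j^* c_j c_i \rangle$ evaluates, via the anticommutation relations and the normalized trace, to a quantity of order $n$ while $\|\Phi\|_{S_\infty \to S_1}$ stays bounded by roughly $1$. The heart of the computation is: (i) evaluating $\sum_i \langle x_i, \Phi(y_i)\rangle$ using $\frac1d \Tr$ and the CAR to get the numerator $\asymp (2n+1)$ after the right normalization; and (ii) bounding $\|\Phi\|_{S_\infty \to S_1} = \sup\{|\langle B, \Phi(A)\rangle| : \|A\|_{S_\infty},\|B\|_{S_\infty}\le 1\}$ from above by $n+1$, which follows because $\Phi$ is completely positive with $\Phi(\id) = \sum_j c_j^* c_j = N \preceq n\,\id$ together with a Cauchy--Schwarz / operator-monotonicity argument (or directly from $|\langle B,\Phi(A)\rangle| \le \|\Phi^*(B)\|_{S_1}\|A\|_{S_\infty}$ and estimating the diagonal of $\Phi^*(B)$). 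Putting the pieces together, \eqref{eq:ncgrothendieck} forces
\begin{align*}
    K_G' \;\ge\; \frac{\big|\sum_i \langle x_i,\Phi(y_i)\rangle\big|}{\norm{\Phi}_{S_\infty\to S_1}\cdot \tfrac{2n+1}{2}} \;\ge\; \frac{2n+1}{n+1}.
\end{align*}

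I expect the main obstacle to be the bookkeeping in step (ii): getting the upper bound on $\|\Phi\|_{S_\infty\to S_1}$ to come out as exactly $n+1$ (rather than something larger) requires using the fermionic structure carefully, not just crude norm estimates, since a loose bound would degrade the constant below $(2n+1)/(n+1)$ and the lemma would fail to give $K_G' \to 2$. The anticommutation relations make all the traces $\frac1d\Tr(c_{i_1}^{\#}\cdots c_{i_m}^{\#})$ computable in closed form, so once the normalization of $\Phi$ is pinned down the rest is a finite, if slightly intricate, calculation; I would organize it by first proving the two operator identities for $\sum_i c_i^*c_i$ and $\sum_i c_ic_i^*$ as a standalone claim, then feeding them into both the numerator and denominator estimates. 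The self-contained proof the authors promise presumably replaces the Fock-space language with an explicit $2^n \times 2^n$ matrix model (tensor products of Pauli matrices via the Jordan--Wigner transform), which I would use to verify the CAR and the trace identities without invoking second-quantization machinery.
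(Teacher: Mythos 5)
Your sketch shares the fermionic-Fock-space spirit of Haagerup--Ito but the concrete construction you propose does not reproduce the paper's, and the numbers fail. Three specific problems.

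First, the paper does not work on the full Fock space $\bigoplus_k \bigwedge^k \C^n$ of dimension $2^n$. It sets $H=\C^{2n+1}$ and restricts the creation map $c_i\colon x\mapsto e_i\wedge x$ to the single sector $c_i: H^{\wedge n}\to H^{\wedge(n+1)}$, exploiting that $\dim H^{\wedge n}=\dim H^{\wedge(n+1)}=\binom{2n+1}{n}=:d$ so that (after identifying bases) the $c_i$ are genuine elements of $M_d(\C)$. This is what gives the clean identities $\sum_i c_i^*c_i=(n+1)\,\id_{H^{\wedge n}}$ and $\sum_i c_ic_i^*=(n+1)\,\id_{H^{\wedge(n+1)}}$, i.e.\ both sums are \emph{multiples of the identity}, with the same norm $n+1$. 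On the full Fock space over $\C^n$ the CAR give $c_ic_i^*+c_i^*c_i=\id$, hence $\sum_i c_ic_i^*=n\,\id-N$, not $(n+1)\,\id-N$ as you wrote; and more to the point these operators are far from multiples of the identity, which destroys the sharp norm bookkeeping.

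Second, and more seriously, your choice $\Phi(Y)=\sum_j c_j^*Yc_j$ is not the Haagerup--Ito map and does not yield the bound. The paper's map is $\Phi(x)=\sum_{i}\langle c_i,x\rangle\,c_i$, a linear map of rank $\le 2n+1$ (not completely positive). With your CP map and $x_i=y_i=c_i$ on the full Fock space, the numerator is $\sum_i\langle c_i,\Phi(c_i)\rangle=\tfrac1d\sum_{i,j}\Tr(c_i^*c_j^*c_ic_j)=\tfrac1d\Tr(N-N^2)$, which is negative and of order $n^2$, while $\norm{\Phi}_{S_\infty\to S_1}=\tfrac1d\Tr(N)=n/2$ (attained at $A=B=\id$ by complete positivity) and the geometric-mean factor is $n$. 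Plugging in gives a bound of order $\tfrac{n-1}{2n}$, which is less than $1$ and certainly does not tend to $2$. Your step~(ii), hoping that ``$\norm{\Phi}_{S_\infty\to S_1}$ stays bounded by roughly $1$,'' would contradict $\Phi(\id)=N$.

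Third, even granting the right choice of $\Phi$, one still needs the trace identity $\langle c_i,c_j\rangle=\delta_{ij}\tfrac{n+1}{2n+1}$ and the $S_1$-norm identity $\bigl\|\sum_i\alpha_i c_i\bigr\|_{S_1}=\|\alpha\|_{\ell_2}\tfrac{n+1}{\sqrt{2n+1}}$ (proved in the paper via $U^{\otimes(n+1)}c_i(U^{\otimes n})^{-1}=\sum_j U_{ji}c_j$ and unitary invariance of the trace norm), which together give $\norm{\Phi}_{S_\infty\to S_1}=\tfrac{(n+1)^2}{(2n+1)^2}$ and LHS $=\tfrac{(n+1)^2}{2n+1}$. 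None of these appear in your sketch, and they are exactly where the constant $\tfrac{2n+1}{n+1}$ comes from. In short: right hardware (fermionic creation operators), wrong Hilbert space, wrong $\Phi$, and a crucial algebraic identity misstated.
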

\begin{proof}
    Let $H~=~\C^{2n+1}$ and consider the antisymmetric $k$-fold tensor product $H^{\wedge k}$ which is a linear subspace of the $k$-fold tensor product $H^{\otimes k}$.
    A basis of $H^{\wedge k}$ is formed by vectors $e_{i_1} \wedge e_{i_2} \wedge \cdots \wedge e_{i_k}$ with $i_1 < \cdots < i_k$ where the $e_i$ are standard basis vectors of $H$. Here $\wedge$ is the wedge product or exterior product, which has the property $x\wedge y = - y \wedge x$ and is given by $x \wedge y = x\otimes y - y \otimes x$, for $x,y\in H$.
    We will consider $k=n$ and $k=n+1$ so that the dimension of $H^{\wedge k}$ is $d = \binom{2n+1}{n}$ for both $k=n$ and $k=n+1$.

    For $1\leq i \leq (2n+1)$, define $c_i : H^{\wedge n} \to H^{\wedge (n+1)}$ as $c_i (x) := e_i \wedge x$, which physicists call the fermionic creation operator. Its adjoint $c_i^* : H^{\wedge (n+1)} \to H^{\wedge n}$ is known as the annihilation operator. By the antisymmetric property, $c_i(x)=0$ whenever $e_i$ was present in $x$, i.e., when $x = e_i \wedge x'$. The operator $c_i c_i^*$, also known as the number operator, is a projector onto the space spanned by basis vectors in which $e_i$ is present. The operator $c_i^* c_i$ is a projector onto the space where $e_i$ is \emph{not} present. Since there are always $(n+1)$ vectors present in $H^{\wedge (n+1)}$ and $(n+1)$ vectors \emph{not} present in $H^{\wedge n}$, we have
    \begin{align*}
        \sum_{i=1}^{2n+1} c_i c_i^* = (n+1) \mathrm{Id}_{H^{\wedge (n+1)}}
        \quad\text{and}\quad
        \sum_{i=1}^{2n+1} c_i^* c_i = (n+1) \mathrm{Id}_{H^{\wedge n}} .
    \end{align*}
    We will now argue that
    \begin{align}
        \langle c_i, c_j \rangle := \frac{1}{d}\Tr( c_i^* c_j ) &= \delta_{i,j} \frac{n+1}{2n+1} , \label{eq:traceci} \\
        \norm{ \sum_{i=1}^{2n+1} \alpha_i c_i }_{S_1} &= \norm{\alpha}_{\ell_2} \frac{n+1}{\sqrt{2n+1}} \qquad \text{for } \alpha \in \C^{2n+1} . \label{eq:1normci}
    \end{align}
    The $\delta_{i,j}$ in~\eqref{eq:traceci} follows because $\langle x , c_i^* c_j x \rangle = 0$  for any $x = e_{k_1} \wedge \dots \wedge e_{k_n}$ when $i\neq j$.
    The factor $\frac{n+1}{2n+1}$ follows by taking the trace of one of the sums above and noting that by symmetry in $i$, every term of the sum must have the same trace.
    To prove~\eqref{eq:1normci}, first note that for any unitary $U \in \mathrm{U}(2n+1)$ we have
    \begin{align}
        U^{\otimes(n+1)} \cdot c_i \cdot (U^{\otimes n})^{-1} = \sum_{j} U_{ji} c_j \label{eq:ciconjugation} ,
    \end{align}
    which can be shown by proving it for all basis states:
    \begin{align*}
        U^{\otimes (n+1)} c_i (U^{\otimes  n})^{-1} (e_{k_1}\wedge ... \wedge e_{k_n})
        &= U^{\otimes(n+1)} c_i (U^{-1} e_{k_1}\wedge ... \wedge U^{-1} e_{k_n}) \\
        &= U^{\otimes(n+1)} (e_i \wedge U^{-1} e_{k_1}\wedge ... \wedge U^{-1} e_{k_n}) \\
        &= (U e_i \wedge e_{k_1}\wedge ... \wedge e_{k_n}) \\
        &= (\sum_{j} U_{ji} e_j \wedge e_{k_1}\wedge ... \wedge e_{k_n}) \\
        &= \sum_{j} U_{ji} c_j (e_{k_1}\wedge ... \wedge e_{k_n}) .
    \end{align*}
    The trace-norm is unitarily invariant, so~\eqref{eq:ciconjugation} implies $\norm{c_i}_{S_1} = \norm{\sum_j U_{ji} c_j}_{S_1}$. Since $c_i^* c_i$ is a projector, we have $\sqrt{c_i^* c_i} = c_i^* c_i$ and hence $\norm{c_i}_{S_1} = \frac{1}{d}\Tr(c_i^* c_i)$. Now let $\alpha \in \C^{2n+1}$ with $\sum_i |\alpha_i|^2 = 1$, then there is a unitary $U\in \mathrm{U}(2n+1)$ such that the $i$-th row of $U$ is $\alpha$. Note that $\norm{\alpha}_{\ell_2} = 1/\sqrt{2n+1}$ since we use normalized $\ell_2$-norms, which implies~\eqref{eq:1normci}.

    Since the dimensions of $H^{\wedge n}$ and $H^{\wedge (n+1)}$ are equal, we can identify the space of linear maps $L(H^{\wedge n}, H^{\wedge (n+1)})$ with $M_d(\C)$ (by choosing bases for $H^{\wedge n}$ and $H^{\wedge (n+1)}$), and define the following operator $\Phi : M_d(\C) \to M_d(\C)$,
    \begin{align*}
        \Phi(x) = \sum_{i=1}^{2n+1} \langle c_i , x \rangle \, c_i .
    \end{align*}
    Consider~\eqref{eq:ncgrothendieck} for $\Phi$ with $x_i = y_i = c_i$.
    For the left hand side, note that by~\eqref{eq:traceci} we have
    \begin{align*}
        \Big|\sum_{j=1}^{2n+1} \langle c_j , \Phi(c_j) \rangle \Big|
        = \Big| \sum_{i,j=1}^{2n+1} \langle c_i, c_j \rangle \, \langle c_j, c_i \rangle \Big|
        = \frac{(n+1)^2}{2n+1} .
    \end{align*}
    For the right-hand side of~\eqref{eq:ncgrothendieck}, we require $\norm{\Phi}_{S_\infty \to S_1} = \sup_{\norm{x}_{S_\infty} = 1} \norm{\Phi(x)}_{S_1}$.
    For any $x\in M_d(\C)$, define $v^{(x)} \in \C^{2n+1}$ as $v^{(x)}_i = \langle c_i, x\rangle$.
    Note that $\norm{v}_{\ell_2} = \sup_{\norm{\alpha}_{\ell_2}=1} |\langle v, \alpha\rangle|$.
    First apply~\eqref{eq:1normci} to obtain
    \begin{align*}
        \norm{\Phi(x)}_{S_1}
        = \norm{\sum_{i=1}^{2n+1} \langle c_i,  x\rangle c_i}_{S_1}
        = \norm{v^{(x)} }_{\ell_2} \frac{n+1}{\sqrt{2n+1}}
        = \sup_{\norm{\alpha}_{\ell_2}=1} |\langle v^{(x)}, \alpha \rangle| \frac{n+1}{\sqrt{2n+1}} .
    \end{align*}
    Using~\eqref{eq:1normci} again, we compute $\sup_{\norm{x}_{S_\infty}=1} |\langle v^{(x)}, \alpha \rangle|$ for arbitrary $\alpha$ with $\norm{\alpha}_{\ell_2}=1$,
    \begin{align*}
        \sup_{\norm{x}_{S_\infty} = 1} |\langle v^{(x)}, \alpha \rangle|
        = \sup_{\norm{x}_{S_\infty} = 1} \frac{1}{2n+1} \big| \langle x, \sum_i \alpha_i c_i \rangle \big|
        = \frac{1}{2n+1}\norm{ \sum_i \alpha_i c_i }_{S_1} = \frac{n+1}{(2n+1)\sqrt{2n+1}} .
    \end{align*}
    We obtain $\norm{\Phi}_{S_\infty\to S_1} = (n+1)^2/(2n+1)^2$. Now~\eqref{eq:ncgrothendieck} yields $\frac{(n+1)^2}{2n+1} \leq K_G' \frac{(n+1)^2}{(2n+1)^2} \cdot (n+1)$ and therefore $\frac{2n+1}{n+1} \leq K_G'$.
\end{proof}

We use the following fact from \cite[Theorem~19.14]{fulton-harris}, about the representations of the odd dimensional complex special orthogonal groups on wedge products of \emph{complex} vector spaces.

\begin{lemma}\label{lemma:antisymirreducible}
    Let $n, k \in \mathbb{N}$, $N:=2n+1$ and let $R_k : \mathrm{SO}(N,\C) \to \mathrm{GL}((\C^{N})^{\wedge k})$ be given by $A \mapsto A^{\otimes k}$. This representation is irreducible.
\end{lemma}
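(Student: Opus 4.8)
The plan is to reduce the claim to the classical fact that the fundamental representations of $\mathfrak{so}(N,\C)$ on the exterior powers $\Lambda^k(\C^N)$ are irreducible for $k \le n-1$, together with a separate argument handling the borderline cases $k = n$ and $k = n+1$ that actually occur in \Cref{lemma:pisierconstruction}. Concretely, I would invoke the representation theory of the simple Lie algebra $\mathfrak{so}(N,\C) = \mathfrak{b}_n$ (type $B_n$, since $N = 2n+1$): the exterior power $\Lambda^k V$ of the standard representation $V = \C^N$ has highest weight $\varpi_k = \epsilon_1 + \cdots + \epsilon_k$ for $k \le n$, which is the $k$-th fundamental weight, so $\Lambda^k V$ is irreducible for every $k \in \{0,1,\dots,n\}$; this is exactly the content of~\cite[Theorem~19.14]{fulton-harris}. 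For $k = n+1$ one uses the Hodge-star type isomorphism: since the standard form on $\C^N$ is nondegenerate, there is an $\mathrm{SO}(N,\C)$-equivariant isomorphism $\Lambda^{N-k}V \cong \Lambda^k V$ (contraction with the invariant volume form composed with raising indices via the bilinear form), and $N - (n+1) = n$, so $\Lambda^{n+1}V \cong \Lambda^n V$ is irreducible as well. That covers both $k = n$ and $k = n+1$.

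The steps, in order, are: (i) recall that $\mathrm{SO}(N,\C)$ is connected, so a subspace is $R_k$-invariant iff it is invariant under the Lie algebra action, reducing the statement to irreducibility of the $\mathfrak{so}(N,\C)$-module $\Lambda^k V$; (ii) identify the action of a Cartan subalgebra and compute the weights of $\Lambda^k V$, namely all sums $\epsilon_{i_1} + \cdots + \epsilon_{i_r}$ with distinct indices drawn from $\{\pm 1,\dots,\pm n,0\}$-style labels — the highest of these under the standard dominance order is $\epsilon_1 + \cdots + \epsilon_k$; (iii) verify that this highest weight generates all of $\Lambda^k V$ under the lowering operators, e.g.\ by checking that the Weyl group orbit of $\varpi_k$ together with the single orbit structure already exhausts a spanning set, or simply cite~\cite[Thm.~19.14]{fulton-harris} which states precisely that $\Lambda^k(\C^N)$ is the irreducible representation with highest weight $\varpi_k$ for $k \le n$; (iv) for the case $k = n+1$, exhibit the equivariant isomorphism $\Lambda^{n+1}V \cong \Lambda^{n}V$ via the nondegenerate symmetric form and conclude irreducibility from case (iii).

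I expect the main obstacle to be purely expository rather than mathematical: making sure the cited theorem from Fulton–Harris is quoted with the correct hypotheses (it is stated for $\mathrm{SO}(N,\C)$ with $N$ odd, and the restriction $k \le n$ is essential — for $N$ even the middle exterior power splits, and for $N$ odd the range $k > n$ must be dealt with by the duality isomorphism), and being careful that in \Cref{lemma:pisierconstruction} we only ever use $k = n$ and $k = n+1$, both of which are covered. A secondary point to get right is that the lemma as stated is about $\mathrm{SO}(N,\C)$ acting on \emph{complex} wedge powers; one should note that irreducibility over $\mathrm{SO}(N,\C)$ immediately implies irreducibility over the compact real form $\mathrm{SO}(N)$ (and hence over $\mathrm{SU}(N) \supset \mathrm{SO}(N)$, which is what is ultimately needed in \Cref{prop:opt_ceml}), since any $\mathrm{SO}(N)$-invariant subspace would be $\mathfrak{so}(N,\C)$-invariant after complexifying the Lie algebra action. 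With these caveats noted, the proof is essentially a citation plus the Hodge-duality observation.
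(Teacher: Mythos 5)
Your proposal is correct and takes essentially the same route as the paper: both rest on the citation to Fulton--Harris Theorem~19.14 for irreducibility of $\Lambda^k(\C^N)$ under $\mathfrak{so}_{2n+1}\C$, and your handling of $k>n$ via the equivariant duality $\Lambda^{N-k}V\cong\Lambda^k V$ is exactly what the paper packages separately as Proposition~\ref{prop:equivalence-of-irreps} (citing Simon). Your remark on passing from irreducibility over $\mathrm{SO}(N,\C)$ to the compact real form $\mathrm{SO}(N,\R)$ is likewise made in the paper, with the same reference to Fulton--Harris p.~439.
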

Below, we actually need that the \emph{real} special orthogonal group $\mathrm{SO}(N,\R)$ acts irreducibly on the same anti-symmetric space. 
Fortunately, this is implied by Lemma~\ref{lemma:antisymirreducible}; see~\cite[pp.~439]{fulton-harris}.
We will also use the fact that $R_k$ and $R_{N-k}$ are \emph{unitarily} equivalent to each other. 
This is the content of the following proposition~\cite[Proposition~IX.10.4]{simon}.

\begin{proposition}\label{prop:equivalence-of-irreps}
For positive integer~$n$ and $N = 2n+1$ and $k\in \{1,\dots,N\}$,
let~$R_k$ be the representation as in lemma~\ref{lemma:antisymirreducible}. 
Then, there exists an isometry $V_k\colon (\C^{N})^{\wedge k}\to (\C^{N})^{\wedge (N-k)}$ such that 
\begin{align*}
V_k R_k(A)=R_{N-k}(A) V_k,\quad \forall A\in \mathrm{SO}(N,\R).
\end{align*}
\end{proposition}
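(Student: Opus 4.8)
The plan is to exhibit the isometry $V_k$ explicitly as the natural ``Hodge-star''-type map coming from contracting against the canonical antisymmetric tensor of full degree~$N$. Concretely, fix the standard basis $e_1,\dots,e_N$ of $\C^N$ and let $\Omega = e_1\wedge e_2 \wedge \cdots \wedge e_N$ be the generator of the one-dimensional space $(\C^N)^{\wedge N}$. Wedge product gives a nondegenerate pairing $(\C^N)^{\wedge k}\times (\C^N)^{\wedge (N-k)} \to (\C^N)^{\wedge N} \cong \C$ sending $(\omega,\eta)\mapsto$ the coefficient of $\Omega$ in $\omega\wedge\eta$. Composing this pairing with the standard (conjugate-linear) inner product on $(\C^N)^{\wedge(N-k)}$ — for which the basis wedges $e_{i_1}\wedge\cdots\wedge e_{i_{N-k}}$ with $i_1<\cdots<i_{N-k}$ are orthonormal — defines a linear map $V_k \colon (\C^N)^{\wedge k}\to(\C^N)^{\wedge(N-k)}$. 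On basis elements, $V_k$ sends $e_I := e_{i_1}\wedge\cdots\wedge e_{i_k}$ (with $I$ an increasing $k$-tuple) to $\pm\, e_{I^c}$, where $I^c$ is the complementary increasing $(N-k)$-tuple and the sign is the signature of the permutation merging $I$ and $I^c$ into $(1,\dots,N)$. First I would record this description and observe that, since $V_k$ permutes an orthonormal basis up to signs, it is an isometry (in fact unitary, as $k$ and $N-k$ spaces have equal dimension $\binom{N}{k}$).

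Next I would verify the intertwining relation $V_k R_k(A) = R_{N-k}(A) V_k$ for $A\in\mathrm{SO}(N,\R)$. The key facts are: (i) $A^{\otimes N}$ acts on $(\C^N)^{\wedge N}$ as multiplication by $\det A = 1$, so $A$ preserves $\Omega$ and hence the wedge pairing $(\omega,\eta)\mapsto \omega\wedge\eta$ is $\mathrm{SL}(N)$-invariant, in particular $\mathrm{SO}(N,\C)$-invariant; and (ii) for $A\in\mathrm{SO}(N,\R)\subseteq\mathrm{SO}(N,\C)$, the matrix $A$ is real, so the conjugate-linear inner product on the wedge space satisfies $\langle A\omega, A\eta\rangle = \langle \omega,\eta\rangle$ because $\bar A = A$ and $A^TA = \id$. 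Combining (i) and (ii): for all $\omega\in(\C^N)^{\wedge k}$ and all $\eta\in(\C^N)^{\wedge(N-k)}$,
\begin{align*}
\langle \eta, V_k R_k(A)\omega\rangle
&= \big(R_k(A)\omega\big)\wedge \overline{\eta}^{\,\sharp}
= \big(A^{\otimes k}\omega\big)\wedge \big(A^{\otimes(N-k)}(A^{-1})^{\otimes(N-k)}\overline{\eta}^{\,\sharp}\big)\\
&= A^{\otimes N}\!\left(\omega \wedge (A^{-1})^{\otimes(N-k)}\overline{\eta}^{\,\sharp}\right)
= \omega \wedge \overline{\big(R_{N-k}(A)^{-1}\eta\big)}^{\,\sharp}
= \langle R_{N-k}(A)^{-1}\eta, V_k\omega\rangle,
\end{align*}
where I have written $\overline{(\cdot)}^{\,\sharp}$ loosely for the conjugate-linear identification used to turn the wedge pairing into the inner product; since $A$ is real this bar passes through $A$ unchanged. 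Rearranging gives $\langle \eta, V_k R_k(A)\omega\rangle = \langle \eta, R_{N-k}(A) V_k\omega\rangle$ for all $\eta$, hence $V_k R_k(A) = R_{N-k}(A) V_k$. In practice I would do this computation cleanly on basis elements $e_I$ rather than with the informal $\sharp$ notation, checking that the complementation-with-sign map intertwines the two actions using only $\det A = 1$ and reality of $A$.

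An alternative, more representation-theoretic route — which I would mention as a remark — is to invoke Lemma~\ref{lemma:antisymirreducible}: both $R_k$ and $R_{N-k}$ are irreducible representations of $\mathrm{SO}(N,\R)$ of the same dimension, and one checks they have the same character (or highest weight), hence are equivalent; Schur's lemma then gives an intertwiner, unique up to scalar, which can be normalized to an isometry since both spaces carry invariant inner products. However, I expect the explicit Hodge-duality construction to be cleaner and to make the isometry property transparent, so that is the main line I would pursue. The main obstacle is purely bookkeeping: getting the signs in the complementation map right and making sure the conjugate-linearity of the inner product is handled consistently (this is exactly where reality of $A$, and hence the restriction from $\mathrm{SO}(N,\C)$ to $\mathrm{SO}(N,\R)$, is essential — over $\mathrm{SO}(N,\C)$ the map $V_k$ would intertwine $R_k$ with the dual of $R_{N-k}$ rather than with $R_{N-k}$ itself). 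No deep input beyond $\det = 1$ and Lemma~\ref{lemma:antisymirreducible} is needed.
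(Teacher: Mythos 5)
The paper does not prove Proposition~\ref{prop:equivalence-of-irreps} at all: it simply cites it as Proposition~IX.10.4 of Simon's book on representations of compact groups. So your Hodge-star construction is a genuine, self-contained alternative, and the two ingredients you isolate are exactly the right ones: $\det A=1$ makes the wedge pairing $\Lambda^k\times\Lambda^{N-k}\to\Lambda^N$ invariant under $\mathrm{SO}(N)$, and reality plus orthogonality of $A$ makes the Hermitian inner product on the wedge spaces invariant and lets you commute complex conjugation past $A$, which is what upgrades an intertwiner with the contragredient to an intertwiner with $R_{N-k}$ itself. This buys you an explicit unitary $V_k(e_I)=\pm e_{I^c}$, so the isometry claim is visible by inspection rather than obtained by normalizing a Schur intertwiner.

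One small wrinkle worth fixing before you write this up: as stated, ``composing the wedge pairing with the conjugate-linear inner product'' produces a \emph{conjugate-linear} map $\Lambda^k\to(\Lambda^{N-k})^{\vee}\to\Lambda^{N-k}$, not the linear map you want. What you actually use (and what your basis formula $e_I\mapsto\mathrm{sgn}(I,I^c)\,e_{I^c}$ defines) is the linear map characterized by $\langle\eta,V_k\omega\rangle=[\text{coeff.\ of }\Omega\text{ in }\omega\wedge\bar\eta]$, where $\bar\eta$ is the entrywise conjugate of $\eta$ in the standard real wedge basis; the two maps agree on the real span of the $e_I$ but differ by a conjugation in general. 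This is precisely the point where the reality of $A$ enters, so tracking the conjugation explicitly (as you do in the basis computation, but not in the prose) is worth doing. With that cleaned up, the argument is correct, and your closing remark about the character/highest-weight route is closer in spirit to the argument in the reference the paper cites.
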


\begin{proof}[Proof of Proposition~\ref{prop:opt_ceml}]
Let $d$ be the dimension of $(\C^{N})^{\wedge n}$ and let~$\Phi:M_d(\C)\to M_d(\C)$ be as in the proof of Lemma~\ref{lemma:pisierconstruction}.
    For each~$k\in \N$, let $R_k : \mathrm{SO}(N,\R) \to \mathrm{GL}(H^{\wedge k})$ be the representation $A\mapsto A^{\otimes k}$, which is irreducible by \Cref{lemma:antisymirreducible}. Define, for notational convenience, $\pi:=R_{n+1}$ and $\rho:=R_n$.
    We first show that for all $A\in \mathrm{SO}(N,\R)$, we have
    \begin{align}\label{relation-Phi}
        \Phi( \pi(A) x \rho^*(A) ) = \pi(A) \; \Phi(x) \; \rho^*(A) .
    \end{align}
    For the left-hand side, note that
    \begin{align*}
    	\Phi( \pi(A) x \rho^*(A) ) &=
		\sum_i\big\langle c_i, \pi(A) x\rho^*(A)\big\rangle\, c_i\\
        &= \sum_i\big\langle \pi(A)^*c_i\rho(A), x\rangle\, c_i\\ 
        &= \sum_i\Big\langle \sum_j A_{ij}c_j,x\Big\rangle\, c_i \\ 
        &= \sum_{ij} A_{ij}\langle c_j,x\rangle\, c_i, 
    \end{align*}
    where we used~\eqref{eq:ciconjugation} from the proof of Lemma~\ref{lemma:pisierconstruction} and noting that $\mathrm{SO}(N,\R)\subset \mathrm{U}(N)$ is a subgroup.
    Using~\eqref{eq:ciconjugation} again for the right-hand side, we have
    \begin{align*}
        \pi(A) \; \Phi(x) \; \rho^*(A)
        &= 
        \sum_i \langle c_i, x\rangle\, \pi(A)c_i\rho^*(A)\\
        &=
        \sum_{i} \langle c_i,x\rangle\, \sum_j A_{ji} c_j\\
        &=
        \sum_{ij} A_{ij} \langle c_j, x\rangle\, c_i.
    \end{align*}
    which proves~\eqref{relation-Phi}.
    
     Define a new superoperator $\Phi'\colon M_d(\C)\to M_d(\C)$ by
\begin{align*}
\Phi'(x)= \Phi(xV^*)V,
\end{align*}
where $V:=V_{n+1}$ is the isometry as in Proposition~\ref{prop:equivalence-of-irreps} (we view~$V$ as a matrix in $M_d(\C)$ by choosing basis).
We first note that this~$\Phi'$ might also be used in Lemma~\ref{lemma:pisierconstruction} to show that the non-commutative Grothendieck constant is 2, since Schatten-norms are unitarily invariant. Hence, if we show that $\Phi'$ is irreducibly covariant, we are done. This follows from the following computation, where we use~\eqref{relation-Phi} and the fact that $V\pi(A)=\rho(A)V$ for all $A\in \mathrm{SO}(N,\R)$:
    \begin{align*}
    \Phi'\big(\pi(A)x\pi(A)^*\big)
    &=
    \Phi\big(\pi(A)x\pi(A)^*V^*\big)V\\
    &=
    \Phi\big(\pi(A)xV^*\rho(A)^*\big)V\\
    &\stackrel{\eqref{relation-Phi}}{=}
    \pi(A)\; \Phi(xV^*)\; \rho(A)^*V\\
    &= \pi(A)\; \Phi(xV^*)\; V\pi(A)^*\\
    &= \pi(A)\; \Phi'(x)\; \pi^*(A),
    \end{align*}
    where the second-last line follows since $\rho(A)^* = V\pi(A)^*V^*$.
    Hence, $\Phi'$ is irreducibly covariant with respect to the irreducible representation $\pi$ of $\mathrm{SO}(N,\R)$.
\end{proof}

\bibliographystyle{plainnat}
\bibliography{bibliography}

\end{document}